\def\InputAlphabet{\mathcal{X}}
\def\StateAlphabet{\mathcal{S}}
\def\OutputAlphabet{\mathcal{Y}}
\def\define{:{=}~}
\def\hata{{\hat{a}}}
\def\hatm{{\hat{m}}}
\def\hatv{{\hat{v}}}
\def\hatnu{{\hat{\nu}}}
\def\boldx{{\boldsymbol{x}}}
\def\boldm{{\boldsymbol{m}}}
\def\boldX{{\boldsymbol{X}}}
\def\bolds{{\boldsymbol{s}}}
\def\boldu{{\boldsymbol{u}}}
\def\boldv{{\boldsymbol{v}}}
\def\boldS{{\boldsymbol{S}}}
\def\boldU{{\boldsymbol{U}}}
\def\boldV{{\boldsymbol{V}}}
\def\boldSi{\boldsymbol{S_{i}}}
\def\boldSi{\boldsymbol{S_{i}}}
\def\boldXt{\boldsymbol{X_{t}}}
\def\boldSt{\boldsymbol{S_{t}}}
\def\bolde{\mathbf{e}}
\def\boldR{\mathbf{R}}
\def\boldTau{\boldsymbol{\tau}}
\def\naturals{\mathbb{N}}
\def\integers{\mathbb{Z}}
\def\reals{\mathbb{R}}
\def\fieldq{\mathcal{F}_{q}}
\def\fieldpi{\mathcal{F}_{\pi}}
\def\setU{\mathcal{U}}
\def\setX{\mathcal{X}}
\def\setS{\mathcal{S}}
\def\StateAlphabet{\setS}
\def\setY{\mathcal{Y}}
\def\setM{\mathcal{M}}
\def\setV{\mathcal{V}}
\def\prob{\text{Pr}}
\def\cl{\mbox{cl}}
\def\cocl{\mbox{cocl}}
\font\Bigmath=cmsy10 scaled \magstep2
\font\bigmath=cmsy10 scaled \magstep1
\def\diamondplusminusone{\mathrel{%
  \ooalign{\raise.29ex\hbox{$\scriptscriptstyle+$}\cr\hss$\diamond$\hss}}}
\def\diamondplus{\mathrel{%
  \ooalign{$+$\cr\hss\lower.255ex\hbox{\Bigmath\char5}\hss}}}
\def\diamondplusplusone{\mathrel{%
  \ooalign{$\scriptstyle+$\cr\hss\lower.29ex\hbox{\bigmath\char5}\hss}}}
\newcommand{\msout}[1]{\text{\sout{\ensuremath{#1}}}}
\def\setX{\mathcal{X}}
\def\setS{\mathcal{S}}
\def\setY{\mathcal{Y}}
\newcommand{\comment}[1]{}
\begin{document}

\newtheorem{remark}{\it Remark}
\newtheorem{thm}{Theorem}
\newtheorem{definition}{Definition}
\newtheorem{lemma}{Lemma}
\newtheorem{example}{Example}

\title{Achievable rate region based on coset codes for multiple access channel with
states}

\author {Arun~Padakandla and S. Sandeep~Pradhan,
~\IEEEmembership{Member,~IEEE}%
\thanks{Arun Padakandla and S. Sandeep Pradhan are with the Department of Electrical
and Computer Engineering, University of Michigan, Ann Arbor
48109-2122, USA.}
\thanks{This work was supported by NSF grant CCF-1116021.}}
\maketitle
% \begin{comment}{
\begin{abstract}
We prove that the ensemble the nested coset codes built on finite fields achieves the
capacity of arbitrary discrete memoryless point-to-point channels. Exploiting it's
algebraic structure, we develop a coding technique for communication over general discrete
multiple access channel with channel state information distributed at the transmitters. We
build an algebraic coding framework for this problem using the ensemble of Abelian group
codes and thereby derive a new achievable rate region. We identify non-additive and
non-symmteric examples for which the proposed achievable rate region is strictly larger
than the one achievable using random unstructured codes.
\end{abstract}
\section{Introduction}
\label{Sec:Introduction}
The most common technique of proving achievability of rate regions in information theory
is random
coding\footnote{The other known techniques are based on Feinstein's lemma
\cite{195409TIT_Fei} and
graph decomposition \cite{198101TIT_CsiKor}.}. Traditionally, the distribution induced on
the ensemble of
codes is such that
individual codewords are mutually independent. Furthermore, in communication models with
multiple terminals, codebooks associated with these terminals are mutually independent of each other. Such
an analysis has proved sufficient for single user and particular multi-terminal
communication problems.\footnote{However, characterization of optimal performance in
many multi-terminal communication problems such as distributed source coding,
interference channel, broadcast channel, multiple description coding remain open.}

The problem of distributed reconstruction of modulo-2 sum of binary correlated sources studied by
K\"orner and Marton \cite{197903TIT_KorMar} proved to be the first exception. As against
to partitioning the source codes independently and uniformly into bins, they propose
partitioning using cosets of a common linear code, thus building dependency across
codebooks and codewords. Crucially exploiting the property of closure under addition of
these cosets, they analyze a coding technique that strictly outperforms the best known
strategy based on independent unstructured codes. Recently, a similar phenomenon has been
identified by Philosof and Zamir \cite{200906TIT_PhiZam} for a particular example of
multiple access channel with state information distributed at the transmitters (MAC-DSTx).
Restricting their attention to a binary symmetric noiseless additive doubly dirty MAC-DSTx
(BDD-MAC), they propose a partition of the two channel codes into bins using cosets of a
common linear code.\footnote{Recall that communicating over a channel with state
information at transmitter involves binning of the codebooks of the two transmitters
\cite{1980MMPCT_GelPin}.} They propose a coding 
technique, henceforth referred to as PZ-technique, that achieves the capacity of BDD-MAC
and thereby prove strict sub-optimality of the best known coding technique based on
independent unstructured codes. This is in contrast to point-to-point channels with state
information at the transmitter (PTP-STx) where unstructured codes achieve the
capacity \cite{1980MMPCT_GelPin}.

Nevertheless ingenious, PZ-technique \cite{200906TIT_PhiZam} is very specific to the additive and symmetric nature of the BDD-MAC studied therein. This technique being strictly more efficient than the currently known best strategy based on independent unstructured codes raises the following question. Is there a general coding framework for communicating over an \textit{arbitrary} discrete MAC-DSTx, that reduces to the PZ-technique for the BDD-MAC, and that would yield an achievable rate region strictly larger than the best known achievable rate region using unstructured independent codes even for non-additive and non-symmetric MAC-DSTx?	

In this article, we propose an algebraic framework for communication over an arbitrary
MAC-DSTx and thereby answer the above questions in the affirmative. Our first step is to
generalize the ensemble of linear codes employed in \cite{200906TIT_PhiZam}. It is well
known that linear codes do not achieve the capacity of point-to-point channels with or
without state information available at the transmitter. They achieve only the mutual
information of the channel with uniform input distribution. We propose, in section
\ref{Sec:NestedCosetPTP-STxCodes}, an ensemble of codes, called \textit{nested coset
codes}, composed of two linear codes with one of them being a subset of the other, and
prove in section
\ref{SubSec:NestedCosetCodesAchieveCapacityOfArbitraryPointToPointChannels},
that they achieve the capacity of arbitrary PTP-STx, which is the first main result of
this article. Using nested coset codes we are able to induce non-uniform single-letter
distributions on the input alphabet while retaining much of useful algebraic structure and
thus match any input distribution to the channel.\footnote{The technique of Gallager
\cite{Gal-ITRC68} involving a non linear mapping preceded by a linear code does not
preserve the algebraic structure of the code.} Achieving the capacity of arbitrary PTP-STx
relies
on employing joint typical encoding and decoding. The foundation of our framework is
therefore a study of codes endowed with an algebraic structure, nested coset codes in this
case, using typical set encoding and decoding.

We present our coding scheme for MAC-DSTx in three pedagogical stages. We begin by
identifying two key elements of PZ-technique 1) decoding mod$-2$ sum, instead of the pair
of codewords chosen by the two transmitters and 2) choosing the bins of each user's code
to be cosets of a common linear code to enable containment of the range of this mod$-2$
sum. The first stage, presented in section
\ref{SubSec:AnAchievableRateRegionForArbitraryMAC-DSTxUsingNestedCosetCodes}, captures all
of the nontrivial elements of our framework in it's simplest setting. In this stage we
employ nested coset codes built on finite fields, to decode the sum of codewords. The
analysis of this technique enables us to derive a new achievable rate region for MAC-DSTx.
The key elements of the first stage are (i) the use of nested coset codes to induce
non-uniform input distributions, (ii) the use of joint typical encoding and decoding that
enables us to analyze the probability of error over an \textit{arbitrary} MAC-DSTx that is
not constrained to be additive or symmetric, and (iii) an analysis of decoding of the sum
of the pair of transmitted codewords chosen from two dependent codebooks. Indeed, the
analysis of joint typical encoding and decoding of correlated codebooks with statistically
dependent codewords involves several new elements. The reader is encouraged to peruse
these in the proof of theorem \ref{Thm:AchievableRateRegionUsingNestedCosetCodes}.

The significance of the rate region proved achievable in the first stage is illustrated
through examples in section \ref{SubSec:Examples}.\footnote{The coding technique proposed
in the first stage reduces to that proposed in \cite{200906TIT_PhiZam} for BDD-MAC and
moreover Philosof and Zamir have proved strict sub-optimality of unstructured independent
coding for BDD-MAC. This in itself establishes significance of theorem
\ref{Thm:AchievableRateRegionUsingNestedCosetCodes}. Notwithstanding this, it is easy to
argue significance of our generalization by appealing to continuity. An additive channel
can be perturbed slightly to result in a non-additive channel for which the technique
proposed in \cite{200906TIT_PhiZam} may not be applicable as is. By continuity of the rate
regions as a function of the channel parameters, one can see why the proposed coding
scheme must perform strictly better than unstructured independent coding. Example
\ref{Ex:Example5} presented in section \ref{SubSec:Examples} corroborates
this.} In particular, we provide an example for which it is
necessary to induce non-uniform input distributions and is more efficient to decode the
sum of transmitted codewords. We also randomly perturb the BDD-MAC and demonstrate that
coding framework proposed herein can outperform unstructured independent codes. The
channels being non-additive, it is significantly harder to provide analytical comparisons,
and hence we resort to direct computation of rate regions achievable using unstructured
independent and nested coset codes. These examples illustrate that structured-code based
strategies do not hinge on the channel being additive but would benefit as long as the
optimizing test channel from the auxiliary inputs to the channel output is not far from
additive.

Does the rate region proved achievable using nested coset codes subsume the largest known
achievable rate region using unstructured independent codes? It is our belief that
strategies based on structured codes are not in lieu of their counterparts based on
unstructured codes. In most cases, structured codes enable efficient decoding of a
`compressive'\footnote{$f(U_{1},U_{2})$ is `compressive' if $H(f(U_{1},U_{2}))$ is
significantly lower than $H(U_{1},U_{2})$.} function of the two codewords.
%The penalty paid in rate for imposing structure in the code is compensated for by the compressive nature of the function. 
However, for decoding both the codewords, it turns out the strategy of using a common
linear code to effect partition of the two codebooks is not optimal, instead one has to
employ two independent linear codes. The rate region achieved using the latter strategy is
equivalent to that achieved using unstructured independent codes.\footnote{Indeed, for the
problem of distributed reconstruction of modulo$-2$ sum of binary sources, K\"orner Marton
strategy \cite{197903TIT_KorMar} based on common linear codes is outperformed by
Slepian-Wolf \cite{197307TIT_SleWol} strategy (or equivalently the strategy of Csisz\'ar
based on independent linear codes \cite{198208TIT_Csi}.) for the class of source
distributions for which the modulo$-2$ sum is not sufficiently compressive.
 More precisely, if $H(X \oplus Y) > \frac{H(X,Y)}{2}$, then it is better to reconstruct $X \oplus Y$ using the technique of Slepian-Wolf or Csisz\'ar.} This leads us to the second stage of our 
coding scheme which is presented in section \ref{Sec:AUnifiedAchievableRateRegion}.
Following the approach of Ahlswede and Han \cite[Section VI]{198305TIT_AhlHan}, we glue
together structured and unstructured coding techniques to derive an achievable rate region for communicating over a MAC-DSTx that combines structured and
unstructured coding techniques. We present an example to
illustrate how the gluing of unstructured and structured coding techniques can yield a
rate region larger than either one, and their union. We remark that in spite of our
inability to compute the achievable rate region proposed in section
\ref{Sec:AUnifiedAchievableRateRegion}, we are able to demonstrate the significance of
the same through an example.

If the channel is far from additive, it may not be efficient to decode the sum, with
respect to a finite field, of codewords. For example, if the MAC-DSTx is doubly dirty with
field addition replaced by addition of an Abelian group, referred to as group
addition or group sum, then it is natural to decode group sum of codewords. In other
words, the technique of decoding sum of codewords must be generalized to decoding any
arbitrary bivariate function of the auxiliary inputs. In the third stage of our coding
scheme, presented in section \ref{Sec:EnlargingAchievableRateRegionUsingCodesOverGroups},
we consider decoding the group sum of the codewords. Specifically, codebooks are built
over Abelian group alphabets and each encoder is provided with codebooks that possess a
certain group structure. Analogous to the
first stage, we propose joint typical encoding and decoding of group codes. Though
essential elements of this analysis are similar 
to that of decoding sum of codewords chosen from nested coset codes over an
\textit{arbitrary} MAC-DSTx, the algebraic structure of a Abelian group being looser,
leads to several new elements.\begin{comment}{ Firstly, codes over Abelian groups that do
not possess a
field structure do not achieve even the symmetric capacity of arbitrary PTP-STx. This
translates to an analogous rate loss when employed for MAC-DSTx. Secondly, it can be noted
that we only exploit the algebraic structure of the bins but not of the entire code. To
minimize the loss in the achievable rate, we remove the constraint of the entire code
being a coset of an Abelian group code. Instead, this constraint is only enforced for the
constituent bins. In other words, the code is obtained as a union of a random cosets of an
Abelian group code.}\end{comment}

The importance of (i) decoding an appropriate bivariate function of codewords, and (ii)
endowing codebooks with the appropriate algebraic structure is illustrated through an
example discussed in section \ref{Sec:EnlargingAchievableRateRegionUsingCodesOverGroups}.
Specifically, we indicate using numerical computation that for a quaternary doubly dirty
MAC-DSTx (QDD-MAC) wherein the
operation is mod$-4$ addition, decoding mod$-4$ sum, which is the group operation in the
quaternary alphabet, of the codewords strictly outperforms both independent unstructured
and
nested coset codes based strategies. In fact, significant gains for this problem are
achievable using Abelian group codes. The reader is encouraged to peruse details in
section \ref{Sec:EnlargingAchievableRateRegionUsingCodesOverGroups}.

Several findings in the context of multi-terminal communication problems point to
efficient strategies based on structured codes. Nazer and Gastpar \cite{200710TIT_NazGas}
propose a strategy based on linear codes for computing the sum of sources over additive
multiple access channels that outperforms earlier known strategies. Building on this technique, we develop a framework for computing sum of sources over an arbitrary multiple access channel in \cite{201307ISIT_PadPraComputation}. Sridharan et. al.
\cite{200809Allerton_SriJafVisJafSha} propose a coding technique based on lattices for
communicating over a $K-$user Gaussian interference channel ($K\geq 3$) that outperforms a
natural extension of Han-Kobayashi technique \cite{198101TIT_HanKob} under the Gaussian
input distribution. We propose an analogous coding technique based on nested linear codes
\cite{201207ISIT_PadSahPra} for the general discrete $3-$user interference channel and
identify an example for which the proposed technique outperforms the natural extension of
Han-Kobayashi technique \cite{198101TIT_HanKob}. Krithivasan and Pradhan
\cite{201103TIT_KriPra} propose a framework based on structured codes for the distributed
source coding problem that outperforms the best known strategy based on unstructured
independent codes due to Berger and Tung \cite{Berger-MSC}. We have employed the same
ensemble of nested
coset codes to strictly enlarge the largest known achievable rate region\footnote{This is
obtained by a natural extension of Marton's \cite{197905TIT_Mar} coding technique proposed
for $2-$user
broadcast channel.}
for the general $3-$user discrete broadcast channel in \cite{201207arXiv_PadPra}.

We summarize by stating the significance of our contribution. Nested coset codes is
currently the only ensemble of codes possessing an algebraic structure that has been
proven to be optimal for general point-to-point channels. We employ the same to derive the
largest known achievable rate region for a general discrete MAC-DSTx. Perhaps more
importantly, our findings hint at a general theory of structured codes. Thus far, linear
and nested linear codes have been employed to derive communication strategies for
particular
additive source and channel coding problems that outperform the best known
techniques based on independent unstructured codes. Our
findings indicate that strategies based on structured codes can be employed to analyze
more intelligent encoding and decoding techniques for an arbitrary multi-terminal
communication problem. This opens up the possibility of exploiting new degrees of freedom
to enlarge achievable rate regions for many multi-terminal communication problems that
have resisted a solution.

We begin with remarks on notation and state the form
of typicality employed herein.
% }\end{comment}
\section{Preliminaries: Notation and Typicality}
\label{Sec:PreliminariesNotationAndTypicality}
\subsection{Notation}
\label{SubSec:Notation}
We employ notation that is now widely employed in information theory literature supplemented by the following.
\begin{itemize}
 \item We let $\naturals, \reals$ denote the set of natural
numbers and real numbers respectively. Calligraphic letters such as $\setX$, $\setY$ are employed exclusively to denote
finite sets. $\fieldq$ denotes the finite field of cardinality $q$. For any set $A$, $\cl\left(A\right),\cocl\left(A\right)$ denote closure of $A$ and closure of the convex hull of $A$ respectively. If $A$ is a finite set, $\left| A \right|$ denotes cardinality of $A$.
\item For positive integers $i\leq j$, $[i:j] \define \left\{ i,i+1,\cdots,j \right\}$.
We let $[j] \define [1:j]$.
\item While $+$ denotes addition in $\reals$, we let $\oplus$ denote addition in a finite
field. The particular finite field, which is uniquely determined
(up to an isomorphism) by
it's cardinality, is clear from context. When ambiguous, or to enhance clarity, we
specify addition in $\fieldq$ using $\oplus_{q}$. For $a,b \in \fieldq$, $a \ominus b
\define a \oplus (-b)$, where $(-b)$ is the additive inverse of $b$.
\item If $f: \setU \rightarrow \setX$ is a map, the $n$-letter
extension of $f$ denoted $f^{n} : \setU^{n}
\rightarrow \setX^{n}$ is defined $f^{n}\left(u^{n}\right) := \left(
f\left(u_{i}\right) : i=1,2,\cdots,n\right)$.
\item We employ the standard notation for probability mass functions (pmf). For example, if
$p_{UXSY}$ is a pmf on $\setU \times
\setX \times \StateAlphabet \times \setY$, then $p_{UY}$ is the corresponding
marginal on $\setU \times \setY$. $p_{UY}^{n}$ is the pmf
on $\setU^{n}\times\setY^{n}$ obtained as an $n-$fold product of $p_{UY}$ i.e.,
$p_{UY}^{n}(u^{n},y^{n})= \prod_{i=1}^{n}p_{UY}(u_{i},y_{i})$. We write $U \sim p_{U}$ if $p_{U}$ is the pmf of $U$.
\item The $\log$ and $\exp$ functions are taken with respect to base $2$.
\item For $a \in \naturals$, $\pi(a) \define \min \left\{ k \in \naturals: k \geq a, k
\mbox{ is a prime power}\right\}$.
\item For a pmf $p_{UXSY}$ defined on $\setU \times \setX \times \StateAlphabet \times \setY$, let
\[\mathcal{R}(p_{UXSY},U) \define \left\{ u \in \setU :  \exists
(x,s,y) \in \setX \times \StateAlphabet \times \setY  : p_{UXSY}(u,x,s,y) >
0\right\}\]
denote the essential range of $U$. When clear from context, we omit the
underlying pmf and let $\mathcal{R}(U)$ denote $\mathcal{R}(p_{UXSY},U)$.
\end{itemize}
\subsection{Typicality}
\label{SubSec:Typicality}
We adopt a slightly modified form \cite{200801LN-TMC} of the notion of robust typicality
as proposed by Orlitsky and Roche \cite{200103TIT_OrlRoc} and adopted in
\cite{201201NIT_ElgKim}. In the sequel, we provide definitions and state the results
employed in this article, in their simplest form. Since the following results have been
well documented in books such as \cite{CK-IT2011}, \cite{201201NIT_ElgKim},
\cite{2006EIT_CovTho} among others, we omit proofs, and allude to one of the above
references for the same.

Let $\setX_{1}$, $\setX_{2}$ be finite sets and $X \define (X_{1},X_{2})$ be a pair of
random variables taking values in $\setX \define \setX_{1} \times \setX_{2}$ with pmf
$p_{X} \define p_{X_{1}X_{2}}$. Let $X^{n} \define (X_{1}^{n}, X_{2}^{n})$ be $n$
independent and identically distributed copies of $X$. For a pair $a=(a_{1},a_{2}) \in
\setX$, and an $n-$tuple $x^{n} \define (x_{1}^{n},x_{2}^{n}) \in \setX^{n}$, let
$N(a|x^{n}) = \sum_{i=1}^{n} 1_{\left\{(x_{1i},x_{2i})=a\right\}}$ be the number of
occurrences of $a$ in $x^{n}$. Lastly, for $j \in \left\{ 1,2\right\}$, let
$\msout{j} \in \left\{ 1,2\right\} \setminus \left\{ j \right\}$ denote the element in it's complement. We are
now set to define typical set. For any $\delta > 0$, let
\begin{equation}
 \label{Eqn:TypicalSet}
T_{\delta} \define \left\{   x^{n}
\in \setX^{n}: \left| \frac{N(a\big|x^n)}{n} - p_{X}(a)
\right| \leq \frac{\delta p_{X}(a)}{\log |\setX|} \mbox{ for all }
a \in \setX \right\}\nonumber
\end{equation}
be the typical set on $\setX$ with respect to pmf $p_{X}$ and parameter $\delta
> 0$. For $j=1,2$, the projection
\begin{equation}
 \label{Eqn:ProjectedTypicalSet}
T_{\delta}(X_{j}) \define \left\{   x_{j}^{n}
\in \setX_{j}^{n}: \text{ there exists } x_{\msout{j}}^{n} \in
\setX_{\msout{j}}^{n} \text{ such that } (x_{1}^{n},x_{2}^{n}) \in T_{\delta}
\right\}\nonumber
\end{equation}
is the typical set on $\setX_{j}$ with respect to pmf $p_{X}$ and parameter
$\delta > 0$. For $j=1,2$ and any $x_{\msout{j}}^{n} \in \setX_{\msout{j}}^{n}$,
\begin{equation}
 \label{Eqn:ConditionalTypicalSet}
 T_{\delta}(X_{j}|x_{\msout{j}}^{n}) \define \left\{   x_{j}^{n}
\in \setX_{j}^{n} \text{ such that } (x_{1}^{n},x_{2}^{n}) \in T_{\delta}
\right\}\nonumber
\end{equation}
is the typical set on $\setX_{j}$ conditioned on $x_{\msout{j}}^{n}$ with respect to
distribution $p_{X}$ and parameter
$\delta > 0$. Before we state the basic results, the following remark is worth noting.

\begin{remark}
 \label{Rem:TheProbabilityOfALetterIs0ImplieItDoesNotAppearInATypicalSequence}
 If for any $a \in \setX$, $p_{X}(a) =0$, and $x^{n} \in T_{\delta}$, then
$N(a|x^{n}) = 0$.
\end{remark}

\begin{lemma}
 \label{Lem:BoundsOnProbabilityOfTypicalSequence}
If $x^{n} \in T_{\delta}$, then for every $n \in \naturals$, $|\frac{1}{n}\log p_{X^{n}}(x^{n})+H(X)| \leq \delta$, $|\frac{1}{n}\log p_{X_{j}^{n}}(x_{j}^{n})+H(X_{j})| \leq \delta$ for $j \in [2]$ and therefore $|\frac{1}{n}\log p_{X_{j}^{n}|X_{\msout{j}}^{n}}(x_{j}^{n}|x_{\msout{j}}^{n})+H(X_{j}|X_{\msout{j}})| \leq 2\delta$.
\end{lemma}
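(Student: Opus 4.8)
The plan is to derive all three inequalities from one computation: for $x^{n}\in T_{\delta}$, expressing $\frac1n\log p_{X^{n}}(x^{n})$ as a count-weighted sum of single-letter log-probabilities and controlling it via the defining inequality of $T_{\delta}$. Since $p_{X^{n}}(x^{n})=\prod_{i=1}^{n}p_{X}(x_{i})$, grouping the factors by the symbol they equal gives $\frac1n\log p_{X^{n}}(x^{n})=\sum_{a\in\setX}\frac{N(a|x^{n})}{n}\log p_{X}(a)$, where any term with $p_{X}(a)=0$ is read as $0$ because Remark~\ref{Rem:TheProbabilityOfALetterIs0ImplieItDoesNotAppearInATypicalSequence} forces $N(a|x^{n})=0$ for such $a$. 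Writing $\frac{N(a|x^{n})}{n}=p_{X}(a)+\epsilon_{a}$ with $|\epsilon_{a}|\le \frac{\delta p_{X}(a)}{\log|\setX|}$, this equals $-H(X)+\sum_{a}\epsilon_{a}\log p_{X}(a)$, and the triangle inequality bounds the deviation by $\frac{\delta}{\log|\setX|}\sum_{a}p_{X}(a)\,|\log p_{X}(a)|=\frac{\delta\,H(X)}{\log|\setX|}\le\delta$, using $H(X)\le\log|\setX|$. This is the first asserted bound.

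For the marginal bounds I would first note that the count $N(a_{j}|x_{j}^{n})$ equals the sum, over pairs $a\in\setX$ whose $j$-th coordinate is $a_{j}$, of the counts $N(a|x^{n})$, so the triangle inequality together with $x^{n}\in T_{\delta}$ gives $\big|\frac{N(a_{j}|x_{j}^{n})}{n}-p_{X_{j}}(a_{j})\big|\le\frac{\delta\,p_{X_{j}}(a_{j})}{\log|\setX|}$; repeating the previous computation verbatim with $(X_{j},\setX_{j})$ replacing $(X,\setX)$, and using $H(X_{j})\le\log|\setX_{j}|\le\log|\setX|$, yields the claim for each $j\in[2]$. The conditional bound is then pure arithmetic: from $p_{X_{j}^{n}|X_{\msout{j}}^{n}}(x_{j}^{n}|x_{\msout{j}}^{n})=p_{X^{n}}(x^{n})/p_{X_{\msout{j}}^{n}}(x_{\msout{j}}^{n})$ one has $\frac1n\log p_{X_{j}^{n}|X_{\msout{j}}^{n}}(x_{j}^{n}|x_{\msout{j}}^{n})=\frac1n\log p_{X^{n}}(x^{n})-\frac1n\log p_{X_{\msout{j}}^{n}}(x_{\msout{j}}^{n})$, and substituting the already-established estimates for the two terms together with the chain rule $H(X)=H(X_{\msout{j}})+H(X_{j}|X_{\msout{j}})$ makes the two $\delta$-errors add, producing the factor $2\delta$.

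I do not expect a genuine obstacle here; the only points needing care are the convention for zero-probability symbols (supplied by Remark~\ref{Rem:TheProbabilityOfALetterIs0ImplieItDoesNotAppearInATypicalSequence}, so that $0\cdot\log 0$ is treated as $0$) and verifying that after marginalizing one may retain the same normalizing constant $\log|\setX|$ in the typicality estimate — which is fine since $|\setX_{j}|\le|\setX|$ makes $H(X_{j})/\log|\setX|\le 1$. This matches the treatment in the references cited just before the lemma, so I would simply record the three lines above.
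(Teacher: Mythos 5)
The paper does not actually prove this lemma --- it is stated without proof, with the reader referred to standard texts --- so there is no in-paper argument to compare against. Your argument is correct and is precisely the standard one for robust typicality: the decomposition $\frac1n\log p_{X^{n}}(x^{n})=\sum_{a}\frac{N(a|x^{n})}{n}\log p_{X}(a)$, the bound $\sum_{a}p_{X}(a)|\log p_{X}(a)|=H(X)\le\log|\setX|$, the observation that marginal counts inherit the typicality bound with the same normalizer $\log|\setX|$, and the chain rule for the conditional case are all handled correctly, including the zero-probability convention via Remark~\ref{Rem:TheProbabilityOfALetterIs0ImplieItDoesNotAppearInATypicalSequence}.
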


\begin{lemma}
 \label{Lem:ATypicalSetIsHighlyProbable}
 For every $\epsilon > 0$, $\delta > 0$, there exists $N(\epsilon, \delta) \in
\naturals$, such that for every $n \geq N(\epsilon, \delta)$, $\prob(X^{n} \in
T_{\delta}) \geq 1-\epsilon$, and therefore, $\prob(X_{j}^{n} \in
T_{\delta}(X_{j})) \geq 1-\epsilon$, for each $j \in [2]$. Moreover, \begin{equation}\label{Eqn:SanovBoundOnAtypicalSet}
\prob \left(  X^{n} \notin T_{\delta}\right) \leq \exp \left\{ -n\lambda \delta^{2}  \right\}\mbox{, where }\lambda = \dfrac{1}{(\log |\setX|)^{2}}\min_{a \in \setX} \left\{ p_{X}^{2}(a):a \in \setX,  p_{X}(a) > 0\right\}\nonumber
\end{equation}
\end{lemma}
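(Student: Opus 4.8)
\emph{Proof plan.} Both assertions reduce to the concentration of the empirical counts $N(a|X^{n})=\sum_{i=1}^{n}1_{\{X_{i}=a\}}$ about their means. For each fixed $a\in\setX$ the indicators $1_{\{X_{i}=a\}}$, $i\in[n]$, are i.i.d.\ Bernoulli with mean $p_{X}(a)$, so $\Expectation\,N(a|X^{n})=np_{X}(a)$, and a union bound over the $|\setX|$ letters gives
\[
\prob\left(X^{n}\notin T_{\delta}\right)\ \leq\ \sum_{a\in\setX}\prob\!\left(\left|\tfrac{1}{n}N(a|X^{n})-p_{X}(a)\right|>\tfrac{\delta p_{X}(a)}{\log|\setX|}\right).
\]
For any $a$ with $p_{X}(a)=0$ the deviation event is $\{N(a|X^{n})>0\}$, which has probability $0$ since $\prob(X_{i}=a)=0$ (this is Remark~\ref{Rem:TheProbabilityOfALetterIs0ImplieItDoesNotAppearInATypicalSequence}), so such letters drop out of the sum. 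For $a$ with $p_{X}(a)>0$ I would apply a standard Chernoff/Hoeffding bound to the bounded i.i.d.\ average $N(a|X^{n})/n$, namely $\prob(|\tfrac1nN(a|X^{n})-p_{X}(a)|>t)\leq 2\exp\{-2nt^{2}\}$ (a valid estimate in the base-$2$ convention of this paper, since $e^{-x}\le 2^{-x}$ for $x\ge 0$); substituting the threshold $t=\delta p_{X}(a)/\log|\setX|\geq \delta p_{\min}/\log|\setX|$, where $p_{\min}\define\min\{p_{X}(a):p_{X}(a)>0\}$, bounds each surviving term by $2\exp\{-2n\lambda\delta^{2}\}$ with $\lambda=p_{\min}^{2}/(\log|\setX|)^{2}$, precisely the constant in the statement.

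Adding the at most $|\setX|$ surviving terms yields $\prob(X^{n}\notin T_{\delta})\leq 2|\setX|\exp\{-2n\lambda\delta^{2}\}$, and since $2|\setX|\exp\{-n\lambda\delta^{2}\}\leq 1$ once $n\geq \log(2|\setX|)/(\lambda\delta^{2})$, this is at most $\exp\{-n\lambda\delta^{2}\}$ for all such $n$, which is the asserted Sanov-type bound. Choosing $N(\epsilon,\delta)$ to be the maximum of $\log(2|\setX|)/(\lambda\delta^{2})$ and $\log(1/\epsilon)/(\lambda\delta^{2})$ then gives $\prob(X^{n}\in T_{\delta})=1-\prob(X^{n}\notin T_{\delta})\geq 1-\exp\{-n\lambda\delta^{2}\}\geq 1-\epsilon$ for $n\geq N(\epsilon,\delta)$; alternatively the weak law of large numbers applied to the finitely many counts $N(a|X^{n})$ gives this last bound more softly. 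The two corollaries on the projections follow immediately from the definitions: since $T_{\delta}(X_{j})=\{x_{j}^{n}:\exists\, x_{\msout{j}}^{n},\ (x_{1}^{n},x_{2}^{n})\in T_{\delta}\}$, we have $\{X^{n}\in T_{\delta}\}\subseteq\{X_{j}^{n}\in T_{\delta}(X_{j})\}$, hence $\prob(X_{j}^{n}\in T_{\delta}(X_{j}))\geq\prob(X^{n}\in T_{\delta})\geq 1-\epsilon$ for each $j\in[2]$.

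I do not anticipate a substantive obstacle, as this is the textbook ``the typical set is overwhelmingly likely'' fact. The only points requiring care are bookkeeping ones: isolating the letters of zero probability so that they contribute nothing, and converting the \emph{relative} deviation threshold $\delta p_{X}(a)/\log|\setX|$ into an \emph{absolute} one that is uniform in $a$ (by extracting the worst letter $p_{\min}$) so that a single exponent governs every term; once that is done, absorbing the $\mathrm{poly}(|\setX|)$ union-bound prefactor into the exponent for $n$ past a threshold is routine, and the threshold is simply folded into $N(\epsilon,\delta)$.
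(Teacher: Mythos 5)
Your argument is correct and is precisely the standard Hoeffding-plus-union-bound proof that the paper itself alludes to (it omits the proof, attributing the exponential bound to Hoeffding/Sanov and pointing to Csisz\'ar--K\"orner); your handling of the zero-probability letters, the reduction to the worst-case letter $p_{\min}$ yielding the stated $\lambda$, and the absorption of the $2|\setX|$ prefactor into the exponent for $n$ beyond a finite threshold are all sound and consistent with how the lemma is invoked later (always for $n \geq N(\cdot)$). No gaps.
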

While the first statement of lemma \ref{Lem:ATypicalSetIsHighlyProbable} can be proved using Cheybyshev inequality, the second statement, due to Hoeffding \cite{1965MMAMS_Hoe}, Sanov \cite{1957MMMS_San}, requires a finer analysis. The reader is referred to \cite[Problem 11 Pg 43]{CK-IT2011} for an idea of the proof.
\begin{lemma}
\label{Lem:BoundsOnSizeOfTypicalSet}
For every $\delta > 0$, there exists $N_{1}(\delta),
N_{2}(\delta) \in
\naturals$, such that,
\begin{enumerate}
% \item for every $n \geq N_{1}(\epsilon, \delta)$, $(1-\epsilon) \exp\left\{ 
% n(H(X)-\delta)\right\} \leq \left| T_{\delta}^{n}\right| \leq \exp\left\{ 
% n(H(X)+\delta)\right\}$,
\item for every $n \geq N_{1}(\delta)$, $\exp\left\{ 
n(H(X)-2\delta)\right\} \leq \left| T_{\delta}  \right| \leq \exp\left\{ 
n(H(X)+2\delta)\right\}$, and
%  \item for every $n \geq N_{1}(\epsilon, \delta)$, $(1-\epsilon) \exp\left\{ 
% n(H(X_{j})-\delta)\right\} \leq \left| T_{\delta}^{n} (X_{j}) \right| \leq \exp\left\{ 
% n(H(X_{j})+\delta)\right\}$, and
\item for every $n \geq N_{2}(\delta)$, $\exp\left\{ 
n(H(X_{j})-2\delta)\right\} \leq \left| T_{\delta}(X_{j})  \right| \leq \exp\left\{ 
n(H(X_{j})+2\delta)\right\}$.
\end{enumerate}
\end{lemma}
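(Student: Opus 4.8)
The plan is to obtain all four bounds from Lemmas \ref{Lem:BoundsOnProbabilityOfTypicalSequence} and \ref{Lem:ATypicalSetIsHighlyProbable} by the classical probability-mass counting argument, first for $|T_{\delta}|$ and then verbatim for each projection $|T_{\delta}(X_{j})|$.

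\textbf{Upper bounds.} For the upper bound on $|T_{\delta}|$ I would use that every $x^{n}\in T_{\delta}$ satisfies $p_{X^{n}}(x^{n})\ge\exp\{-n(H(X)+\delta)\}$ by Lemma \ref{Lem:BoundsOnProbabilityOfTypicalSequence}; since these probabilities sum to at most $1$, the number of such sequences is at most $\exp\{n(H(X)+\delta)\}\le\exp\{n(H(X)+2\delta)\}$, valid for all $n$. The same estimate applied to $p_{X_{j}^{n}}$ — noting that $x_{j}^{n}\in T_{\delta}(X_{j})$ forces the existence of a completion $x_{\msout{j}}^{n}$ with $(x_{1}^{n},x_{2}^{n})\in T_{\delta}$, so that the per-letter marginal entropy bound of Lemma \ref{Lem:BoundsOnProbabilityOfTypicalSequence} applies to $x_{j}^{n}$ — gives $|T_{\delta}(X_{j})|\le\exp\{n(H(X_{j})+2\delta)\}$.

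\textbf{Lower bounds.} Here I would fix a convenient $\epsilon$, say $\epsilon=\tfrac12$, invoke Lemma \ref{Lem:ATypicalSetIsHighlyProbable} to get $\prob(X^{n}\in T_{\delta})\ge\tfrac12$ for all $n$ beyond some threshold, and combine this with the upper estimate $p_{X^{n}}(x^{n})\le\exp\{-n(H(X)-\delta)\}$ from Lemma \ref{Lem:BoundsOnProbabilityOfTypicalSequence} to conclude $|T_{\delta}|\ge\tfrac12\exp\{n(H(X)-\delta)\}$. Enlarging the threshold to an $N_{1}(\delta)$ for which $\tfrac12\ge\exp\{-n\delta\}$ converts this into $|T_{\delta}|\ge\exp\{n(H(X)-2\delta)\}$, and together with the upper bound this settles the first claim for $n\ge N_{1}(\delta)$. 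The projection case is identical, using $\prob(X_{j}^{n}\in T_{\delta}(X_{j}))\ge\tfrac12$ from Lemma \ref{Lem:ATypicalSetIsHighlyProbable} and the marginal entropy bound as above, which produces the corresponding $N_{2}(\delta)$.

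I do not anticipate a genuine obstacle; the only item needing care is the order of quantifiers in the lower bound — one must pick $\epsilon$ first and only then choose $n$ large, and then absorb the residual constant $1-\epsilon$ into a factor $\exp\{-n\delta\}$, which is exactly why the exponent is stated with the slack $2\delta$ rather than $\delta$. A secondary bookkeeping point is that the marginal estimates in Lemma \ref{Lem:BoundsOnProbabilityOfTypicalSequence} are phrased for $x^{n}\in T_{\delta}$, so when working with $T_{\delta}(X_{j})$ one must invoke them through an arbitrary completion of $x_{j}^{n}$ to a jointly typical pair, as noted above.
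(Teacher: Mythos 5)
Your proof is correct and is exactly the standard probability-mass counting argument; the paper itself omits the proof of this lemma and refers to the textbooks (Csisz\'ar--K\"orner, El Gamal--Kim, Cover--Thomas), where the argument is precisely the one you give. Both quantifier-order caveats you flag (choosing $\epsilon$ before $n$, and invoking the marginal bound of Lemma \ref{Lem:BoundsOnProbabilityOfTypicalSequence} via a jointly typical completion) are handled correctly.
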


% \begin{lemma}
%  \label{Lem:BoundOnTheProbabilityOfIndependentVectorsFallingWithinTypicalSet}
%  If $\tilde{X} \define (\tilde{X}_{1},\tilde{X}_{2}) \sim p_{X_{1}}p_{X_{2}}$ and
% $\tilde{X}^{n} \define (\tilde{X}_{1}^{n},\tilde{X}_{2}^{n})$ are $n$ independent and
% identically distributed copies of $\tilde{X}$, then for every $\delta > 0$, there exists
% $N(\delta) \in \naturals$, such that for all $n \geq N(\delta)$, $\exp \left\{
% -n(I(X_{1};X_{2})+7\delta) \right\}\leq \prob(\tilde{X}^{n} \in T_{\delta}) \leq
% \exp \left\{ -n(I(X_{1};X_{2})-7\delta) \right\}$.
% \end{lemma}

\begin{lemma}
 \label{Lem:ConditionalTypicalSetOccursWithHighProbability}
 For every $\epsilon > 0$, $\delta > 0$, there exists $N(\epsilon, \delta) \in
\naturals$, such that for every $n \geq N(\epsilon, \delta)$, $x_{\msout{j}}^{n} \in
T_{\delta}(X_{\msout{j}})$, implies $\prob(X_{j}^{n} \in
T_{2\delta}(X_{j}|x_{\msout{j}}^{n})|X_{\msout{j}}^{n}=x_{\msout{j}}^{n}) \geq
1-\epsilon$ and therefore $\prob \left( X_{\msout{j}} \in T_{\delta}(X_{\msout{j}}), X^{n} \notin T_{2\delta} \right) \leq \epsilon$.
\end{lemma}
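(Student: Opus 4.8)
The plan is to fix $j \in [2]$ and an arbitrary $x_{\msout{j}}^{n} \in T_{\delta}(X_{\msout{j}})$ --- by symmetry it suffices to carry out the argument for $j = 1$ and take $N(\epsilon,\delta)$ to be the larger of the two thresholds it produces --- and to study the conditional law of $X_{j}^{n}$ given $\{X_{\msout{j}}^{n} = x_{\msout{j}}^{n}\}$ (this conditioning is legitimate since, by Remark \ref{Rem:TheProbabilityOfALetterIs0ImplieItDoesNotAppearInATypicalSequence}, a projected-typical sequence uses only letters of positive probability). For $b \in \setX_{\msout{j}}$ put $c_{b} \define N(b|x_{\msout{j}}^{n})$ and $I_{b} \define \{ i \in [n] : x_{\msout{j},i} = b\}$, so $|I_{b}| = c_{b}$. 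Conditioned on $\{X_{\msout{j}}^{n} = x_{\msout{j}}^{n}\}$, the coordinates $(X_{j,i} : i \in I_{b})$ are i.i.d.\ with pmf $p_{X_{j}|X_{\msout{j}}}(\,\cdot\,|b)$ and independent across distinct $b$, so for $a = (a_{j},a_{\msout{j}}) \in \setX$ the count $N(a|X^{n})$ is conditionally $\mathrm{Binomial}\big(c_{a_{\msout{j}}},\, p_{X_{j}|X_{\msout{j}}}(a_{j}|a_{\msout{j}})\big)$, with conditional mean $\mu_{a} \define c_{a_{\msout{j}}}\, p_{X_{j}|X_{\msout{j}}}(a_{j}|a_{\msout{j}})$ and conditional variance at most $c_{a_{\msout{j}}}/4 \le n/4$.

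The first step is a deterministic ``bias'' bound: from $x_{\msout{j}}^{n} \in T_{\delta}(X_{\msout{j}})$, summing the defining inequalities of $T_{\delta}$ over the $X_{j}$-coordinate of a joint witness gives $\big|\,c_{b}/n - p_{X_{\msout{j}}}(b)\,\big| \le \delta\, p_{X_{\msout{j}}}(b)/\log|\setX|$ for every $b$, and multiplying through by $p_{X_{j}|X_{\msout{j}}}(a_{j}|a_{\msout{j}})$ yields $\big|\,\mu_{a}/n - p_{X}(a)\,\big| \le \delta\, p_{X}(a)/\log|\setX|$ for every $a \in \setX$. The second step controls the fluctuation of $N(a|X^{n})$ around $\mu_{a}$ by Chebyshev's inequality applied to the conditional $\mathrm{Binomial}$ law: for each $a$ with $p_{X}(a) > 0$, $\prob\big( |N(a|X^{n}) - \mu_{a}| \ge n\delta\, p_{X}(a)/\log|\setX| \,\big|\, X_{\msout{j}}^{n} = x_{\msout{j}}^{n}\big) \le (\log|\setX|)^{2} / \big(4 n \delta^{2} p_{X}^{2}(a)\big)$, which is below $\epsilon/|\setX|$ once $n$ exceeds a threshold determined by $\epsilon$, $\delta$ and the minimum nonzero value of $p_{X}$. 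Taking a union bound over the finitely many $a \in \setX$ and combining with the bias bound through the triangle inequality, with conditional probability at least $1-\epsilon$ we get $\big|\,N(a|X^{n})/n - p_{X}(a)\,\big| \le 2\delta\, p_{X}(a)/\log|\setX|$ simultaneously for all $a$ with $p_{X}(a)>0$; for the letters with $p_{X}(a) = 0$ either $c_{a_{\msout{j}}} = 0$ (by Remark \ref{Rem:TheProbabilityOfALetterIs0ImplieItDoesNotAppearInATypicalSequence}) or $p_{X_{j}|X_{\msout{j}}}(a_{j}|a_{\msout{j}}) = 0$, so $N(a|X^{n}) = 0$ deterministically, consistent with $T_{2\delta}$. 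This is precisely the assertion $\prob\big(X_{j}^{n} \in T_{2\delta}(X_{j}|x_{\msout{j}}^{n}) \,\big|\, X_{\msout{j}}^{n} = x_{\msout{j}}^{n}\big) \ge 1-\epsilon$. The ``and therefore'' clause then follows by averaging over $x_{\msout{j}}^{n}$: $\prob\big(X_{\msout{j}}^{n} \in T_{\delta}(X_{\msout{j}}),\, X^{n} \notin T_{2\delta}\big) = \sum_{x_{\msout{j}}^{n} \in T_{\delta}(X_{\msout{j}})} p_{X_{\msout{j}}^{n}}(x_{\msout{j}}^{n})\, \prob\big(X^{n} \notin T_{2\delta} \,\big|\, X_{\msout{j}}^{n} = x_{\msout{j}}^{n}\big) \le \epsilon$.

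I do not expect a real obstacle: this is the standard conditional-typicality lemma and the only thing needing care is the accounting, namely that one ``$\delta$'' of slack is consumed by the atypicality already present in $x_{\msout{j}}^{n}$ and a second ``$\delta$'' by the conditional fluctuation of the counts, which is exactly why the conclusion sits in $T_{2\delta}$ rather than $T_{\delta}$; one must also keep the Chebyshev threshold uniform over the (finitely many) letters and check that the estimate degrades gracefully when some $c_{b}$ is small, which can only happen when $\delta$ is so large that the $T_{2\delta}$ constraint is already slack. An alternative to the direct Chebyshev step would be to apply Lemma \ref{Lem:ATypicalSetIsHighlyProbable} to each i.i.d.\ block $(X_{j,i} : i \in I_{b})$ with a suitably smaller parameter $\delta'(\delta)$, but the Chebyshev route keeps the argument self-contained.
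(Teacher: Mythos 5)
The paper does not actually prove this lemma: it is one of the typicality facts in Section \ref{SubSec:Typicality} whose proofs the authors explicitly omit and defer to the cited references, so there is no in-paper argument to compare against. Your proof is correct and is exactly the standard second-moment argument those references supply — the conditional i.i.d.\ structure of $X_{j}^{n}$ given $x_{\msout{j}}^{n}$, the deterministic bias bound inherited from $x_{\msout{j}}^{n}\in T_{\delta}(X_{\msout{j}})$, the Chebyshev fluctuation bound with a threshold uniform over $x_{\msout{j}}^{n}$ (depending only on $\epsilon$, $\delta$, $|\setX|$ and the minimum nonzero value of $p_{X}$), the disposal of zero-probability letters via Remark \ref{Rem:TheProbabilityOfALetterIs0ImplieItDoesNotAppearInATypicalSequence}, and the averaging step for the second claim are all sound, and the $\delta+\delta=2\delta$ accounting correctly explains why the conclusion lands in $T_{2\delta}$.
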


\begin{lemma}
\label{Lem:BoundsOnSizeOfConditionalTypicalSet}
For every $\delta > 0$, there exists $N(\delta) \in
\naturals$, such that,
for every $n \geq N(\delta)$, $x_{\msout{j}}^{n} \in
T_{\delta}(X_{\msout{j}})$ we have $\exp\left\{ 
n(H(X_{j}|X_{\msout{j}})-3\delta)\right\} \leq \left|
T_{2\delta}(X_{j}|x_{\msout{j}}^{n}) \right| \leq \exp\left\{ 
n(H(X_{j}|X_{\msout{j}})+3\delta)\right\}$.%, and
%  \item for every $n \geq N_{1}(\epsilon, \delta)$, $(1-\epsilon) \exp\left\{ 
% n(H(X_{j})-\delta)\right\} \leq \left| T_{\delta}^{n} (X_{j}) \right| \leq \exp\left\{ 
% n(H(X_{j})+\delta)\right\}$, and
% \item for every $n \geq N_{2}(\epsilon, \delta)$, $\exp\left\{ 
% n(H(X_{j})-3\delta)\right\} \leq \left| T_{\delta}^{n}(X_{j})  \right| \leq \exp\left\{ 
% n(H(X_{j})+3\delta)\right\}$.
% \end{enumerate}
\end{lemma}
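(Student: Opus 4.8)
The plan is the standard ``per-sequence bound, then squeeze'' argument. First I obtain sharp bounds on $p_{X_{j}^{n}|X_{\msout{j}}^{n}}(x_{j}^{n}|x_{\msout{j}}^{n})$ that are valid for every $x_{j}^{n}\in T_{2\delta}(X_{j}|x_{\msout{j}}^{n})$, and then sandwich the cardinality between two geometric estimates: the upper bound from ``conditional probabilities sum to at most one'' and the lower bound from Lemma \ref{Lem:ConditionalTypicalSetOccursWithHighProbability}, which guarantees that $T_{2\delta}(X_{j}|x_{\msout{j}}^{n})$ carries almost all of the conditional mass once $n$ is large.

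For the per-sequence bound, fix $x_{\msout{j}}^{n}\in T_{\delta}(X_{\msout{j}})$ and $x_{j}^{n}\in T_{2\delta}(X_{j}|x_{\msout{j}}^{n})$, so that $x^{n}=(x_{1}^{n},x_{2}^{n})\in T_{2\delta}$, and write $p_{X_{j}^{n}|X_{\msout{j}}^{n}}(x_{j}^{n}|x_{\msout{j}}^{n})=p_{X^{n}}(x^{n})/p_{X_{\msout{j}}^{n}}(x_{\msout{j}}^{n})$. Applying Lemma \ref{Lem:BoundsOnProbabilityOfTypicalSequence} with parameter $2\delta$ to $x^{n}\in T_{2\delta}$ gives $|\frac{1}{n}\log p_{X^{n}}(x^{n})+H(X)|\le 2\delta$, while applying it with parameter $\delta$ to the \emph{hypothesis} $x_{\msout{j}}^{n}\in T_{\delta}(X_{\msout{j}})$ gives $|\frac{1}{n}\log p_{X_{\msout{j}}^{n}}(x_{\msout{j}}^{n})+H(X_{\msout{j}})|\le \delta$. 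Taking $\frac{1}{n}\log$ of the displayed ratio and subtracting these two estimates yields $|\frac{1}{n}\log p_{X_{j}^{n}|X_{\msout{j}}^{n}}(x_{j}^{n}|x_{\msout{j}}^{n})+H(X_{j}|X_{\msout{j}})|\le 3\delta$, i.e. every $x_{j}^{n}\in T_{2\delta}(X_{j}|x_{\msout{j}}^{n})$ has conditional probability in $[\exp\{-n(H(X_{j}|X_{\msout{j}})+3\delta)\},\,\exp\{-n(H(X_{j}|X_{\msout{j}})-3\delta)\}]$; this holds for every $n$.

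The cardinality bounds are then immediate. Since the conditional probabilities of the sequences in $T_{2\delta}(X_{j}|x_{\msout{j}}^{n})$ are each at least $\exp\{-n(H(X_{j}|X_{\msout{j}})+3\delta)\}$ and sum to at most $1$, the cardinality is at most $\exp\{n(H(X_{j}|X_{\msout{j}})+3\delta)\}$, for all $n$. For the lower bound, fix $\epsilon\in(0,1)$, set $N(\delta):=N(\epsilon,\delta)$ from Lemma \ref{Lem:ConditionalTypicalSetOccursWithHighProbability}, and for $n\ge N(\delta)$ bound $1-\epsilon\le \prob(X_{j}^{n}\in T_{2\delta}(X_{j}|x_{\msout{j}}^{n})\,|\,X_{\msout{j}}^{n}=x_{\msout{j}}^{n})$ from above by $|T_{2\delta}(X_{j}|x_{\msout{j}}^{n})|\exp\{-n(H(X_{j}|X_{\msout{j}})-3\delta)\}$, which rearranges to $|T_{2\delta}(X_{j}|x_{\msout{j}}^{n})|\ge (1-\epsilon)\exp\{n(H(X_{j}|X_{\msout{j}})-3\delta)\}$.

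I expect the only genuine obstacle to be the parameter bookkeeping in the per-sequence step: one must pair the joint deviation $2\delta$ (from $x^{n}\in T_{2\delta}$) with the \emph{tighter} marginal deviation $\delta$ (from the hypothesis $x_{\msout{j}}^{n}\in T_{\delta}(X_{\msout{j}})$), because applying the conditional clause of Lemma \ref{Lem:BoundsOnProbabilityOfTypicalSequence} directly to $x^{n}\in T_{2\delta}$ would yield only the constant $4\delta$ rather than $3\delta$. The remaining point --- the leftover factor $(1-\epsilon)$ in the lower bound --- is cosmetic: it is either retained as in the usual textbook statement, or absorbed by an inconsequential enlargement of the slack constant; I would not belabor it.
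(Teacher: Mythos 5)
Your argument is the standard one: the paper gives no proof of this lemma (it defers to the cited textbooks), and what you wrote is precisely the textbook per-sequence-bound-plus-squeeze argument, with the correct observation that the $3\delta$ slack comes from pairing the $2\delta$ joint estimate for $x^{n}\in T_{2\delta}$ with the $\delta$ marginal estimate for $x_{\msout{j}}^{n}\in T_{\delta}(X_{\msout{j}})$ (the marginal clause of Lemma \ref{Lem:BoundsOnProbabilityOfTypicalSequence} applies to $x_{\msout{j}}^{n}$ because, by definition of the projection, it extends to some joint sequence in $T_{\delta}$, and $p_{X_{\msout{j}}^{n}}(x_{\msout{j}}^{n})$ depends only on $x_{\msout{j}}^{n}$). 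The only blemish is the one you already flag: your lower bound carries a residual factor $(1-\epsilon)$ absent from the lemma's statement, which is the standard form in which this result appears in the references the paper cites and is harmless in every use the paper makes of it.
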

\section{PTP-STx: Definitions and capacity}
\label{Sec:PTP-STxDefinitionsCapacityAndNestedCosetCodes}

We begin with a description of the PTP-STx \cite{1980MMPCT_GelPin} and relevant notions in section \ref{Subsec:DefinitionsPTP-STxAchievabilityAndCapacity}. In section \ref{Subsec:CapacityOfPTP-STx}, we state the capacity region of PTP-STx as derived by Gelfand and Pinsker.

\subsection{Definitions - PTP-STx, achievability and capacity}
\label{Subsec:DefinitionsPTP-STxAchievabilityAndCapacity}

Consider a point-to-point channel with knowledge of channel state at
transmitter (PTP-STx) studied by Gelfand and Pinsker \cite{1980MMPCT_GelPin}. Let
$\setX$ and $\setY$ denote finite input and output alphabet
sets respectively. Transition
probabilities depend on a random parameter, called state, that takes values in a finite
set $\setS$. The discrete time channel is (i) time invariant, i.e.,
pmf of $Y_{i}$, the output at time $i$, conditioned on $(X_{i}$,
$S_{i})$, the input and state at time $i$, is invariant with $i$, (ii)
memoryless, i.e., $Y_{i}$ is conditionally
independent of $(X_{t},S_{t}):1\leq t< i$ given $(X_{i},S_{i})$, and (iii) used without
feedback, i.e., encoder has no knowledge of outputs observed by decoder. Let
$W_{Y|XS}(y|x,s)$ be the probability of observing $y\in\setY$ at the
output given $x \in \setX$ is input to PTP-STx in state $s\in\setS$. The
state at time $i$, $S_{i}$ is (i) independent of $(X_{t},S_{t},Y_{t}):1\leq t < i$, and
(ii) identically distributed for all $i$. Let
$W_{S}(s)$ be probability of PTP-STx being in state $s\in\setS$. We assume
the sequence of states is non-causally available at the encoder. The input is constrained
with respect to a cost function $\kappa : \setX \times \setS \rightarrow
[0,\infty)$. We assume that the cost is time-invariant and additive i.e., cost of input $X^{n}$ to the
channel in state $S^{n}$ is $\bar{\kappa}^{n}(X^{n},S^{n}) \define
\frac{1}{n}\sum_{i=1}^{n}
\kappa(X_{i},S_{i})$. We refer to this channel
as PTP-STx $(\setS,W_{S},\setX,\kappa,\setY,W_{Y|XS})$.
\begin{definition}
\label{Defn:PTP-STxCode}
A PTP-STx code $\left( n, \mathscr{M}, e,d \right)$ consists of (i) an index set
$\setM$ of messages, of cardinality $\mathscr{M}$, 
(ii) an encoder map $e : \setM \times \setS^{n}
\rightarrow \setX^{n}$, and (iii) a decoder
map $d : \setY^{n} \rightarrow \setM$.
\end{definition}
Assuming a uniform pmf on the set of messages, we define the average
error probability and the cost of a PTP-STx code.
\begin{definition}
\label{Defn:PTP-STxErrorProbability}
The error probability of PTP-STx code $(n,\mathscr{M},e,d)$
conditioned on message $m \in \setM$ is 
\begin{equation}\xi(e,d|m) \define \sum_{s^{n} \in \setS^{n}} \sum_{\stackrel{y^{n}
: d( y^{n}  )}{ \neq m}} \!\!\!W_{S^{n}}(s^{n})W_{Y^{n}|X^{n},S^{n}}( y^{n} |
e(m,s^{n}),s^{n} ).\nonumber\end{equation} The average error probability of PTP-STx code
$(n,\mathscr{M},e,d)$ is $\bar{\xi}(e,d) \define \sum_{m=1}^{\mathscr{M}}
\frac{1}{\mathscr{M}}
\xi(e,d|m)$. The average cost of transmitting message $m \in \setM$ is $\tau(e|m) \define
\sum_{s^{n} \in
\setS^{n}}W_{S^{n}}(s^{n})\bar{\kappa}^{n}(e(m,s^{n}),s^{n})$ and the average cost of
PTP-STx code $(n,\mathscr{M},e,d)$ is $\tau(e) \define
\frac{1}{\mathscr{M}}\sum_{m=1}^{\mathscr{M}}\tau(e|m)$.
\end{definition}
\begin{definition}
\label{Defn:PTP-STxAchievabilityAndCapacity}
A rate cost pair $(R,\tau)\in [0,\infty)^{2}$ is achievable if for every $\eta > 0$,
there exists
$N(\eta)\in \naturals$ such that for all $n > N(\eta)$, there exists a
PTP-STx code $(n, \mathscr{M}^{(n)}, e^{(n)},d^{(n)})$ such that (i)
$\frac{\log\mathscr{M}^{(n)}}{n} \geq R-\eta$, (ii) $\bar{\xi}(e^{(n)},d^{(n)})
\leq \eta$, and (iii) average cost $\tau(e^{(n)}) \leq \tau+\eta$. The capacity region is
$\mathbb{C}(\tau) \define \cl{\left\{ R \geq 0: (R,\tau)\mbox{ is achievable} \right\}}$.
\end{definition}

In a celebrated result, Gelfand and Pinsker \cite{1980MMPCT_GelPin} derived a single
letter characterization of $\mathbb{C}(\tau)$. In the next section, we state
this characterization.
\subsection{Capacity of PTP-STx}
\label{Subsec:CapacityOfPTP-STx}

\begin{definition}
\label{Defn:CharacterizationOfTestChannelsForPTP-STx}
Let $\overline{\mathbb{D}}(\tau)$ be the collection of pmfs $p_{VXSY}$ on
$\setV\times\setX\times\setS\times\setY$ such that (i) $\setV$ is a finite set, (ii)
$p_{S} = W_{S}$, (iii)
$p_{Y|XSV}=p_{Y|XS}= W_{Y|XS}$, (iv)
$p_{X|SV}(x|s,v) \in \left\{0,1\right\}$ for all $(v,x,s)\in\setV \times \setX \times
\setS$ and (v) $\mathbb{E} \left\{ \kappa(X,S) \right\} \leq \tau$. Let
\[\mathbb{D}(\tau) = \left\{ p_{VXSY} \in 
\overline{\mathbb{D}}(\tau): |\mathcal{R}(p_{VXSY},V)| \leq \min \{ 
\left(|\setX|\cdot|\setS|\right)^{2},
\left(|\setX|+|\setS|+|\setY|-2\right)\cdot|\setX|\cdot|\setS| \}\right\}.\]For
any pmf $p_{VXSY}$ defined on
$\setV\times\setX\times\setS\times\setY$, let $\alpha(p_{VXSY}) \define
[0,I(V;Y)-I(V;S)]$, and
\begin{equation}\label{Eqn:CharacterizationOfCapacityRegionOfPTP-STxThatIsNotComputable}
\overline{\alpha}(\tau)
\define \cocl\left(
\underset{p_{VXSY} \in
\overline{\mathbb{D}}(\tau)}{\bigcup}\alpha(p_{VXSY})\right),\alpha(\tau)
\define \cocl\left( \underset{p_{VXSY} \in
\mathbb{D}(\tau)}{\bigcup}\alpha(p_{VXSY})\right).\nonumber \end{equation}
\end{definition}
\begin{thm}
\label{Thm:CapacityOfPTP-STx}
 $\mathbb{C}(\tau)=\alpha(\tau)=\overline{\alpha}(\tau)$.
\end{thm}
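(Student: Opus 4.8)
The plan is to establish the three-way equality $\mathbb{C}(\tau)=\alpha(\tau)=\overline{\alpha}(\tau)$ in the standard three pieces: (a) $\alpha(\tau)\subseteq\overline{\alpha}(\tau)$, which is immediate from the definitions since $\mathbb{D}(\tau)\subseteq\overline{\mathbb{D}}(\tau)$ and both are closed convex hulls of unions of the intervals $\alpha(p_{VXSY})=[0,I(V;Y)-I(V;S)]$; (b) $\overline{\alpha}(\tau)\subseteq\mathbb{C}(\tau)$, the achievability (coding) direction; and (c) $\mathbb{C}(\tau)\subseteq\alpha(\tau)$, the converse together with the cardinality bound on $\setV$. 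The heart of the matter is that achievability uses the \emph{larger} family $\overline{\mathbb{D}}(\tau)$ (no cardinality constraint), while the converse produces a rate in the \emph{smaller} family $\mathbb{D}(\tau)$, and the two are squeezed together through (a). Note also that by definition (iv) of $\overline{\mathbb{D}}(\tau)$, $X$ is a deterministic function of $(V,S)$, i.e.\ $X=f(V,S)$ for some map $f$; this is the ``functional'' form of the test channel and is what will actually be coded.

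For achievability (b), I would fix $p_{VXSY}\in\overline{\mathbb{D}}(\tau)$ with $X=f(V,S)$ and $I(V;Y)-I(V;S)>0$, and run the classical binning argument of Gelfand--Pinsker with joint typicality. Generate $\approx \exp\{n(I(V;S)+R+\epsilon)\}$ i.i.d.\ $V^n$-sequences according to $p_V^n$, and partition them into $\exp\{nR\}$ bins. \emph{Encoding:} given message $m$ and observed state $s^n$, look in bin $m$ for a $v^n$ with $(v^n,s^n)\in T_\delta(V,S)$; by the covering lemma / mutual covering argument this succeeds with high probability provided the bin size exceeds $\exp\{n(I(V;S)+\epsilon')\}$, i.e.\ provided $R < I(V;Y)-I(V;S)$ after accounting for the decoding constraint below. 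Transmit $x^n=f^n(v^n,s^n)$; the cost constraint $\mathbb{E}\kappa(X,S)\le\tau$ plus the typicality of $(v^n,s^n)$ controls $\bar\kappa^n(x^n,s^n)\le\tau+\epsilon$ with high probability, giving condition (iii) of Definition \ref{Defn:PTP-STxAchievabilityAndCapacity} (a small modification — discarding atypical pairs or a constant shift — handles the rare bad event). \emph{Decoding:} given $y^n$, declare $\hatm$ if there is a unique $v^n$ in bin $\hatm$ with $(v^n,y^n)\in T_\delta(V,Y)$; the packing lemma shows the probability of a wrong codeword being jointly typical with $Y^n$ vanishes if the total number of codewords $\exp\{n(I(V;S)+R)\}$ is below $\exp\{n(I(V;Y)-\epsilon'')\}$, i.e.\ $R<I(V;Y)-I(V;S)$. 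The error-event union bound over the (non-)covering event, the joint-typicality of the true pair, and the packing event gives $\bar\xi\to 0$. Then I would invoke Lemmas \ref{Lem:ATypicalSetIsHighlyProbable}, \ref{Lem:ConditionalTypicalSetOccursWithHighProbability}, and \ref{Lem:BoundsOnSizeOfTypicalSet}--\ref{Lem:BoundsOnSizeOfConditionalTypicalSet} to make the covering/packing probabilities precise, average over the random codebook to extract one good deterministic code, and finally time-share (which is why $\overline\alpha(\tau)$ is defined via $\cocl$ of a union over $\overline{\mathbb{D}}(\tau)$) to reach every point of $\overline{\alpha}(\tau)$.

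For the converse (c), I would take any achievable $(R,\tau)$ with code $(n,\mathscr M^{(n)},e^{(n)},d^{(n)})$ of vanishing error and cost $\le\tau+\eta$, and apply Fano's inequality to get $nR \le I(M;Y^n) + n\epsilon_n$. The crux is the Csisz\'ar-sum / Gelfand--Pinsker identification of the auxiliary variable: set $V_i := (M, Y^{i-1}, S_{i+1}^n)$ (or the standard variant), verify the Markov constraint $p_{Y|XSV}=p_{Y|XS}$ and that $X_i$ is a deterministic function of $(V_i,S_i)$ so that $p_{X|SV}\in\{0,1\}$, use the fact that $S^n$ is i.i.d.\ and independent of $M$ to telescope $\sum_i [I(V_i;Y_i)-I(V_i;S_i)] \ge I(M;Y^n)$ via the Csisz\'ar sum identity, introduce a uniform time-sharing variable $Q$ and absorb it into $V$, and bound the per-letter cost by convexity. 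This yields $R \le I(V;Y)-I(V;S)$ for some pmf in $\overline{\mathbb{D}}(\tau)$; a standard application of the Fenchel--Eggleston--Carath\'eodory / support-lemma argument then reduces $|\mathcal{R}(V)|$ to the stated bound, placing the pmf in $\mathbb{D}(\tau)$ and giving $(R,\tau)\in\alpha(\tau)$. The main obstacle — though it is the well-trodden part of Gelfand--Pinsker — is getting the auxiliary-variable bookkeeping in the converse exactly right so that the Csisz\'ar sum manipulation produces a \emph{single-letter} difference $I(V;Y)-I(V;S)$ with $V$ satisfying the functional constraint; in the achievability direction the delicate point is the simultaneous satisfaction of the cost constraint and the covering success on the same typical set, but Lemma \ref{Lem:ConditionalTypicalSetOccursWithHighProbability} and a minor code surgery dispatch it. Since this theorem is exactly the classical result of \cite{1980MMPCT_GelPin}, I would in fact cite it and only sketch the above, as the paper's own contribution is the \emph{nested coset code} version in the following subsection.
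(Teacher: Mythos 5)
Your overall architecture matches what the paper does: the paper itself does not prove this theorem but cites Gelfand--Pinsker for the cost-unconstrained case, cites \cite[Lemma 2]{200305TIT_BarCheWor} (attributed to Cohen) for the sufficiency of deterministic $p_{X|VS}$, and sketches the Fenchel--Eggleston cardinality reduction; your decomposition into $\alpha(\tau)\subseteq\overline{\alpha}(\tau)$, achievability of $\overline{\alpha}(\tau)$, and the converse into $\alpha(\tau)$ is exactly the intended squeeze, and your covering/packing accounting for achievability is correct.

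One step in your converse sketch is wrong as written: with the standard identification $V_i\define(M,Y^{i-1},S_{i+1}^{n})$, the channel input $X_i=e_i(M,S^{n})$ depends on $S^{i-1}$, which is \emph{not} contained in $(V_i,S_i)$, so $X_i$ is in general not a deterministic function of $(V_i,S_i)$ and you cannot ``verify'' condition (iv) for this auxiliary. You also cannot simply absorb $S^{i-1}$ into $V_i$ without breaking the Csisz\'ar-sum telescoping that yields $\sum_i[I(V_i;Y_i)-I(V_i;S_i)]\geq I(M;Y^{n})$. The correct route --- and the one the paper points to --- is to first obtain $R\leq I(V;Y)-I(V;S)$ for a test channel with an arbitrary stochastic $p_{X|VS}$ (and the Markov chain $V-(X,S)-Y$), and then invoke the separate extreme-point/convexity argument of Cohen (Barron--Chen--Wornell, Lemma 2) showing that for fixed $p_{VS}$ the supremum of $I(V;Y)-I(V;S)$ over $p_{X|VS}$ is attained at a deterministic map, possibly after enlarging $\setV$ by a factor $|\setX|\cdot|\setS|$ (which is why the paper's cardinality bound for $\mathbb{D}(\tau)$ carries the extra $|\setX|\cdot|\setS|$ factor relative to the usual $\min\{|\setX|\cdot|\setS|,|\setX|+|\setS|+|\setY|-2\}$). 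With that correction the rest of your plan goes through.
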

Gelfand and Pinsker \cite{1980MMPCT_GelPin} proved theorem \ref{Thm:CapacityOfPTP-STx} for
channels without a cost constraint. While the central elements of their proof can be
adopted for cost constrained channels, the sufficiency of restricting to test channels
$p_{VSXY}$ satisfying condition (iv) in definition
\ref{Defn:CharacterizationOfTestChannelsForPTP-STx} is established in \cite[Lemma
2]{200305TIT_BarCheWor}, which is attributed to Cohen. A cardinality bound on
$|\mathcal{V}|$ can be established using Fenchel-Eggleston strengthening of
Carath\'eodory's theorem \cite[Appendix C]{201201NIT_ElgKim} as done in \cite[Lemma
9]{201207arXiv_PadPra}. In particular, one can first prove the upper bound $\min \left\{ 
|\setX|\cdot|\setS|,|\setX|+|\setS|+|\setY|-2 \right\}$ on $|\mathcal{V}|$ for test
channels $p_{VSXY}$ that do not satisfy condition (iv) in definition
\ref{Defn:CharacterizationOfTestChannelsForPTP-STx}. Any such test channel $p_{VSXY}$ can
be mapped to a test channel $p_{\tilde{V}SXY}$ that 
satisfies condition (iv) in definition \ref{Defn:CharacterizationOfTestChannelsForPTP-STx}
without compromising on the achievable rate for which $|\tilde{\mathcal{V}}| \leq
|\setX|\cdot|\setS|\cdot |\mathcal{V}|$.

\section{Nested coset codes achieve capacity of point to point
channels}
\label{Sec:NestedCosetCodesAchieveCapacityOfArbitraryPointToPointChannels}

\subsection{Nested coset PTP-STx codes}
\label{Sec:NestedCosetPTP-STxCodes}

Gelfand and Pinsker prove
achievability of $\mathbb{C}(\tau)$ by averaging
error probability over an ensemble of PTP-STx codes. A code in this ensemble is specified
by a corresponding auxiliary code $\lambda_{O}$ built over an auxiliary
set and a mapping. An
ingenious technique of partitioning (binning) $\lambda_{O}$ into $\mathcal{M}$ bins, one
for each message $m \in \setM$, is the key feature of the coding technique. In the
following, we consider PTP-STx codes which are endowed with a nested coset code structure.
The distinguishing feature of a nested coset PTP-STx
code is that $\lambda_{O}$ is a coset code built over a finite field $\fieldq$ and
$\lambda_{O}$ is partitioned into bins by cosets of a sub coset code $\lambda_{I}
\subseteq \lambda_{O}$. In the sequel, we describe nested coset codes and define a nested
coset PTP-STx code.

We begin with a brief review of coset and nested coset codes. An $(n,k)$ coset code is a collection of vectors in $\fieldq^{n}$ obtained by adding a
bias vector to a $k-$dimensional subspace of $\fieldq^{n}$. If $\lambda_{O}
\subseteq \fieldq^{n}$ and $\lambda_{I}\subseteq \lambda_{O}$ are $(n,k+l)$ and
$(n,k)$ coset codes respectively, then $q^{l}$ cosets $\lambda_{O}/\lambda_{I}$ that
partition $\lambda_{O}$ is a nested coset code.\begin{comment}{ A couple of remarks are in order. An $(n,k)$ coset code is specified by a bias vector $b^{n} \in \fieldq^{n}$ and a generator matrix $g \in \fieldq^{k \times  n}$. If $\lambda_{O} \subseteq \fieldq^{n}$ and $\lambda_{I} \subseteq \lambda_{O}$
are $(n,k+l)$ and $(n,k)$ coset codes respectively, then there exists a bias vector $b^{n}
\in \fieldq^{n}$ and generator matrices $g_{I} \in \fieldq^{k \times
n}$ and $g_{O}^{T} = \left[  g_{I}^{T}  g_{O/I}^{T} \right ] \in
\fieldq^{(k+l)\times n}$, such that $b^{n}$, $g_{I}$ specify
$\lambda_{I}$ and $b^{n}$, $g_{O}$ specify $\lambda_{O}$. Therefore, a nested coset
code is specified by a bias vector $b^{n}$ and any two of the three generator matrices
$g_{I}$, $g_{O/I}$ and $g_{O}$.}\end{comment} We refer to this as nested coset code
$(n,k,l,g_{I},g_{O/I},b^{n})$ where $b^{n}$ is the bias vector, $g_{I}\in \fieldq^{k \times
n}$ and $g_{O}^{T} = \left[  g_{I}^{T} ~~ g_{O/I}^{T} \right ]\in
\fieldq^{(k+l)\times n}$ are generator matrices of $\lambda_{I}$ and $\lambda_{O}$ respectively.

An informed reader will begin to see the structure we are after. The bins are cosets of the smaller linear code $\lambda_{I}$. The
entire collection of bins forms a coset of the larger linear code $\lambda_{O}$. The message to be sent to the decoder indexes the bins. For this nested coset code, we let
$v^{n}(a^{k},m^{l}) \define a^{k}g_{I}\oplus m^{l}g_{O/I}\oplus b^{n}$ denote a generic
codeword in coset $c(m^{l}) \define \left\{  v^{n}(a^{k},m^{l}) \in \fieldq^{n}: a^{k}
\in \fieldq^{k}\right\}$. We refer to $c(m^{l})$ as the coset corresponding to message
$m^{l}$. The following is therefore a natural characterization of a nested coset PTP-STx
code.
\begin{definition}
 \label{Defn:CodeIsNestedLinear}
 A nested coset PTP-STx code $\left( n, \mathscr{M}, e,d \right)$ over $\fieldq$ is a PTP-STx code if there exists (i) a nested coset code
$\left( n,k,l,g_{I},g_{O/I},b^{n}\right)$ over $\fieldq$, ii) map $f:\fieldq
\times \setS \rightarrow
\setX$ and, (iii) a $1:1$ map $\iota : \setM \rightarrow \fieldq^{l}$ such that $e(m,
s^{n}) \in \left\{ f^{n} \left( a^{k}g_{I}\oplus \iota(m)g_{O
/ I}
\oplus b^{n},s^{n} \right) : a^{k} \in \fieldq^{k}\right\}$.
\end{definition}

\subsection{Achievability}
\label{SubSec:NestedCosetCodesAchieveCapacityOfArbitraryPointToPointChannels}
We now state and prove our first main finding  - nested coset PTP-STx codes
achieve $\mathbb{C}(\tau)$.

\begin{thm}
 \label{Thm:NestedCosetCodesAchieveCapacityOfPTP-STx}
For a PTP-STx $(\setS,W_{S},\setX,\kappa,\setY,W_{Y|XS})$, if $R \in \mathbb{C}(\tau)$, then there exists a sequence
$(n,\mathscr{M}^{(n)},e^{(n)},d^{(n)}):n \geq 1$ of nested coset PTP-STx codes over $\fieldq$ that
achieves $(R,\tau)$, where $q=\pi(\min \{  \left(|\setX|\cdot|\setS|\right)^{2},\left(|\setX|+|\setS|+|\setY|-2\right)\cdot|\setX|\cdot|\setS| \})$.
\end{thm}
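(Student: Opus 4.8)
The plan is to show that the Gelfand–Pinsker achievability argument can be run over the ensemble of nested coset codes, with the key point being that the finite field $\fieldq$ is large enough (by the choice $q = \pi(\cdot)$) that one can match any test channel $p_{VXSY} \in \mathbb{D}(\tau)$ while retaining the algebraic structure. Fix $R \in \mathbb{C}(\tau) = \alpha(\tau)$; it suffices to achieve $(R,\tau)$ for each extreme point of $\alpha(\tau)$, i.e.\ for each $p_{VXSY} \in \mathbb{D}(\tau)$ with $R < I(V;Y) - I(V;S)$, and then conclude by time-sharing (convexification). The essential range $\mathcal{R}(V)$ has cardinality at most $q$, so we may embed $\setV \hookrightarrow \fieldq$ and regard the auxiliary codeword alphabet as $\fieldq$; the deterministic map $f: \fieldq \times \setS \to \setX$ of Definition~\ref{Defn:CodeIsNestedLinear} is taken to agree with the conditional pmf $p_{X|VS}$ (which is $\{0,1\}$-valued by condition (iv) of Definition~\ref{Defn:CharacterizationOfTestChannelsForPTP-STx}) on the essential range, and defined arbitrarily off it.

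\emph{Code construction and random ensemble.} Pick rates $k/n$ and $l/n$ with $l/n \approx R$ (more precisely $\frac{l}{n} = \frac{1}{n}\log\mathscr{M}^{(n)}$) and $\frac{k+l}{n} \approx \log q - H(V) + \epsilon$, so that the outer code $\lambda_O$ is ``dense'' relative to the $V$-typical set and $\lambda_I$ has rate $\frac{k}{n} \approx \log q - H(V) - (I(V;Y)-I(V;S)) + \epsilon'$. Randomize over $g_I$, $g_{O/I}$ and the bias $b^n$ with i.i.d.\ uniform entries over $\fieldq$; this makes each codeword $v^n(a^k,m^l)$ marginally uniform on $\fieldq^n$, and makes any fixed finite collection of distinct codewords pairwise independent (the standard property of random coset codes). \emph{Encoding:} given message $m$ and state $s^n$, look for $a^k$ with $\big(v^n(a^k,\iota(m)), s^n\big)$ jointly $\delta$-typical w.r.t.\ $p_{VS}$; if found, transmit $f^n(v^n(a^k,\iota(m)), s^n)$, else declare encoding failure. \emph{Decoding:} given $y^n$, declare $\hat m$ if there is a unique $\hat m$ and some $\hat a^k$ with $\big(v^n(\hat a^k, \iota(\hat m)), y^n\big)$ jointly $2\delta$-typical w.r.t.\ $p_{VY}$.

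\emph{Error analysis.} Split into encoding error, cost violation, and decoding error. (a) \textbf{Covering:} the probability that no $a^k$ makes $(v^n(a^k,\iota(m)),s^n)$ typical is small once the number of codewords per bin, $q^k$, exceeds $\exp\{n(\log q - H(V|S) + \epsilon)\}$ — this is a second-moment / mutual-covering argument using pairwise independence of the $v^n(a^k,\iota(m))$ across $a^k$, together with Lemma~\ref{Lem:BoundsOnSizeOfConditionalTypicalSet} and the uniformity of each codeword on $\fieldq^n$. Here the crucial fact that $V$ with pmf $p_V$ has $H(V) \le \log q$ and that $p_{V}$ can be \emph{arbitrary} on $\mathcal{R}(V) \subseteq \fieldq$ is what nested coset codes buy us over plain linear codes. (b) \textbf{Cost:} conditioned on successful encoding, $(V^n,S^n)$ is typical, so $\bar\kappa^n(X^n,S^n)$ concentrates near $\Expectation\{\kappa(X,S)\} \le \tau$ by the typical-average-lemma consequence of Lemma~\ref{Lem:BoundsOnProbabilityOfTypicalSequence}; add a small slack. (c) \textbf{Decoding:} with $V^n$ the chosen auxiliary codeword, $(V^n,Y^n)$ is jointly typical w.h.p.\ by a Markov-lemma / conditional-typicality argument (Lemma~\ref{Lem:ConditionalTypicalSetOccursWithHighProbability}); for a competing message $m' \ne m$ and any $a'^k$, $v^n(a'^k,\iota(m'))$ is uniform on $\fieldq^n$ and independent of $Y^n$ (because $b^n$ plus $\iota(m')g_{O/I}$ is not in the span relevant to the transmitted codeword — distinctness of cosets), so the probability it is jointly typical with $y^n$ is at most $\exp\{-n(\log q - H(V|Y) - \epsilon)\}$; union-bounding over $q^{l-1}\cdot q^k \approx \exp\{n(\log q - H(V) + R' + \epsilon)\}$ such pairs with $R' \approx R$ forces $R < I(V;Y) - I(V;S)$ after the bin-rate $k/n$ is subtracted. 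Averaging shows a code in the ensemble meets all three requirements; expurgate if one wants maximal error, though average error suffices for Definition~\ref{Defn:PTP-STxAchievabilityAndCapacity}.

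\textbf{The main obstacle} is the decoding step's union bound: unlike i.i.d.\ random codes, the codewords $\{v^n(a'^k,\iota(m'))\}$ sharing a bin are \emph{not} mutually independent, only pairwise independent and each uniform, so one must be careful that (i) the relevant codeword is genuinely uniform and independent of $Y^n$ — this uses that $g_I$ has full rank w.h.p.\ and that distinct messages give affinely independent codewords, and (ii) the count of competing $(a'^k,m')$ pairs is handled by a plain union bound (pairwise independence is enough since we only need $\prob(\exists\,\text{bad pair})$, not a refined moment ber). A secondary subtlety is confirming that $q = \pi(\cdot)$ being a prime power makes $\fieldq$ a field (not merely a ring), which is needed for the rank/span arguments and for the two-codeword independence; this is exactly why $\pi(\cdot)$ rather than the raw cardinality bound appears in the statement. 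Once these are in place, convexifying over the finitely many test channels in $\mathbb{D}(\tau)$ by a standard time-sharing argument (splitting the block) recovers all of $\alpha(\tau) = \mathbb{C}(\tau)$.
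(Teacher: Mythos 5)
Your overall strategy coincides with the paper's: a random nested coset ensemble with uniform generator matrices and bias vector, a second-moment (pairwise-independence) covering argument inside the coset indexed by the message, conditional typicality for the legitimate codeword, and a union bound over competing codewords based on independence of a codeword in a different coset from the transmitted coset. Two points, however, need repair.

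First, your explicit parameter choices are inconsistent with the conditions you yourself impose later. You set the outer-code rate to $\log q - H(V) + \epsilon$ and the inner-code rate to $\log q - H(V) - (I(V;Y)-I(V;S)) + \epsilon'$, but your covering step (a) requires the bin rate to exceed $\log q - H(V|S)$; the bin rate you chose equals $\log q - H(V|S) - I(V;Y)$, i.e.\ it falls short of your own threshold by $I(V;Y)>0$, so with those parameters the encoder fails to find a typical codeword in the indexed coset. The correct choices, as in (\ref{Eqn:BinningRateOfNestedCosetPTP_STxCode})--(\ref{Eqn:SumRateOfNestedPTP_STxCosetCode}), are inner rate $\approx \log q - H(V|S)$ and outer rate $\approx \log q - H(V|Y)$, giving message rate $H(V|S)-H(V|Y)=I(V;Y)-I(V;S)$; correspondingly your union-bound count $\exp\{n(\log q - H(V)+R'+\epsilon)\}$ should read roughly $\exp\{n(\log q - H(V|Y))\}$. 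Since you do state the right covering and packing thresholds in (a) and (c), this is an arithmetic slip rather than a conceptual one, but as written the construction does not work.

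Second, the independence needed for the decoding union bound is stronger than independence of the competing codeword from the \emph{transmitted} codeword: the encoder's selection within $c(\iota(m))$ depends on the entire coset and on $S^{n}$, so one must show that $v^{n}(\hat{a}^{k},\iota(\hat{m}))$ for $\hat{m}\neq m$ is independent of the whole collection $\left\{ v^{n}(a^{k},\iota(m)) : a^{k}\in\fieldq^{k}\right\}$, and only then conclude independence from $Y^{n}$. The paper establishes exactly this by a deterministic counting argument (Lemma \ref{Lem:IndependenceOfACosetAndACodewordInADifferentCoset} and Remark \ref{Rem:IndependenceOfABinAndAVectorFromAnotherBin}); no full-rank condition on $G_{I}$ is needed or used, so your appeal to ``$g_{I}$ has full rank w.h.p.'' is a red herring. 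With these two repairs your argument matches the paper's proof.
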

\begin{proof}
 Consider any pmf $p_{VXSY} \in \mathbb{D}(\tau)$ and $\eta > 0$. We prove the existence
of a nested coset PTP-STx code $(n,\mathscr{M}^{(n)},e^{(n)},d^{(n)})$ of rate $\frac{\log
\mathscr{M}^{(n)}}{n} \geq I(V;Y)-I(V;S)-\eta$, average cost $\tau(e^{(n)}) \leq \tau +
\eta$ and average probability of error $\overline{\xi}(e^{(n)},d^{{n}}) \leq \eta$ for
every $n \in \naturals$ sufficiently large. The underlying finite field is of cardinality
$\pi(\min \{ 
\left(|\setX|\cdot|\setS|\right)^{2},
\left(|\setX|+|\setS|+|\setY|-2\right)\cdot|\setX|\cdot|\setS| \})$ referred to as $\pi$
for short.

We prove the existence by averaging the error probability over a specific ensemble of
nested coset PTP-STx codes. We begin with a description of a generic code in this
ensemble.

Consider a nested coset PTP-STx code $(n,k,l,g_{I},g_{O/I},b^{n})$, denoted $\lambda_{O}/\lambda_{I}$ with parameters
\begin{eqnarray}
 \label{Eqn:BinningRateOfNestedCosetPTP_STxCode}
 k &\define& \lceil n\left(1 -  \frac{H(V|S)}{\log \pi}+\frac{\eta}{8\log \pi}\right) \rceil \\
 \label{Eqn:SumRateOfNestedPTP_STxCosetCode}
 l &\define& \lfloor n\left(1 -  \frac{H(V|Y)}{\log \pi}-\frac{\eta}{8\log \pi}\right) \rfloor - k.
\end{eqnarray}
The reader is advised to bear in mind our notation is not reflective of $k$ and $l$
being functions of $n$. This abuse of notation reduces clutter. We specify encoding and decoding rules that map $\lambda_{O}/\lambda_{I}$ into a corresponding nested coset PTP-STx code.

The encoder is provided with nested coset code $\lambda_{O}/\lambda_{I}$. The message is
used to index one among $\pi^{l}$ cosets of $\lambda_{O}/\lambda_{I}$. For simplicity, we
assume that the set of messages
$\setM$ is $\setV^{l}$, and $M^{l} \in \setV^{l}$ to
be the uniformly distributed random variable representing user's message. The encoder observes the state sequence
$S^{n}$ and populates the list $L(M^{l},S^{n}) =
\left\{ v(a^{k},M^{l}):
(v(a^{k},M^{l}),S^{n}) \in
T_{\frac{\delta}{2}}(V,S),a^{k} \in \fieldq^{k}\right\}$ of codewords in the coset
corresponding
to the message that are jointly typical with the state sequence, where $\delta \define \frac{1}{2} \min \left\{ \frac{\eta}{48},\frac{\eta \log (|\setV||\setX||
\setS|| \setY|)}{\kappa_{\max}} \right\}$, $\kappa_{\max} \define \max \left\{ 
\kappa(x,s) : (x,s) \in \setX \times \setS\right\}$. If
$L(M^{l},S^{n})$ is empty, it picks a codeword uniformly
at random from coset $c(M^{l})$. Otherwise, it picks a codeword uniformly at
random from $L(M^{l},S^{n})$. Let
$V(A^{k},M^{l})$ denote the picked codeword in either case. The
encoder computes
$X^{n}(M^{l},S^{n}) \define
f^{n}(V^{n}(A^{k},M^{l}),S^{n})$, where $f : \setV \times
\setS \rightarrow \setX$ is any map that satisfies
$p_{X|VS}(f(v,s)|v,s)=1$ for all pairs $(v,s) \in
\setV \times \setS$.  $X^{n}(M^{l},S^{n})$ is fed as input to the channel.

The decoder observes the received vector $Y^{n}$ and populates the list \[D(Y^{n}) \define \left\{ m^{l} \in \setV^{l}: \exists v^{n}(a^{k},m^{l}) \mbox{ such that } (v^{n}(a^{k},m^{l}),Y^{n}) \in T_{\delta}(V,Y) \right\}.\] If $D(Y^{n})$ is a singleton, the decoder declares the content of $D(Y^{n})$ as the decoded message pair. Otherwise, it declares an error.

The above encoding and decoding rules map $\lambda_{O}/\lambda_{I}$ into a corresponding nested coset PTP-STx code $(n,\mathscr{M}^{n},e^{(n)},d^{(n)})$ of rate $\frac{\log \mathscr{M}^{(n)}}{n} = \frac{l\log \pi}{n}$. Observe that, for $n \geq N_{1}(\eta) \define  \lceil\frac{8\log \pi}{\eta}\rceil$, we have
\begin{eqnarray}
\label{Eqn:LowerBoundOnBinningRate}
n\left(1 -  \frac{H(V|S)}{\log \pi}+\frac{\eta}{8\log \pi}\right) \leq k &\leq&
n\left(1 -  \frac{H(V|S)}{\log \pi}+\frac{\eta}{8\log \pi}\right)+1\\
 \label{Eqn:BoundsOnBinningRate}
 &\leq& n\left(1 -  \frac{H(V|S)}{\log \pi}+\frac{\eta}{4\log \pi}\right), 
\end{eqnarray}
and similarly,
\begin{eqnarray}
\label{
Eqn:UpperAndLowerBoundsOnLargeCodeRate }
n\left(1 -  \frac{H(V|Y)}{\log \pi}-\frac{\eta}{8\log \pi}\right) \geq k+l &\geq&
n\left(1 - 
\frac{H(V|Y)}{\log \pi}-\frac{\eta}{8\log \pi}\right) -1\\
\label{Eqn:BoundLargeCode}
& \geq& n\left(1 - 
\frac{H(V|Y)}{\log \pi}-\frac{\eta}{4\log \pi}\right).
\end{eqnarray}
Combining the upper bound for $k$ in (\ref{Eqn:BoundsOnBinningRate}) and the lower bound
for $k+l$ in (\ref{Eqn:BoundLargeCode}), we get
\begin{equation}
\label{Eqn:LowerBoundOnRate}
\frac{l\log \pi}{n}\geq H(V|S)-H(V|Y)-\frac{\eta}{2} = I(V;Y) -
I(V;S)-\frac{\eta}{2}.
\end{equation}
Since $\lambda_{O}/\lambda_{I}$ was a generic nested coset code satisfying (\ref{Eqn:BinningRateOfNestedCosetPTP_STxCode}), (\ref{Eqn:SumRateOfNestedPTP_STxCosetCode}), we have characterized, through our encoding and decoding maps, an ensemble of nested coset PTP-STx codes, one for each $n \in \naturals$, $n \geq N_{1}(\eta)$ of rate at least $I(V;Y) -
I(V;S)-\frac{\eta}{2}$. It suffices to prove existence of a PTP-STx code $(n,\mathscr{M}^{(n)},e^{(n)},d^{(n)})$ in this ensemble, one for each $n\in \naturals$ sufficiently large, with average probability of error $\xi(e^{(n)},d^{(n)}) \leq \eta$ and average cost constraint $\tau(e^{(n)}) \leq \tau + \eta$. This is done by averaging $\xi(e^{(n)},d^{(n)})$ over the ensemble.

Consider a random nested coset code $(n,k,l,G_{I},G_{O/I},B^{n})$, denoted
$\Lambda_{O}/\Lambda_{I}$, with parameters $n,k,l$ satisfying
(\ref{Eqn:BinningRateOfNestedCosetPTP_STxCode}) and
(\ref{Eqn:SumRateOfNestedPTP_STxCosetCode}). Let $G_{I} \in \setV^{k \times n}$, $G_{O/I}
\in \setV^{l \times n}$ and bias vector $B^{n} \in \setV^{n}$ be mutually independent and
uniformly distributed on their respective range spaces. In the sequel, we study the
average probability of error $\xi(e^{(n)},d^{(n)})$ of the corresponding random nested
coset PTP-STx code. Towards this end, we begin with a few remarks on notation. Let
$V^{n}(a^{k},m^{l}) \define a^{k}G_{I} \oplus m^{l}G_{O/I} \oplus B^{n}$ denote a generic
codeword in coset $C(m^{l}) \define\left\{ V^{n}(a^{k},m^{l}) : a^{k} \in
\setV^{k}\right\}$ corresponding to message $m^{l}$.

In order to study $\xi(e^{(n)},d^{(n)})$, we need to characterize the error events
associated with the random nested coset PTP-STx code corresponding to
$\Lambda_{O}/\Lambda_{I}$. If $\epsilon_{1} \define \{ S^{n}
\notin T_{\frac{\delta}{4}}(S) \}$, $\epsilon_{2} \define \{ \phi_{\frac{\delta}{2}}(S^{n},M^{l}) =0
\}$, where $\phi_{\frac{\delta}{2}}(s^{n},m^{l}) \define \sum_{a^{k} \in
\setV^{k}}
1_{\{\left(V^{n}(a^{k},m^{l}), s^{n}\right) \in
T_{\frac{\delta}{2}}^{n}(VS)\}}$, then the error event at the encoder is contained in $\epsilon_{1} \cup \epsilon_{2}$. The error event at the decoder is contained in $\epsilon_{3}^{c} \cup \epsilon_{4}$, where $\epsilon_{3}\define \cup_{a^{k} \in \setV^{k}}\{ (V^{n}(a^{k},M^{l}),Y^{n}) \in T_{\delta}^{n}(V,Y) \}$ and $\epsilon_{4}\define\cup_{\hatm^{l} \neq M^{l}}\cup_{a^{k} \in \setV^{k}}\left\{\left(V^{n}(a^{k},\hatm^{l}), Y^{n}\right) \in T_{\delta}^{n}(V,Y)\right\}$. It suffices to derive an upper bound on $P(\epsilon_{1}) + P(\epsilon_{1}^{c} \cap \epsilon_{2} ) + P((\epsilon_{1} \cup \epsilon_{2})^{c} \cap \epsilon_{3}^{c}) + P(\epsilon_{4})$. In the sequel, we derive an upper bound on each term of the above sum.

Lemma \ref{Lem:ATypicalSetIsHighlyProbable} guarantees the existence of $N_{2}(\eta)
\in \naturals$\footnote{Since $\delta$ is a function of $\eta$, the dependence of
$N_{2}(\eta)$ on $\delta$ is captured through $\eta$.} such that
$\forall n \geq N_{2}(\eta)$, $P(\epsilon_{1}) \leq \frac{\eta}{16}$. In appendix
\ref{Sec:AnUpperBoundOnProbabilityofEpsilon1ComplementIntersectionEpsilon2}, we prove the
existence of $N_{3}(\eta) \in \naturals$, such that $\forall n \geq
N_{3}(\eta)$,
\begin{equation}
 \label{Eqn:UpperBoundOnEncoderErrorProbabilityDerivedInAppendix}
P(\epsilon_{1}^{c} \cap \epsilon_{2}) \leq \exp\left\{-n \log \pi
\left( \frac{k}{n} -
\left(1-\frac{H\left(V|S \right)}{\log \pi} +\frac{3\delta}{4\log \pi}\right)
\right)\right\}.
\end{equation}
Substituting the lower bound in (\ref{Eqn:LowerBoundOnBinningRate}) for $k$ in (\ref{Eqn:UpperBoundOnEncoderErrorProbabilityDerivedInAppendix}), for all $n \geq \max \left\{ N_{1}(\eta),N_{3}(\eta) \right\}$, we have
\begin{equation}
\label{Eqn:UpperBoundOnEncoderErrorProbabilityAfterSubstitutingForkByn}
P(\epsilon_{1}^{c} \cap \epsilon_{2}) \leq \exp\left\{-n \left( \frac{\eta}{8} -\frac{3\delta}{4}\right)\right\} \leq \exp \left\{ -n\left(\frac{7\eta}{64}\right) \right\},
\end{equation}
where the last inequality follows from the choice of $\delta$.

We now consider $P((\epsilon_{1} \cup \epsilon_{2})^{c} \cap \epsilon_{3}^{c})$. An informed reader will recognize that an upper bound on this term can be derived using a typical application of conditional frequency typicality lemma \ref{Lem:ConditionalTypicalSetOccursWithHighProbability}. For the sake of completeness we state the arguments.
The encoding rule ensures, $ \left(\epsilon_{1} \cup \epsilon_{2}\right)^{c}
\subseteq \{  (V^{n}(M^{l},S^{n}),S^{n}) \in
T_{\frac{\delta}{2}}^{n}(V,S) \}$, and thus
\begin{eqnarray} 
\lefteqn{P((\epsilon_{1} \cup \epsilon_{2})^{c} \cap \epsilon_{3}^{c}) \leq P\left(\left\{(V^{n}(M^{l},S^{n}),S^{n}) \in T_{\frac{\delta}{4}}^{n}(V,S)\right\}\cap \epsilon_{3}^{c}\right)}\nonumber\\
&\leq&\!\!\!\!\!\!\!\!\!\!\!\!\sum_{(v^{n},s^{n}) \in
T_{\frac{\delta}{2}}^{n}(V,S)}\!\!\!\!
P((V^{n}(M^{l},S^{n}),S^{n})=(v^{n},s^{n}))P\left(\epsilon_{3}^{c}|(V^{n}(M^{l},S^{n}),S^{
n})=(v^{n},s^{n})\right)\nonumber\\
\label{Eqn:ProbOfYnNotbeingTypicalExpresedUsingLawOfTotalProbability}
&\leq&\!\!\!\!\!\!\!\!\!\!\!\! \sum_{(v^{n},s^{n}) \in
T_{\frac{\delta}{2}}^{n}(V,S)}\!\!\!\!\!\!\!\!\!\!\!\!\!
P((V^{n}(M^{l},S^{n}),S^{n})=(v^{n},s^{n}))P\left( Y^{n} \notin
T_{\delta}(Y|v^{n},s^{n})|(V^{n}(M^{l},S^{n}),S^{n})=(v^{n},s^{n}) \right).
\end{eqnarray}
For any $(v^{n},s^{n}) \in T_{\frac{\delta}{2}}^{n}(V,S) $, note
that, 
\begin{eqnarray}
P\left(\substack{Y^{n}=y^{n},\\X^{n}(M^{l},S^{n})=x^{n}}|\substack{(V^{n}(M^{l},S^{n}),S^{
n } )\\=(v^ { n } ,
s^ {n} )}\right) &=& \prod_{i=1}^{n}
P\left(X_{i}=x_{i},Y_{i}=y_{i}|V_{i}=v_{i},S_{i}=s_{i}  \right)\nonumber
\end{eqnarray}
where the second equality follows from Markov chain $V-(X,S)-Y$. By lemma \ref{Lem:ConditionalTypicalSetOccursWithHighProbability}, there exists
 $N_{4}(\eta) \in \naturals$ such
that for all $ n \geq N_{4}(\eta)$
\begin{equation}
\label{Eqn:ProbOfInputAndOutputBeingTypicalConditionedOnNoErrorAtEncoder}
P((Y^{n},X^{n}(M^{l},S^{n})) \notin T_{\delta}^{n}(X,Y|v^{n},s^{n})
|(V^{n}(M^{l},S^{n}),S^{n})=(v^{n},
s^ { n } )) \leq \frac{\eta}{8}.
\end{equation}
Substituting (\ref{Eqn:ProbOfInputAndOutputBeingTypicalConditionedOnNoErrorAtEncoder}) in (\ref{Eqn:ProbOfYnNotbeingTypicalExpresedUsingLawOfTotalProbability}), we have $P((\epsilon_{1} \cup \epsilon_{2})^{c} \cap \epsilon_{3}) \leq \frac{\eta}{8}$ for all $n \geq N_{4}(\eta)$. It remains to provide an upper bound on $P(\epsilon_{4})$. In appendix
\ref{Sec:AnUpperBoundOnProbabilityofEpsilon4}, we prove the existence of $N_{5}(\eta)
\in \naturals$ such that $\forall n \geq
N_{5}(\eta)$,
$P((\epsilon_{1}\cup\epsilon_{2}\cup\epsilon_{3})^{c}\cap\epsilon_{4}) \leq
\exp \left\{ 
 -n \log \pi \left( 1- \frac{H(V|Y)}{\log \pi} - \frac{3\delta}{2 \log \pi} -
\frac{k+l}{n}\right) \right\}$. For $n \geq \max\left\{ N_{1}(\eta),N_{5}(\eta) \right\}$,
the upper bound for $k+l$ derived in (\ref{Eqn:BoundLargeCode}) is substituted to yield,
$P((\epsilon_{1}\cup\epsilon_{2}\cup\epsilon_{3})^{c}\cap\epsilon_{4}) \leq \exp \left\{
-n \left( \frac{\eta}{8}- \frac{3\delta}{2}\right) \right\}\leq \exp \left\{ -n \left(
\frac{3\eta}{32}\right) \right\}$.

We have therefore proved that for every $n \geq \max \left\{ N_{i}(\eta):i \in [5]
\right\}$, there exists at least one nested coset PTP-STx code $(n,\pi^{l},e,d)$ over $\fieldpi$ for which
$\bar{\xi}(e,d) \leq \frac{\eta}{8}+\exp \left\{ -n \frac{7\eta}{64} \right\}+\frac{\eta}{8}+\exp \left\{ -n\frac{3\eta}{32} \right\}$. For $n \geq \max \left\{ N_{i}(\eta):i \in [6] \right\}$, where $N_{6}(\eta)=\lceil \frac{32}{3\eta}\log \frac{8}{\eta}\rceil$, $\bar{\xi}(e,d) \leq \frac{\eta}{2}$. It only remains to prove this code satisfies the
average cost constraint. It can be verified that $\tau(e) \leq \frac{\eta}{2}
\kappa_{\max}+(1-\frac{\eta}{2})(\tau + \frac{\delta \kappa_{\max}}{2 \log
(|\setX||\setS|)})$. The choice of $\delta$ ensures that $\tau(e) \leq \frac{\eta}{2}
\kappa_{\max} + (\tau + \frac{\eta}{2})$. Since $\kappa_{\max} \in \reals$ is bounded,
this proves the existence of a sequence $(n,\pi^{l(n)},e^{(n)},d^{(n)}):n\geq 1$ of nested
coset PTP-STx codes that achieve $(R,\tau)$ for every $R \in \mathbb{C}(\tau)$.
\end{proof}

The codewords of $\Lambda_{O}$ being uniformly distributed over $\mathcal{F}_{\pi}^{n}$
(c.f. Lemma
\ref{Lem:CodewordsOfRandomLinearCodeUniformlyDistributedAndPairwiseIndependent}(i)), the
probability of it being jointly typical with a typical state sequence $s^{n}$
is $\frac{|T_{\delta}(U|S)|}{\pi^{n}} = \exp\{ n(H(U|S)-\log \pi) \}$. This
indicates that each coset must contain roughly $\frac{q^{n}}{|T_{\delta}(U|S)|}
= \frac{q^{n}}{q^{n(H(U|S))}}= q^{n(\log\pi-H(U|S))}$ codewords. Indeed, it
suffices to partition $\Lambda_{O}$ with a coset of rate $\frac{k}{n}
> 1 -\frac{H(U|S)}{\log\pi}$. $1-\frac{H(U|S)}{\log \pi}$ being in general larger than
$\frac{I(U;S)}{\log \pi}$, we conclude that the constraint of linearity forces us to
increase the rate of the binning code.

However, the sparsity of typical vectors in a random linear code comes to our rescue when
we attempt to pack cosets. The decoder looks for all vectors in the auxiliary code that
are jointly typical with the received vector $Y^{n}$. In unstructured random coding, since
each codeword is individually typical with high probability, the rate of auxiliary code
is bounded from above by $\frac{I(U;Y)}{\log \pi}$. The typical vectors being sparse in
random linear
code, a similar argument as above enables us to enlarge the auxiliary code to a rate
$1-\frac{H(U|Y)}{\log\pi}$. The rate of the code is thus $(1-\frac{H(U|Y)}{\log\pi}) -
(1-\frac{H(U|S)}{\log \pi}) = \frac{I(U;Y)-I(U;S)}{\log\pi}$.

We have thus proved nested coset codes achieve the capacity of arbitrary PTP-STx. The
interested reader is referred to \cite{201207ISIT_SahPraLattices} wherein nested lattice
codes are
proved to achieve capacity of arbitrary continuous point to point channels. In order to
achieve capacity of arbitrary continuous PTP-STx, it is necessary to construct lattices
which induce arbitrary test channels when employed for source quantization.
In a related work, Gariby and Erez \cite{200807ISIT_GarEre} construct lattices for source
coding of continuous sources that yield a family of quantization error distributions.

\section{MAC-DSTx: Definitions, largest known achievable rate region}
\label{Sec:MAC-DSTxDefinitionsLargestKnownAchievableRateRegion}
The rest of the article is aimed at deriving achievable rate regions for the MAC-DSTx. In this section, we lay the necessary groundwork. In particular, we describe MAC-DSTx and precisely state relevant notions such as code, achievability in section \ref{Subsec:DefinitionsMAC-DSTxAchievabilityAndCapacityRegion}. In section \ref{Subsec:LargestKnownAchievableRateRegionUsingUnstructuredCodes}, we provide a characterization of the rate region based on independent unstructured codes. We illustrate this rate region for BDD-MAC in section \ref{SubSec:UnstructuredCodesForBinaryDoublyDirtyMAC} and highlight the reasons for it's sub-optimality. This will set the stage for it's enlargement in subsequent sections.

\subsection{Definitions : MAC-DSTx, code and achievability}
\label{Subsec:DefinitionsMAC-DSTxAchievabilityAndCapacityRegion}
Consider the two user multiple access analogue of PTP-STx \cite{1980MMPCT_GelPin}. Let $\setX_{1}$
and $\setX_{2}$ denote finite input alphabet sets and $\setY$, the output
alphabet set. Transition probabilities depend on a random vector parameter
$\boldS\define(S_{1},S_{2})$, called state, that takes
values in a finite set $\setS\define \setS_{1} \times \setS_{2}$. The
discrete time channel is (i) time invariant, i.e., pmf of
$Y_{i}$, the output at time $i$, conditioned on inputs $\boldsymbol{X_{i}}\define
(X_{1i},X_{2i})$ and state $\mathbf{S_{i}}\define (S_{1i},S_{2i})$ at time $i$, is
invariant with $i$, (ii) memoryless, i.e., $Y_{i}$ is conditionally independent of
$(\boldsymbol{X_{t}},\boldsymbol{S_{t}}):1\leq t <i$ given
$\boldsymbol{X_{i}},\boldsymbol{S_{i}}$, and (iii) used without feedback. Let
$W_{Y|\boldX \boldS}(y|\boldx,\bolds)$ be the probability of observing $y \in \setY$ at the
output given $\boldx \define (x_{1},x_{2}) \in \setX\define\setX_{1} \times \setX_{2}$ is
input to the channel in state $\bolds \define (s_{1},s_{2}) \in \setS$.
The state at time $i$, $\boldSi$ is (i) independent of $(\boldSt,\boldXt,Y_{t}):1\leq t <
i$, and (ii) identically distributed for all $i$. Let
$W_{\boldS}(\bolds)$ be the probability of MAC-DSTx being in state
$\bolds\in\setS$. We
assume $S_{j}^{n}$ is non-causally known to encoder $j$. Input $X_{j}$ is constrained
with respect to a cost function $\kappa_{j}:\setX_{j}  \times \setS_{j} \rightarrow
[0,\infty)$. We assume that the cost is time-invariant and additive i.e., cost of input $X_{j}^{n}$
at input $j$ to the
channel in state $\boldS^{n}$ is $\bar{\kappa_{j}}^{n}(X_{j}^{n},S_{j}^{n}) \define
\frac{1}{n}\sum_{i=1}^{n}
\kappa_{j}(X_{ji},S_{ji})$. We refer to this channel
as MAC-DSTx $(\setS,W_{\boldS},\setX,\kappa,\setY,W_{
Y|\boldX ,\boldS } )$. Towards characterizing a new inner bound for the capacity region
of a MAC-DSTx, we begin with definitions of relevant notions such as achievability
and capacity.
\begin{definition}
\label{Defn:MAC-DSTxCode}
A MAC-DSTx code $\left( n, \mathscr{M}_{1},\mathscr{M}_{2}, e_{1},e_{2},d \right)$
consists of (i) index sets
$\mathcal{M}_{j}$ of messages, of cardinality $\mathscr{M}_{j}$ for $j=1,2$
(ii) encoder maps $e_{j} : \setM_{j} \times \setS_{j}^{n}
\rightarrow \setX_{j}^{n}$ for $j=1,2$, and (iii) a decoder
map $d : \setY^{n} \rightarrow \setM_{1}\times \setM_{2}$.
\end{definition}

We let $\mathscr{M}\define (\mathscr{M}_{1},\mathscr{M}_{2})$,
$\bolde\define(e_{1},e_{2})$ and refer to above as MAC-DSTx code
$(n,\mathscr{M},\bolde,d)$. Assuming the pair of messages to be uniformly distributed, we
define the average error probability and the cost of a MAC-DSTx code as follows.
\begin{definition}
\label{Defn:MAC-DSTxErrorProbability}
The average error probability of MAC-DSTx code $(n,\mathscr{M},\bolde,d)$
conditioned on message $\boldm \define (m_{1},m_{2}) \in \setM\define\setM_{1}
\times \setM_{2}$ is
\begin{eqnarray}\xi(\bolde,d|\boldm) \define \sum_{\bolds^{n} \in
\setS^{n}}W_{\boldS^{n}}(\bolds^{n}) \sum_{y^{n}
: d( y^{n}  )\neq \boldm} W_{Y^{n}|\boldX^{n},\boldS^{n}}( y^{n} |
e_{1}(m_{1},s_{1}^{n}),e_{2}(m_{2},s_{2}^{n}),\bolds^{n}
).\nonumber\end{eqnarray} The average error probability is $\bar{\xi}(\bolde,d)
\define \sum_{\boldm \in \setM}
\frac{1}{\mathscr{M}_{1}\mathscr{M}_{2}}
\xi(\bolde,d|\boldm)$. The average cost of transmitting message pair $\boldm$ is
$\tau(\bolde|\boldm) \define (\tau_{1}(e_{1}|m_{1}),\tau_{2}(e_{2}|m_{2}))$, where
\[
 \tau_{j}(e_{j}|m_{j})\define \sum_{s_{j}^{n} \in
\setS_{j}^{n}}W_{S_{j}^{n}}(s_{j}^{n})\bar{\kappa_{j}}^{n}(e_{j}(m_{j},s_{j}^{n}),s_{j}^{n
} ).
\]
The average cost of the code is $\tau(\bolde) \define \sum_{\boldm \in \setM}
\frac{1}{\mathscr{M}_{1}\mathscr{M}_{2}}
\tau(\bolde|\boldm)$, where $\tau(\bolde) = (\tau(e_{1}),\tau(e_{2}))$.
\end{definition}
\begin{definition}
\label{Defn:MAC-DSTxAchievabilityAndCapacity}
A rate cost quadruple $(\boldR,\boldTau) \in [0,\infty)^{4}$ is
achievable if for
every $\eta > 0$, there exists $N(\eta)\in \mathbb{N}$ such that for
all $n > N(\eta)$, there exists a MAC-DSTx code
$(n,\mathcal{M}^{(n)},\bolde^{(n)},d^{(n)})$
such that (i)
$\frac{\log\mathcal{M}_{j}^{(n)}}{n} \geq R_{j}-\eta$ for $j=1,2$, (ii)
$\bar{\xi}(\bolde^{(n)},d^{(n)})
\leq \eta$, and (iii)
$ \tau_{j}(e_{j}^{(n)}) \leq \tau_{j}+ \eta$, for $j=1,2$. The
capacity region $\mathbb{C}(\boldTau) \define \cocl\left(\left\{ \boldR \in
[0,\infty)^{2}: (\boldR,\boldTau)\mbox{ is achievable}\right\}\right)$.\begin{comment}{ A
sequence of
MAC-DSTx codes $(n, \mathcal{M}^{(n)}, \bolde^{(n)},d^{(n)}):n \geq 1$ achieves
 $(\boldR,\boldTau)$ if $\lim_{n \rightarrow \infty }
 \frac{\log\mathcal{M}_{j}^{(n)}}{n} \geq R_{j}$, $\lim_{n \rightarrow \infty
}\bar{\xi}(\bolde^{(n)},d^{(n)}) = 0$ and $\lim_{n \rightarrow \infty
} \tau_{j}(e_{j}^{(n)}) 
\leq \tau_{j}$ for $j=1,2$.}\end{comment}
\end{definition}
The coding technique that achieves capacity of PTP-STx \cite{1980MMPCT_GelPin} can
be generalized to obtain an achievable rate region for MAC-DSTx. For a general MAC-DSTx
this is the largest known inner bound to $\mathbb{C}(\boldTau)$. We provide a
characterization of the same in the following section.

\subsection{Largest known achievable rate region using unstructured codes}
\label{Subsec:LargestKnownAchievableRateRegionUsingUnstructuredCodes}

\begin{definition}
\label{Defn:CharacterizationOfTestChannelsForMAC-DSTx}
Let
$\mathbb{D}(\boldTau)$ be collection of pmfs
$p_{\boldU\boldX\boldS Y}$ on
$\setU^{2}\times\setX\times\setS\times\setY$, where $\boldU$
denotes $U_{1},U_{2}$ and $\setU^{2}$ is a two fold Cartesian product of a finite set
$\setU$, such that
(i) $p_{\boldS} = W_{\boldS}$, (ii) $p_{Y|\boldX\boldS\boldU}=p_{Y|\boldX\boldS}=
W_{Y|\boldX\boldS}$, (iii) $p_{U_{j}|\boldS U_{\msout{j}}}=p_{U_{j}|\boldS} =
p_{U_{j}|S_{j}}$ and $p_{X_{j}|\boldS\boldU X_{\msout{j}}}=p_{X_{j}|\boldS
\boldU}=p_{X_{j}|S_{j}U_{j}}$ for any distinct elements $j,\msout{j} \in \{ 1,2 \}$, (iv)
$p_{X_{j}|S_{j}U_{j}}(x_{j}|s_{j},u_{j}) \in \left\{0,1\right\}$ for all
$(u_{j},s_{j},x_{j}), j=1,2$ and (v)
$\mathbb{E}\left\{{\kappa}_{j}(X_{j},S_{j})\right\}
\leq \tau_{j}$ for $j=1,2$. For
$p_{\boldU\boldX\boldS Y} \in \mathbb{D}(\boldTau)$, let $\alpha(p_{\boldU\boldX\boldS Y})$ be
defined as the set
\begin{eqnarray}
\label{Eqn:LargestSingleLetterAchievableRegionForMAC-DSTx}
\left\{  \begin{array}{ll} (R_{1},R_{2}) \in [0,\infty)^{2}:&R_{1}
\leq I(U_{1};YU_{2})-I(U_{1};S_{1}), R_{2} \leq
I(U_{2};YU_{1})-I(U_{2};S_{2}), \\&R_{1}+R_{2}\leq
I(\boldU;Y)+ I(U_{1};U_{2}) -
\sum_{j=1}^{2}I(U_{j};S_{j})
\end{array}\right\}
\nonumber\end{eqnarray} and
\begin{equation}
\label{Eqn:CharacterizationOfCapacityRegionOfMAC-DSTxThatIsNotComputable}
\alpha(\boldTau)
\define \cocl\left(
\underset{p_{\boldU\boldX\boldS Y}\in
\mathbb{D}(\tau)}{\bigcup}\alpha(p_{\boldU\boldX\boldS
Y})\right).\nonumber \end{equation}
\end{definition}

\begin{thm}
\label{Thm:CapacityOfMAC-DSTx}
$\alpha(\boldTau)\subseteq\mathbb{C}(\boldTau)$.
\end{thm}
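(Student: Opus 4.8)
The plan is to prove Theorem~\ref{Thm:CapacityOfMAC-DSTx} by a standard random-coding argument that is a two-user, binning generalization of the Gelfand--Pinsker scheme, closely paralleling the single-user proof of Theorem~\ref{Thm:NestedCosetCodesAchieveCapacityOfPTP-STx} but with \emph{mutually independent unstructured} codebooks in place of nested coset codes. Fix $p_{\boldU\boldX\boldS Y}\in\mathbb{D}(\boldTau)$ and $\eta>0$; it suffices to show every rate pair in the interior of $\alpha(p_{\boldU\boldX\boldS Y})$ is achievable, and then take convex closure. For $j=1,2$, generate an auxiliary codebook of $\lfloor\exp\{n\tilde R_j\}\rfloor$ codewords $U_j^n(w_j)$, $w_j\in[\exp\{n\tilde R_j\}]$, i.i.d.\ according to $p_{U_j}^n$, partitioned uniformly at random into $\mathscr M_j^{(n)}\doteq\exp\{nR_j\}$ bins; reveal the codebooks to both encoders and the decoder. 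Encoder~$j$, on observing message $m_j$ and state $S_j^n$, looks for a codeword $U_j^n(w_j)$ in bin $m_j$ with $(U_j^n(w_j),S_j^n)\in T_{\delta'}(U_j,S_j)$ (choosing $\delta'$ a small multiple of the eventual $\delta$ as in the PTP proof), picks one such $w_j$ (uniformly among them, or arbitrarily if none exists), sets $X_j^n=f_j^n(U_j^n(w_j),S_j^n)$ using the deterministic map from condition~(iv) of Definition~\ref{Defn:CharacterizationOfTestChannelsForMAC-DSTx}, and transmits. The decoder, on $Y^n$, declares $(\hatm_1,\hatm_2)$ if there is a \emph{unique} bin pair containing codewords $(U_1^n(w_1),U_2^n(w_2))$ with $(U_1^n(w_1),U_2^n(w_2),Y^n)\in T_\delta(U_1,U_2,Y)$.

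The analysis splits into an encoding error and a decoding error, exactly as in the proof of Theorem~\ref{Thm:NestedCosetCodesAchieveCapacityOfPTP-STx}. For the encoding step: by Lemma~\ref{Lem:ATypicalSetIsHighlyProbable}, $S_j^n\in T_{\delta'/4}(S_j)$ with probability $\to1$; and by a covering/second-moment argument (the same computation yielding (\ref{Eqn:UpperBoundOnEncoderErrorProbabilityDerivedInAppendix}) in the PTP case, now with i.i.d.\ rather than coset codewords, so the second-moment bound is even more routine), the bin of $m_j$ contains no $S_j^n$-jointly-typical codeword with vanishing probability provided the per-bin rate $\tilde R_j-R_j>I(U_j;S_j)$. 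Conditioned on no encoding error, standard conditional typicality (Markov chain $\boldU-(\boldX,\boldS)-Y$, Lemma~\ref{Lem:ConditionalTypicalSetOccursWithHighProbability}) gives $(U_1^n(W_1),U_2^n(W_2),\boldS^n,\boldX^n,Y^n)$ jointly typical with high probability, handling the ``missed the true codewords'' event. For the decoding step, the error event is a union over $(\hatm_1,\hatm_2)\neq(m_1,m_2)$ and over codeword indices in those bins of the event that a wrong tuple is jointly typical with $Y^n$; splitting into the three cases (only $m_1$ wrong, only $m_2$ wrong, both wrong) and applying the joint-typicality lemma together with the size bounds of Lemmas~\ref{Lem:BoundsOnSizeOfTypicalSet}--\ref{Lem:BoundsOnSizeOfConditionalTypicalSet}, each term vanishes provided, respectively,
\begin{align}
\tilde R_1 - I(U_1;YU_2) &< R_1,\nonumber\\
\tilde R_2 - I(U_2;YU_1) &< R_2,\nonumber\\
\tilde R_1 + \tilde R_2 - I(U_1U_2;Y) - I(U_1;U_2) &< R_1 + R_2.\nonumber
\end{align}
Here the mutual independence of the two codebooks is what produces the $I(U_1;U_2)$ penalty in the sum-rate: a wrong pair $(U_1^n(\hat w_1),U_2^n(\hat w_2))$ is distributed as $p_{U_1}^n\times p_{U_2}^n$, whose probability of landing in $T_\delta(U_1,U_2,Y)$ jointly with $Y^n$ is $\exp\{-n(I(U_1U_2;Y)+I(U_1;U_2))\}$ up to $\delta$-slack.

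Finally, eliminating the auxiliary rates $\tilde R_j$: we need $\tilde R_j - R_j > I(U_j;S_j)$ (encoding) and the three decoding inequalities above; choosing $\tilde R_j = R_j + I(U_j;S_j) + \zeta$ for small $\zeta>0$ and substituting shows all constraints are met precisely when $(R_1,R_2)$ lies in the interior of the region $\alpha(p_{\boldU\boldX\boldS Y})$ of (\ref{Eqn:LargestSingleLetterAchievableRegionForMAC-DSTx}). Averaging over the random ensemble, there exists a deterministic code with $\bar\xi(\bolde^{(n)},d^{(n)})\le\eta$; the cost constraint $\tau_j(e_j^{(n)})\le\tau_j+\eta$ is handled exactly as in the PTP proof (typical $U_j^n$ relative to $S_j^n$ forces the empirical cost close to $\Expectation\{\kappa_j(X_j,S_j)\}\le\tau_j$, with a $\kappa_{j,\max}$-bounded contribution from the rare atypical event whose probability is made $O(\eta)$). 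Taking $\cocl$ over all $p_{\boldU\boldX\boldS Y}\in\mathbb{D}(\boldTau)$ gives $\alpha(\boldTau)\subseteq\mathbb{C}(\boldTau)$. The main obstacle, such as it is, is bookkeeping: carefully coupling the encoding-step typicality parameter $\delta'$ to the decoding parameter $\delta$ so that the ``no encoding error $\Rightarrow$ everything jointly $\delta$-typical'' implication goes through uniformly in $n$, and organizing the three-way decoding union bound; none of this is conceptually new relative to Theorem~\ref{Thm:NestedCosetCodesAchieveCapacityOfPTP-STx}, which is presumably why the authors may state it without a full proof.
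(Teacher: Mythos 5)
Your proposal is correct in substance and follows exactly the route the paper itself invokes for Theorem~\ref{Thm:CapacityOfMAC-DSTx}: the paper gives no detailed proof, only the remark that achievability follows from Gelfand--Pinsker binning at each encoder combined with Ahlswede--Liao joint typicality decoding over mutually independent unstructured codebooks, which is precisely your construction, and the $I(U_1;U_2)$ term in the sum rate arises, as you say, because a fresh pair drawn from $p_{U_1}^n\times p_{U_2}^n$ is jointly typical with $Y^{n}$ with probability about $\exp\{-n(I(U_1U_2;Y)+I(U_1;U_2))\}$.

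One slip should be fixed: the three displayed decoding conditions ought to read $\tilde R_1<I(U_1;YU_2)$, $\tilde R_2<I(U_2;YU_1)$, and $\tilde R_1+\tilde R_2<I(U_1U_2;Y)+I(U_1;U_2)$; that is, the right-hand sides should be $0$ rather than $R_1$, $R_2$, $R_1+R_2$. As written, substituting $\tilde R_j=R_j+I(U_j;S_j)+\zeta$ makes $R_j$ cancel out of the individual-rate conditions, so they cannot recover the bounds of $\alpha(p_{\boldU\boldX\boldS Y})$; with the corrected form the substitution yields exactly the region in Definition~\ref{Defn:CharacterizationOfTestChannelsForMAC-DSTx}. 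The count behind the corrected form: an erroneous $\hatm_1$ can be declared only if one of the roughly $\exp\{n\tilde R_1\}$ codewords lying outside bin $m_1$ is jointly typical with $(U_2^n,Y^n)$, so it is the full codebook rate $\tilde R_1$, not the per-bin rate, that must stay below $I(U_1;YU_2)$. Everything else---the second-moment covering bound at the encoders, the conditional-typicality step via the Markov chain $\boldU-(\boldX,\boldS)-Y$, and the cost accounting---is as in the proof of Theorem~\ref{Thm:NestedCosetCodesAchieveCapacityOfPTP-STx}.
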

Achievability of $\alpha(p_{\boldU \boldX \boldS Y})$ can be proved by employing the
encoding technique proposed by Gelfand and Pinsker \cite{1980MMPCT_GelPin} at each encoder
and joint decoding proposed by Ahlswede \cite{197109ISIT_Ahl}, Liao \cite{1972MMISIT_Lia}.
In the sequel, we provide an illustration of this coding technique for BDD-MAC.
\subsection{Rate region achievable using unstructured codes for BDD-MAC}
\label{SubSec:UnstructuredCodesForBinaryDoublyDirtyMAC}
Philosof and Zamir characterize $\mathbb{C}(\boldTau)$ for BDD-MAC using PZ-technique
and prove $\alpha(\boldTau) \subsetneq \mathbb{C}(\boldTau)$ for the same.
In order to identify the key elements of PZ-technique, we briefly analyze unstructured
coding (this section), PZ-technique (section
\ref{SubSec:NestedLinearCodesForBinaryDoublyDirtyMAC}) and set the stage for a new coding
scheme.

BDD-MAC is a MAC-DSTx with binary alphabets $\StateAlphabet_{j}=\InputAlphabet_{j}=\OutputAlphabet=\{ 0,1\}$, $j=1,2$. The state sequences are independent Bernoulli-$\frac{1}{2}$ processes, i.e., $W_{\boldS}(\bolds) = \frac{1}{4}$ for all $\bolds \in \StateAlphabet$. The channel transition is described by the relation $Y=X_{1} \oplus_{2} S_{1} \oplus_{2} X_{2} \oplus_{2} S_{2}$. An additive Hamming cost is assumed on the input, i.e., $\kappa_{j}(1,s_{j})=1$ and $\kappa_{j}(0,s_{j})=0$ for any $s_{j} \in \StateAlphabet_{j}$, $j=1,2$ and the input is subject to a symmetric cost constraint $\boldTau=(\tau,\tau)$.

We describe the test channel $p_{\boldU \boldS \boldX Y} \in \mathbb{D}(\boldTau)$ that
achieves $\alpha(\boldTau)$. For each user $j$, consider the test channel that achieves
the Gelfand-Pinsker capacity treating the other user as noise i.e.,
$p_{U_{j}S_{j}X_{j}}(0,1,1)=p_{U_{j}S_{j}X_{k}}(1,0,1)=\frac{\tau}{2},~p_{U_{j}S_{j}X_{j}}
(0,0,0)=p_{U_{j}S_{j}X_{j}}(1,1,0)=\frac{1-\tau}{2}$. Philosof and Zamir prove $p_{\boldU
\boldS \boldX}=p_{U_{1}S_{1}X_{1}}p_{U_{2}S_{2}X_{2}}$ achieves $\alpha(\boldTau) =
\left\{ \boldR: R_{1}+R_{2}\leq |2h_{b}(\tau)-1|^{+} \right\}$, where $|\cdot|^{+}$
denotes upper convex envelope.

Let us take a closer look at achievability of the vertex $(2h_{b}(\tau)-1,0)$ using the
above test channel. Since user $2$ has no message to transmit, it picks a single bin with
roughly $2^{nI(U_{2};S_{2})}=2^{n(1-h_{b}(\tau))}$ codewords independently and uniformly
from the entire space of binary vectors. User $1$ picks $2^{nR_{1}}$ bins each with
roughly $2^{nI(U_{1};S_{1})}=2^{n(1-h_{b}(\tau))}$ independently and uniformly distributed
binary vectors. Encoder $2$ observes $S_{2}^{n}$ and chooses a codeword, say $U_{2}^{n}$,
that is within a Hamming distance of roughly $n\tau$ from $S_{2}^{n}$ and transmits
$X_{2}^{n}=U_{2}^{n} \oplus_{2} S_{2}^{n}$. Encoder $1$ performs a similar encoding,
except that it restricts the choice of $U_{1}^{n}$ to the bin indexed by user 1's message,
and transmits $X_{1}^{n}=U_{1}^{n} \oplus_{2} S_{1}^{n}$.

What is the maximum rate $R_{1}$ at which user $1$ can transmit it's message? Decoder
receives $Y^{n}=U_{1}^{n}\oplus_{2}U_{2}^{n}$ and looks for all pairs of codewords that
are jointly typical with $Y^{n}$. Since any pair of binary $n-$length vectors are jointly
typical ($U_{1}$ and $U_{2}$ are independent and uniform), the decoding rule reduces to
finding all pairs of binary $n-$length vectors in the pair of codebooks that sum to the
received vector $Y^{n}$. All bins chosen independently without structure imply that any
bin of user $1$'s codebook when added to the user $2$'s codebook (a single bin) results in
roughly $2^{n(2-2h_{b}(\tau))}$ distinct vectors. Therefore, we cannot hope to pack more
than roughly $\frac{2^{n}}{2^{n(2-2h_{b}(q))}}=2^{n(2h_{b}(q)-1)}$ bins in user $1$'s
codebook. We remark that \textit{an explosion in the range of sum of transmitted codewords
severely limits achievable rate}.

We make a few observations. Effectively, communication occurs over the $(U_{1},U_{2})-Y$
channel and the test channel induces the Markov chain
$(U_{1},U_{2})-U_{1}\oplus_{2}U_{2}-Y$. It would therefore be more efficient to
communicate information over the $U_{1}\oplus_{2}U_{2}-Y$ channel which suggests an
efficient utilization of $U_{1}\oplus_{2}U_{2}-$space. Having chosen codewords in each bin
independently and moreover the two users' bins independently, each message pair utilizes
$2^{n(2-2h_{b}(\tau))}$ vectors in the $U_{1}\oplus_{2}U_{2}-$space. In section
\ref{SubSec:NestedLinearCodesForBinaryDoublyDirtyMAC}, we summarize PZ-technique, wherein
the algebraic structure in the codebooks is exploited for more efficient utilization of
$U_{1}\oplus_{2}U_{2}-$space.

\begin{comment}{
To enable us contrast the coding technique proposed herein, let us briefly recollect the above coding technique. In order to achieve the vertex $(I(U_{1};YU_{2})-I(U_{1};S_{1}),I(U_{2};Y_{2})-I(U_{2};S_{2}))$, encoder $1$ and $2$ build channel codes of rate $I(U_{1};YU_{2})$, $I(U_{2};Y_{2})$ and partition them into bins of rate $I(U_{1};S_{1})$ and $I(U_{2};S_{2})$ respectively. Each encoder picks a codeword from the bin indexed by the message that is jointly typical with the state sequence and transmits a function of the chosen codeword and the state sequence. The decoder looks for all pairs of codewords jointly typical with the received sequence and declares the unique pair of bin indices, if any, as the decoded pair of messages. Since the decoder decodes the pair of codewords $(U_{1}^{n},U_{2}^{n})$, in effect it decodes from the Cartesian product of the two codebooks. We note that each message pair corresponds to a collection of $\exp\left\{n(I(U_{1};S_{1})+I(U_{2};S_{2}))\right\}$ codewords. Any 
reduction in the (exponent of the) size of this collection will result in a larger achievable rate pair. Since the only way to disambiguate the pair of messages using traditional coding technique is by joint decoding of the pair of codewords, we are forced to pay this price. In the following section, we study the use of structured codebook in minimizing this cost.}\end{comment}

\section{An achievable rate region using nested coset codes}
\label{Sec:AnAchievableRateRegionUsingNestedCosetCodes}

\subsection{Nested linear codes for BDD-MAC}
\label{SubSec:NestedLinearCodesForBinaryDoublyDirtyMAC}

We present PZ-technique proposed for BDD-MAC. The encoding and decoding techniques are
similar to that stated in \ref{SubSec:UnstructuredCodesForBinaryDoublyDirtyMAC} except for
one key difference. The bins of user $1$ and $2$'s codebooks are cosets of a common linear
code. In particular, let $\lambda_{I}$ denote a linear code of rate roughly equal to
$1-h_{b}(\tau)$ that can quantize a uniform source, state $S_{j}^{n}$ in our case, within
an average Hamming distortion of $\tau$. Since user $2$ has no message to transmit, it
employs $\lambda_{I}$ as it's only bin. Encoder $1$ employs $2^{nR_{1}}$ cosets of
$\lambda_{I}$ within a larger linear code, called $\lambda_{O}$, as it's bins. Note that
rate of $\lambda_{O}$ is roughly $R_{1}+1-h_{b}(\tau)$. Encoding rule is as described in
section \ref{SubSec:UnstructuredCodesForBinaryDoublyDirtyMAC}.

The codebook of user $2$ when added to any bin of user $1$'s code results in a coset of
$\lambda_{I}$, and therefore contains approximately at most $2^{n(1-h_{b}(\tau))}$
codewords. Moreover, since $U_{1}^{n}$ lies in $\lambda_{I}$, user $2$'s codeword
$U_{2}^{n}$ and the received vector $Y^{n}=U_{1}^{n} \oplus_{2} U_{2}^{n}$ lie in the
same coset.\footnote{This is also because the channel is noiseless.} Since the channel is
noiseless, user $1$ may employ all cosets of $\lambda_{I}$ and therefore communicate at
rate $h_{b}(\tau)$ which is larger than $2h_{b}(\tau)-1$ for all $\tau \in
(0,\frac{1}{2})$.

Let us identify key elements of PZ-technique. Each message pair corresponds to roughly
$2^{n(1-h_{b}(\tau))}$ vectors in $U_{1}\oplus_{2} U_{2}-$space, resulting in a more
efficient utilization of this space. This indeed is the difference in the sum
rate achievable using independent unstructured codes and PZ-technique. We also note the
\textit{decoder does not attempt to disambiguate the pair $(U_{1}^{n},U_{2}^{n})$ and
restricts to decoding $U_{1}^{n}\oplus_{2}U_{2}^{n}$. This is motivated by the Markov
chain $(U_{1},U_{2})-U_{1}\oplus_{2}U_{2}-Y$ induced by the test channel and the use of
structured codebooks that contain the sum}.

It is instructive to investigate the efficacy of this technique if users $1$ and $2$
employ distinct linear codes $\lambda_{I1},\lambda_{I2}$ of rate $1-h_{b}(\tau)$ instead
of a common linear code $\lambda_{I}$. In this case, each message of user $1$ can result
in $2^{2-2h_{b}(\tau)}$ received vectors which restricts user $1$'s rate to $2h_{b}(q)-1$
and provides no improvement over the unstructured coding technique. We conclude that
\textit{if the bins of the MAC channel code are nontrivial, as in this case due to the
presence of a state, then it maybe beneficial to endow the bins with an algebraic
structure that restricts the range of a bivariate function, and enable the decoder decode
this function of chosen codewords}.

\subsection{Stage I : An achievable rate region for MAC-DSTx using nested coset codes}
\label{SubSec:AnAchievableRateRegionForArbitraryMAC-DSTxUsingNestedCosetCodes}

In this section, we present the first stage of our coding scheme that uses joint typical
encoding and decoding and nested coset codes over an
arbitrary MAC-DSTx. The technique proposed by Philosof and Zamir is specific to the
binary doubly dirty MAC - Hamming cost constraint that induces additive test channels
between the auxiliary and state random variables, and additive and symmetric nature of the
channel. Moreover, linear codes only
achieve the symmetric capacity, and therefore if the output were obtained by passing
$(X_{1}^{n}\oplus_{2}S_{1}^{n},X_{2}^{n}\oplus_{2}S_{2}^{n})$ through an asymmetric MAC,
linear codes though applicable, might not be optimal.

We begin with a characterization of test channels followed by achievability.

\begin{definition}
\label{Defn:TestChannelsRestrictedToFiniteFields}
Let $\mathbb{D}_{f}(\boldTau)
\subseteq \mathbb{D}(\boldTau)$ be the collection of
distributions $p_{\boldV \boldS\boldX  Y}$ on $\mathcal{V}^{2}\times\mathcal{S}\times
\mathcal{X}\times\mathcal{Y}$ where $\mathcal{V}$ is a finite field. For $p_{\boldV \boldX
\boldS Y} \in
\mathbb{D}_{f}(\boldTau)$, let $\beta_{f}(p_{\boldV \boldX \boldS Y})$
be defined as the set
\begin{eqnarray}
\label{Eqn:CharacterizationOfAchievableRateRegionForMACDSTx}
\left\{
\begin{array}{l}\!\!\!\!(R_{1},R_{2}) \in [0,\infty)^{2}\!:\!
R_{1}\!+\!R_{2} \!\leq\! \min\left\{
H(V_{1}|S_{1}),H(V_{2}|S_{2})\right\}\!\!
-H(V_{1}\oplus
V_{2}|Y)\end{array}\!\!
\right\}.\end{eqnarray} Let
\begin{equation}
 \beta_{f}(\boldTau) \define \cocl \left( \underset{p_{\boldV\boldX\boldS Y} \in
\mathbb{D}_{f}(\boldTau)}{\bigcup} \beta_{f}(p_{\boldV\boldX\boldS
Y})  \right)
 \nonumber
\end{equation}
\end{definition}

\begin{thm}
 \label{Thm:AchievableRateRegionUsingNestedCosetCodes}
$\beta_{f}(\boldTau) \subseteq \mathbb{C}(\boldTau)$.
\end{thm}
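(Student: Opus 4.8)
The plan is to prove achievability of $\beta_f(p_{\boldV\boldS\boldX Y})$ for each fixed test channel $p_{\boldV\boldS\boldX Y} \in \mathbb{D}_f(\boldTau)$, and then the full region $\beta_f(\boldTau)$ follows by the usual time-sharing / convex-hull argument together with a closure argument. Fix such a test channel over the finite field $\mathcal{V} = \fieldq$, fix a target rate pair $(R_1,R_2)$ with $R_1+R_2 < \min\{H(V_1|S_1),H(V_2|S_2)\} - H(V_1\oplus V_2|Y)$, and fix $\eta>0$. The key structural idea, generalizing the PZ-technique and the nested coset PTP-STx construction of Theorem \ref{Thm:NestedCosetCodesAchieveCapacityOfPTP-STx}, is that each encoder $j$ is handed a nested coset code $\Lambda_{O,j}/\Lambda_{I,j}$ with the crucial feature that the two \emph{outer} codes $\Lambda_{O,1},\Lambda_{O,2}$ share a \emph{common generator matrix} $G$ (so they differ only in their bias vectors $B_1^n,B_2^n$), while the inner/binning cosets are carved out of them. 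Concretely I would take a common $G \in \fieldq^{\mu \times n}$ with $\mu/n \approx \max\{1 - H(V_1|S_1)/\log q,\ 1 - H(V_2|S_2)/\log q\} + o(1)$ decomposed so that the first $k_j$ rows generate $\Lambda_{I,j}$ (the binning code for user $j$, with rate slightly above $1 - H(V_j|S_j)/\log q$ so that Gelfand--Pinsker-style covering of $S_j^n$ succeeds) and the message of user $j$ indexes the remaining cosets. Because $G$ is common, $V_1^n(a_1,m_1) \oplus V_2^n(a_2,m_2) = (a_1\oplus a_2)G' \oplus (\text{message/bias terms})$ lies in a single coset of the common code generated by $G$, so the \emph{sum} $V_1^n \oplus V_2^n$ ranges over only $\approx q^{\mu}$ values rather than $q^{2\mu}$ — this is exactly the "containment of the range of the sum" that the introduction flagged, and it is what lets the decoder pack enough codewords.

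The encoding at user $j$ is the joint-typicality (Gelfand--Pinsker) encoder from the proof of Theorem \ref{Thm:NestedCosetCodesAchieveCapacityOfPTP-STx}: observe $S_j^n$, form the list of $V_j^n(a_j,m_j)$ in the message coset that are $\frac{\delta}{2}$-jointly-typical with $S_j^n$, pick one uniformly (or an arbitrary codeword if the list is empty), apply the deterministic map $f_j:\fieldq\times\mathcal{S}_j\to\mathcal{X}_j$ with $p_{X_j|V_jS_j}(f_j(v_j,s_j)|v_j,s_j)=1$, and transmit. The decoder observes $Y^n$ and looks for the unique \emph{sum-codeword} $w^n \in \fieldq^n$ lying in the common coset such that $(w^n,Y^n) \in T_\delta(V_1\oplus V_2, Y)$, then outputs the corresponding pair of message indices (using the nested structure to read off $(m_1,m_2)$ from $w^n$; here one needs the coset decomposition to be arranged so the pair of messages is recoverable from the sum-coset index, which is the analogue of "decoder decodes the bin indices"). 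The error analysis splits, as in the PTP-STx proof, into: (E1) atypical state sequences, handled by Lemma \ref{Lem:ATypicalSetIsHighlyProbable}; (E2) covering failure at each encoder, $\{\phi_{\delta/2}(S_j^n,m_j)=0\}$, handled exactly as in Appendix \ref{Sec:AnUpperBoundOnProbabilityofEpsilon1ComplementIntersectionEpsilon2} and forcing $k_j/n > 1 - H(V_j|S_j)/\log q$; (E3) the transmitted sum-codeword failing to be jointly typical with $Y^n$, handled by the conditional-typicality Lemma \ref{Lem:ConditionalTypicalSetOccursWithHighProbability} using the Markov chain $(\boldV,\boldS)-(\boldX,\boldS)-Y$; and (E4) some \emph{incorrect} sum-codeword being jointly typical with $Y^n$, which is the packing bound.

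The main obstacle, and the step that genuinely requires new work beyond the point-to-point argument, is bounding $P(E4)$, because the competing sum-codewords are \emph{not} mutually independent — they are sums of two codewords drawn from two \emph{dependent} (bias-shifted-but-common-generator) codebooks, and moreover we must condition on the encoders having succeeded (non-empty lists), which correlates the chosen $(A_1,A_2)$ with the realized codebook and with $Y^n$. The plan is: first, use the pairwise-independence / uniformity properties of random coset codes (the analogue of Lemma \ref{Lem:CodewordsOfRandomLinearCodeUniformlyDistributedAndPairwiseIndependent} referenced after the PTP-STx proof) to show that for a fixed nonzero coset index $d \neq$ (true index), the corresponding sum-codeword $W^n_d$ is uniformly distributed on $\fieldq^n$ and pairwise-independent of the true $W^n$, so $P((W^n_d,Y^n)\in T_\delta) \le |T_\delta(V_1\oplus V_2|Y)|\,q^{-n} \approx \exp\{n(H(V_1\oplus V_2|Y)-\log q)\}$; then union-bound over the $\approx \exp\{n(R_1+R_2+\max_j(1-H(V_j|S_j)/\log q))\cdot\log q / \log q\}$-ish many wrong sum-coset indices — the number of wrong indices is $q^{\mu} = \exp\{n(R_1 + R_2 + \text{(binning overhead)})\log q\}$ up to the common-generator sharing — giving $P(E4) \to 0$ provided $R_1 + R_2 + \max_j\{1 - H(V_j|S_j)/\log q\}\cdot\log q < \log q - H(V_1\oplus V_2|Y)$, i.e. $R_1+R_2 < \min_j H(V_j|S_j) - H(V_1\oplus V_2|Y)$, which is precisely the claimed region. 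The delicate points to get right are (a) that sharing the generator matrix $G$ across the two outer codes does not destroy the pairwise independence needed for the packing bound — one must track exactly which linear combinations of rows of $G$ and which bias differences appear, and verify the "bad" events only involve nonzero combinations that remain uniform — and (b) decoupling the encoder-success conditioning from the channel behavior, which is done as in the PTP-STx proof by noting $(\epsilon_1\cup\epsilon_2)^c$ forces $(\boldV^n,\boldS^n)$ jointly typical and then invoking Lemma \ref{Lem:ConditionalTypicalSetOccursWithHighProbability} for the conditional law of $Y^n$. Finally, as in the PTP-STx proof, the cost constraints $\tau_j(e_j)\le\tau_j+\eta$ follow from the $\frac{\delta}{2}$-typicality of the chosen $(V_j^n,S_j^n)$ together with the bound on the probability of the empty-list event and the choice of $\delta$ small relative to $\eta/\kappa_{j,\max}$; collecting all the $o(1)$ slacks and invoking the rate computations (analogues of \eqref{Eqn:LowerBoundOnRate}) closes the argument, and $\cocl$ of the union over test channels and time-sharing gives $\beta_f(\boldTau)\subseteq\mathbb{C}(\boldTau)$.
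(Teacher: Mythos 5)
Your overall architecture matches the paper's: Gelfand--Pinsker covering at each encoder, a common linear binning code to contain the range of $V_1^n\oplus V_2^n$, joint-typicality decoding of the sum, and a packing bound via uniformity/pairwise independence of coset codes, yielding exactly the rate arithmetic $R_1+R_2<\min_j H(V_j|S_j)-H(V_1\oplus V_2|Y)$. However, there are two problems. The first is structural: you share the \emph{outer} generator matrix $G$ across the two users (differing only in bias), whereas what must be shared is only the \emph{inner} (binning) block --- in the paper, the first $\min(k_1,k_2)$ rows of $g_{I_1}$ and $g_{I_2}$ coincide, while the message blocks $g_{O_1/I_1}$ and $g_{O_2/I_2}$ are distinct and independent, so that the decoder's code $[\,g_{I_2}^{T}~g_{O_1/I_1}^{T}~g_{O_2/I_2}^{T}\,]$ places $m_1$ and $m_2$ in disjoint coordinates of the sum-coset index. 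With a fully common $G$ the sum codeword reveals only a field combination of the two message coefficient vectors, so the pair $(m_1,m_2)$ is not recoverable; you flag this ("one needs the coset decomposition to be arranged so the pair of messages is recoverable") but never resolve it, and resolving it is precisely what forces the asymmetric nesting the paper uses.

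The second and more serious gap is your step (E3)/(b). You assert that encoder success forces $(\boldV^n,\boldS^n)$ to be jointly typical and then invoke conditional typicality for $Y^n$ "as in the PTP-STx proof." Encoder success only guarantees that each \emph{pair} $(S_j^n,V_j^n)$ is typical with respect to $p_{S_jV_j}$ and that $\boldS^n$ is typical; it does not give joint typicality of the quadruple $(S_1^n,V_1^n,S_2^n,V_2^n)$ with respect to $p_{\boldS\boldV}$, which is what you need before the memoryless channel argument can certify that $(V_1^n\oplus V_2^n,Y^n)$ is typical for $p_{V_1\oplus V_2,Y}$. Passing from pairwise to joint typicality is a Markov-lemma step exploiting $V_1-S_1-S_2-V_2$, and here it is genuinely nontrivial because the two codewords are selected by list-encoding from \emph{statistically dependent} codebooks, so the selected pair $(V_1^n,V_2^n)$ is not drawn from a product conditional law. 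The paper spends a dedicated argument on this (bounding the probability of landing in the pairwise-typical-but-not-jointly-typical set $\Theta(\bolds^n)$ by summing over coset indices, using uniformity of the codewords to get the factor $\pi^{-(n-k_1)}\pi^{-(n-k_2)}$, comparing it to $p_{\boldV^n|\boldS^n}$ via the Markov chain, and then applying the exponential Hoeffding/Sanov bound on the atypical set). In the point-to-point proof there is one encoder and no such issue, so "as in the PTP-STx proof" does not cover it; without this step your bound on the probability that the true sum codeword fails the typicality test does not go through.
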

Before we provide a proof, we state the coding technique and indicate achievability of
promised rates. As stated in section
\ref{SubSec:NestedLinearCodesForBinaryDoublyDirtyMAC}, the key aspect
is to employ cosets of a common linear code as bins for quantizing the state. We
employ three nested coset codes -one each for the two encoders and the decoder- that share
a common inner (sparser) code. We begin by describing the encoding rule. The nested coset
code provided to encoder $j$ is described through a pair of generator matrices $g_{I} \in
\mathcal{V}^{k \times n}$ and $g_{Oj/I} \in \mathcal{V}^{l_{j} \times n}$ where (i) $g_{I}$
and $g_{Oj}^{T}\define \left[ 
g_{I}^{T}~~  g_{Oj/I}^{T}\right]$ are generator matrices of inner (sparser) and complete
(denser) codes respectively, (ii)
\begin{eqnarray}
\label{Eqn:BinningRate}
\frac{k}{n}&>& 1-\frac{\min\left\{H(V_{1}|S_{1}),H(V_{2}|S_{2}) \right\}}{\log \pi}\\
\label{Eqn:ChannelCodingRate}
\frac{k+l_{1}+l_{2}}{n} &<& 1-\frac{H(V_{1} \oplus V_{2}|Y)}{\log \pi}.
\end{eqnarray}
with $\pi \define |\mathcal{V}|$ and (iii) bias vector $b_{j}^{n}$. Let $\lambda_{I}$ and
$\lambda_{Oj}$ denote linear codes corresponding to generator matrices $g_{I}$ and
$g_{Oj}$ respectively. User $j$'s message $M_{j}^{l_{j}} \in \mathcal{V}^{l_{j}}$ indexes
the coset $( a^{k}g_{I}\oplus M_{j}^{l_{j}}g_{Oj/I}\oplus b_{j}^{n}:a^{k} \in
\mathcal{V}^{k})$. Encoder $j$ observes state $S_{j}^{n}$ and looks for a codeword
in the coset indexed by the message that is jointly typical with the state sequence
$S_{j}^{n}$ according to $p_{S_{j}V_{j}}$. If it finds one such codeword, say $V_{j}^{n}$,
a vector $X_{j}^{n}$ is generated according
$\prod_{t=1}^{n}p_{X_{j}|S_{j}V_{j}}(\cdot|S_{jt}V_{jt})$ and $X_{j}^{n}$ is fed as input
to the channel. Otherwise, it declares an error.

Now to the decoding rule. Let $\lambda_{O}$ denote the complete code provided to
the decoder, i.e., the coset code whose (i) generator matrix is $g_{O}^{T} \define \left[
g_{I}^{T}~~ g_{O/I}^{T} \right]$, where $g_{O/I}^{T} \define \left[
g_{O1/I}^{T} ~~ g_{O2/I}^{T} \right]$ and (ii) bias vector $b_{1}^{n}\oplus
b_{2}^{n}$. Having received $Y^{n}$, it lists all codewords in $\lambda_{O}$ that are
jointly typical with $Y^{n}$ with respect to $p_{V_{1}\oplus V_{2},Y}$. If all such
codewords belong to a unique coset (of $\lambda_{I}$ in $\lambda_{O}$) say
$(a^{k}g_{I}\oplus m_{1}^{l_{1}}g_{O1/I} \oplus m_{2}^{l_{2}}g_{O2/I}\oplus
b_{1}^{n} \oplus b_{2}^{n}:a^{k} \in
\mathcal{V}^{k})$, it declares $(m_{1}^{l_{1}},m_{2}^{l_{2}})$ as the pair of decoded
messages. Otherwise, it declares an error.

We pick entries of each of the constituent generator matrices $g_{I},g_{O1/I},g_{O2/I}$
independently and uniformly from $\mathcal{V}$. Lower bound (\ref{Eqn:BinningRate}) enable
us to drive down the probability of encoder not finding a jointly typical codeword in the
indexed coset. This bound can be interpreted easily. If we picked codewords according to
$\prod_{t=1}^{n}p_{V}$, then we need the bin to be of rate
roughly $H(V_{1})-H(V_{1}|S_{1})$. Since we average uniformly over the ensemble of coset
codes, each codeword of a linear code is uniformly distributed over $\mathcal{V}^{n}$.
Hence the bin must of rate at least $\log \pi-H(V_{1}|S_{1})$. The decoder makes an error
with 
arbitrarily small probability if (\ref{Eqn:ChannelCodingRate}) is satisfied. This bound can also be interpreted intuitively. If the codewords were picked according to $p_{V_{1}\oplus V_{2}}$, the upper bound would have been $H(V_{1}\oplus V_{2})-H(V_{1}\oplus V_{2}|Y)$. In this case, the codewords in the sum of nested linear codes are also uniformly distributed over $\mathcal{V}^{n}$, and this explains the bound in (\ref{Eqn:ChannelCodingRate}). From (\ref{Eqn:BinningRate}), (\ref{Eqn:ChannelCodingRate}) it can be verified that $R_{1}+R_{2}=\frac{l_{1}+l_{2}}{n} \leq \min \left\{ H(V_{1}|S_{1}),H(V_{2}|S_{2})-H(V_{1}\oplus V_{2}|Y)\right\}$ is achievable.

We emphasize that joint typical encoding and decoding enables us to decode the sum over an
arbitrary MAC-DSTx. The informed reader will recognize
the need to prove statistical independence of a codeword in a competing sum coset and the
pair
of cosets indexed by the messages. The dependence built across the codewords and cosets as
a consequence of the algebraic structure exemplifies the interplay of algebra and
probability. The following proof details these elements.
\begin{proof}
Let pmf $p_{\boldV \boldX \boldS Y} \in \mathbb{D}_{f}(\boldTau)$,
rate pair $\boldR \in \beta_{f}(p_{\boldV \boldX \boldS Y})$ and $\eta > 0$. We prove
existence of a
MAC-DSTx code $(n,\mathscr{M},\bolde, d)$ whose rate $\frac{\log \mathscr{M}_{j}}{n} \geq
R_{j}-\eta$, average error probability $\overline{\xi}(\bolde,d) \leq \eta$, and average
cost $\tau(e_{j}) \leq \tau_{j}+\eta$ for $j=1,2$.

We begin with a description of the structure of the MAC-DSTx code whose existence we seek to prove. Let $\pi \define |\setV|$ and we assume $H(V_{1}|S_{1}) \geq H(V_{2}|S_{2})$ without loss of generality. Consider a pair of nested coset codes $(n,k_{j},l_{j},g_{I_{j}},g_{O_{j}/I_{j}},b_{j}^{n}):j=1,2$ built over $\setV$, denoted
$\lambda_{O_{j}}/\lambda_{I_{j}}:j=1,2$ with parameters
\begin{eqnarray}
 \label{Eqn:Parameterk1OfNestedCosetCode}
 k_{1} &\define& \lceil n\left(1-\frac{H(V_{1}|S_{1})}{\log \pi}+\frac{\eta_{1}(\eta)}{\log \pi}\right) \rceil,\\
\label{Eqn:Parameterk2OfNestedCosetCode}
k_{2}=k_{1}+k_{+}, \mbox{ where }k_{+} &\define& \lceil n \left(1 -
\frac{H(V_{2}|S_{2})}{\log \pi}+\frac{\eta_{1}(\eta)}{\log \pi}\right) \rceil -k_{1},  \\
\label{Eqn:Parameterl1OfNestedCosetCode}
l_{1} &\define& \lfloor n\left( \frac{R_{1}}{\log \pi}-\frac{\eta_{2}(\eta)}{\log \pi} \right) \rfloor
\\
\label{Eqn:Parameterl2OfNestedCosetCode}
l_{2} &\define& \lfloor n \left( 1+ \frac{R_{2}}{\log \pi}  - \frac{H(V_{2}|S_{2})}{\log \pi}
- \frac{\eta_{3}(\eta)}{\log \pi}\right)\rfloor-k_{2}, \mbox{ and,} \\
\label{Eqn:Identicalk1RowsOfgI1AndgI2}
\mbox{the first $k_{1}$ rows of }g_{I_{1}} \mbox{ and } &&\!\!\!\!\!\!\!\!\!\!\!\!\!\!\!\!\!\!\!g_{I_{2}} \mbox{ are identical i.e., }g_{I_{1},t}=g_{I_{2},t} \mbox{ for } t \in [k_{1}].
\end{eqnarray}
A few remarks on
the structure of $\lambda_{O_{j}}/\lambda_{I_{j}}:j=1,2$ and the relationship between their parameters are in order. For $n \geq N_{1}(\eta)\define \max \left\{ \frac{\log \pi}{\eta_{1}(\eta)},\frac{\log \pi}{\eta_{2}(\eta)},\frac{\log \pi}{\eta_{3}(\eta)} \right\}$, we have
\begin{eqnarray}
\label{Eqn:MACDSTxBoundsOnBinningRate}
\frac{n}{\log \pi}(\log \pi -  H(V_{j}|S_{j})+\eta_{1}(\eta)) \leq& k_{j} &\leq
\frac{n}{\log \pi}(\log \pi - 
H(V_{j}|S_{j})+2\eta_{1}(\eta))\\
\label{Eqn:MACDSTxUpperAndLowerBoundsOnR1}
\frac{n}{\log \pi}(R_{1}-2\eta_{2}(\eta)) \leq &l_{1} &\leq \frac{n}{\log \pi}(R_{1}-\eta_{2}(\eta))\\
\label{Eqn:MACDSTxBoundOnUser2ChannelCode}
\frac{n}{\log \pi}(R_{2}+\log \pi - H(V_{2}|S_{2})-2\eta_{3}(\eta)) \leq& k_{2}+l_{2}& \leq \frac{n}{\log \pi}(R_{2}+\log \pi - H(V_{2}|S_{2})-\eta_{3}(\eta))
\end{eqnarray}
Combining the lower bound in (\ref{Eqn:MACDSTxBoundOnUser2ChannelCode}) and the upper
bound for $k_{2}$ in (\ref{Eqn:MACDSTxBoundsOnBinningRate}), we have
\begin{equation}
 \label{Eqn:MACDSTxLowerBoundOnUser2Rate}
\frac{l_{2} \log \pi}{n} \geq R_{2}-2\eta_{3}(\eta)-2\eta_{1}(\eta)
\end{equation}
and similarly, combining the upper bound for $k_{2}+l_{2}$ in
(\ref{Eqn:MACDSTxBoundOnUser2ChannelCode}) and the upper bound for $l_{1}$ in
(\ref{Eqn:MACDSTxUpperAndLowerBoundsOnR1}), we have
\begin{eqnarray}
 \label{Eqn:MACDSTxUpperBoundOnTheUnionCodeInTermsOfR1AndR2}
k_{2}+l_{1}+l_{2} &\leq& \frac{n}{\log \pi} \left( R_{1}+R_{2} + \log \pi -H(V_{2}|S_{2}) - \eta_{3}(\eta)-\eta_{2}(\eta)  \right)\nonumber\\
\label{Eqn:MACDSTxUpperBoundOnTheUnionCode}
&\leq& \frac{n}{\log \pi} \left( \log \pi -H(V_{1}\oplus V_{2}|Y) - \eta_{3}(\eta)-\eta_{2}(\eta)  \right),
\end{eqnarray}
where (\ref{Eqn:MACDSTxUpperBoundOnTheUnionCode}) follows from $\boldR \in
\beta_{f}(p_{\boldV \boldX \boldS Y})$.

We now specify encoding and decoding rules that map this pair
$\lambda_{O_{j}}/\lambda_{I_{j}}:j=1,2$ of nested coset codes into a MAC-DSTx code. User
$j$ is provided with the nested
coset code $\lambda_{O_{j}}/\lambda_{I_{j}}$.
User $j$'s message is used to index one among $\pi^{l_{j}}$ cosets of
$\lambda_{O_{j}}/\lambda_{I_{j}}$. We assume that the set of messages
$\setM_{j}\define \setV^{l_{j}}$, and $M^{l_{j}}_{j} \in \setV^{l_{j}}$ to
be the uniformly distributed random variable representing user $j$'s message. We let
$v_{j}^{n}(a_{j}^{k_{j}},m_{j}^{l_{j}})\define
a_{j}^{k_{j}}g_{I_{j}}\oplus m_{j}^{l_{j}}g_{O_{j}/I_{j}}\oplus b_{j}^{n}$ denote
a generic codeword in $\lambda_{O_{j}}/\lambda_{I_{j}}$ and $c_{j}(m^{l_{j}}_{j}) \define
(v_{j}^{n}(a_{j}^{k_{j}},m_{j}^{l_{j}}):a_{j}^{k_{j}} \in \setV^{k_{j}})$ denote the coset
corresponding to
message $m_{j}^{l_{j}}$. Encoder $j$ observes the state sequence
$S_{j}^{n}$ and populates the list $L_{j}(M_{j}^{l_{j}},S_{j}^{n}) =
\left\{ v_{j}(a_{j}^{k_{j}},M_{j}^{l_{j}}):
(S_{j}^{n},v_{j}(a_{j}^{k_{j}},M_{j}^{l_{j}})) \in
T_{\eta_{4}(\eta)}(S_{j},V_{j})\right\}$ of codewords in the coset corresponding
to the message that are jointly typical with the state sequence. If
$L_{j}(M_{j}^{l_{j}},S_{j}^{n})$ is empty, it picks a codeword uniformly
at random from coset $c_{j}(M_{j}^{l_{j}})$. Otherwise, it picks a codeword uniformly at
random from $L_{j}(M_{j}^{l_{j}},S_{j}^{n})$. Let
$V_{j}(A_{j}^{k_{j}},M_{j}^{l_{j}})$ denote the picked codeword in either case. The
encoder computes
$X_{j}^{n}(M_{j}^{l_{j}},S_{j}^{n}) \define
f_{j}^{n}(V_{j}^{n}(A_{j}^{k_{j}},M_{j}^{l_{j}}),S_{j}^{n})$, where $f_{j} : \setV_{j} \times
\setS_{j} \rightarrow \setX_{j}$ is any map that satisfies
$p_{X_{j}|V_{j}S_{j}}(f_{j}(v_{j},s_{j})|v_{j},s_{j})=1$ for all pairs $(v_{j},s_{j}) \in
\setV_{j} \times \setS_{j}$.  $X_{j}^{n}(M_{j}^{l_{j}},S_{j}^{n})$ is fed as input to the
channel.

We now specify the decoding rule. The decoder is provided with nested coset code
$(n,k,l,g_{I},g_{O/I},b^{n})$ denoted $\lambda_{O}/\lambda_{I}$ where
$k=k_{2}$, $l=l_{1}+l_{2}$, $g_{I}=g_{I_{2}}$,
$g_{O/I}^{T} \define \left[ g_{O_{1}/I_{1}}^{T} ~~ g_{O_{2}/I_{2}}^{T} \right ]$ and
$b^{n} \define b_{1}^{n}\oplus b_{2}^{n}$. With a slight abuse
of
notation, we let $m^{l} \define (m_{1}^{l_{1}},m_{2}^{l_{2}}) \in
\setV^{l} \define \setV^{l_{1}} \times \setV^{l_{2}}$ represent a
pair of messages and analogously random variable $M^{l} \define (M_{1}^{l_{1}},M_{2}^{l_{2}})$
denote the pair of user messages. For $a^{k} \in \setV^{k}$
and $m^{l} \in \setV^{l}$, let $v^{n}(a^{k},m^{l}) \define
a^{k}g_{I} \oplus m^{l}g_{O/I}\oplus b^{n}$ and $c(m^{l})\define
(v^{n}(a^{k},m^{l}):a^{k} \in \setV^{k})$ denote a generic codeword in
$\lambda_{O}/\lambda_{I}$ and the coset corresponding to the message
pair $m^{l}$ respectively. The decoder observes the received vector $Y^{n}$ and populates
$D(Y^{n}) \define \left\{ m^{l} \in \setV^{l}: \exists v^{n}(a^{k},m^{l}) \mbox{ such that
} (v^{n}(a^{k},m^{l}),Y^{n}) \in T_{\eta_{5}(\eta)}(V_{1}\oplus V_{2},Y) \right\}$. If
$D(Y^{n})$ is a singleton, the decoder declares the content of $D(Y^{n})$ as the decoded
message pair. Otherwise, it declares an error.

The above encoding and decoding rules map every pair
$\lambda_{O_{j}}/\lambda_{I_{j}}:j=1,2$ of nested coset codes that satisfy
(\ref{Eqn:Parameterk1OfNestedCosetCode})-(\ref{Eqn:Identicalk1RowsOfgI1AndgI2}) into a
corresponding MAC-DSTx code $(n,\mathscr{M}^{(n)},\bolde^{(n)},d^{(n)})$ of rate
$\frac{\log \mathscr{M}_{j}^{(n)}}{n} \geq R_{j}-2\eta_{1}(\eta)-2\eta_{2}(\eta)$, thus
characterizing an ensemble, one for each $n$, of MAC-DSTx codes. We average the error
probability over
this ensemble of MAC-DSTx codes by letting the bias vectors
$B_{j}^{n}:j=1,2$ and generator matrices $G_{I_{2}}, G_{O_{j}/I_{j}}:j=1,2$ mutually independent and
uniformly distributed over their respective range spaces. Let
$\Lambda_{O_{j}}/\Lambda_{I_{j}}:j=1,2$ and $\Lambda_{O}/\Lambda_{I}$ denote the random
nested coset codes $(n,k_{j},l_{j},G_{I_{j}},G_{O_{j}/I_{j}},B_{j}^{n}):j=1,2$ and
$(n,k,l,G_{I},G_{O/I},B^{n})$ respectively. For $a_{j}^{k_{j}} \in \setV^{k_{j}}$, $m_{j}^{l_{j}} \in \setV^{l_{j}}$, $a^{k} \in \setV^{k}$,
$m^{l} \in \setV^{l}$, let $V_{j}^{n}(a_{j}^{k_{j}},m_{j}^{l_{j}})\define a_{j}^{k_{j}}G_{I_{j}} \oplus m_{j}^{l_{j}}G_{O_{j}/I_{j}}\oplus B_{j}^{n}:j=1,2$,
$V^{n}(a^{k},m^{l}) \define a^{k}G_{I}\oplus m^{l}G_{O/I}\oplus B^{n}$ denote corresponding random codewords in
$\Lambda_{O_{j}}/\Lambda_{I_{j}}:j=1,2$ and $\Lambda_{O}/\Lambda_{I}$ respectively. Let
$C_{j}(m^{l_{j}}_{j}) \define
(V_{j}^{n}(a_{j}^{k_{j}},m_{j}^{l_{j}}):a_{j}^{k_{j}} \in \setV^{k_{j}})$ and $C(m^{l})\define
(V^{n}(a^{k},m^{l}):a^{k} \in \setV^{k})$ denote random  cosets in
$\Lambda_{O_{j}}/\Lambda_{I_{j}}:j=1,2$ and $\Lambda_{O}/\Lambda_{I}$ corresponding to
message $m_{j}^{l_{j}}:j=1,2$ and $m^{l}$ respectively.

Our next goal is to derive an upper bound on the probability of error. Towards this end,
we begin with a characterization of related events. Let 
\begin{eqnarray}
 \label{Eqn:StateSequencesNotTypical}
\epsilon_{1j} &\define& \{
S_{j}^{n}
\notin T_{\frac{\eta_{4}(\eta)}{2}}(S_{j}) \},~~~~ \epsilon_{1} \define \left\{ \boldS^{n} \notin T_{\frac{\eta_{4}(\eta)}{2}}(\boldS) \right\}\nonumber\\
\label{Eqn:NoCodewordIsJointlyTypicalWithStateSequence}
\epsilon_{2j} &\define& \{
\phi_{j}(S_{j}^{n},M_{j}^{l_{j}}) =0
\}, \mbox{ where } \phi_{j}(s_{j}^{n},m_{j}^{l_{j}}) \define
\sum_{a_{j}^{k_{j}} \in
\setV^{k_{j}}} 1_{\left\{\left(V_{j}^{n}(a_{j}^{k_{j}},m_{j}^{l_{j}}), s_{j}^{n}\right) \in
T_{\eta_{4}(\eta)}(V_{j},S_{j})\right\}}\nonumber\\
\label{Eqn:LegitimateCodewordIsNotTypicalWithReceivedVector}
\epsilon_{4}&\define& \underset{a^{k} \in \setV^{k}}{\bigcup}\left\{ (V^{n}(a^{k},M^{l}),Y^{n}) \in
T_{\eta_{5}(\eta)}(V_{1}\oplus V_{2},Y) \right\}\nonumber\\
\label{Eqn:IllegitimateCodewordsAreTypicalWithReceivedVector}
\epsilon_{5}&\define&\underset{{\hatm}^{l} \neq M^{l}}{\bigcup}\underset{a^{k} \in
\setV^{k}}{\bigcup}\left\{\left(V^{n}(a^{k},\hatm^{l}), Y^{n}\right) \in
T_{\eta_{5}(\eta)}(p_{V_{1} \oplus V_{2},Y})\right\}.\nonumber
\end{eqnarray}
Note that $\epsilon_{1} \cup \epsilon_{21} \cup \epsilon_{22} \cup \epsilon_{4}^{c} \cup
\epsilon_{5}$ contains the error event and hence $P(\epsilon_{1})+P(\epsilon_{11}^{c} \cap
\epsilon_{21})+P(\epsilon_{12}^{c} \cap \epsilon_{22})+P((\epsilon_{1} \cup \epsilon_{21}
\cup \epsilon_{22})^{c}\cap \epsilon_{4}^{c} )+P(\epsilon_{5})$ is an upper bound on the
probability of error. In the sequel, we provide an upper bound on each of the above terms.

Lemma \ref{Lem:ATypicalSetIsHighlyProbable} guarantees the existence of $N_{2}(\eta) \in
\naturals$ such that $P(\epsilon_{1}) \leq \frac{\eta}{8}$ for all $n \geq N_{1}(\eta)$.
Lemma \ref{Lem:WhenIsACosetCodeAGoodSourceCode?}(3) in appendix
\ref{Sec:AnUpperBoundOnProbabilityofEpsilon1ComplementIntersectionEpsilon2} implies
the existence of $N_{3}(\eta) \in \naturals$ such that for all $n\geq N_{3}(\eta)$
\begin{eqnarray}
\label{Eqn:MACDSTxUpperBoundingProbabilityOfEncodingError}
P(\epsilon_{1j}^{c}\cap \epsilon_{2j}) \leq \exp \left\{ -n\log \pi \left( \frac{k_{j}}{n} - \left( 1-\frac{H(V_{j}|S_{j})}{\log \pi}+\frac{3\eta_{4}(\eta)}{2\log \pi}  \right) \right) \right\}.\nonumber
\end{eqnarray}
Substituting the lower bound in (\ref{Eqn:MACDSTxBoundsOnBinningRate}) for $\frac{k_{j}}{n}$, we obtain
\begin{equation}
\label{Eqn:MACDSTxUpperBoundingProbabilityOfEncodingErrorUsingLowerBoundOnKj}
P(\epsilon_{1j}^{c}\cap \epsilon_{2j})\leq \exp \left\{ -n  \left(\eta_{1}(\eta)- \frac{3\eta_{4}(\eta)}{2} \right) \right\}.
\end{equation}
for all $n \geq \max \left\{ N_{1}(\eta),N_{3}(\eta) \right\}$. We now derive an upper bound on $P((\epsilon_{1} \cup \epsilon_{21} \cup \epsilon_{22})^{c}\cap \epsilon_{4}^{c})$. The encoding rule ensures $(\epsilon_{1} \cup \epsilon_{21} \cup \epsilon_{22})^{c} \subseteq (\epsilon_{1} \cup \epsilon_{2})^{c}$, where
\begin{equation}
 \label{Eqn:EventAtleastOneEncoderMakesAMistake}
\epsilon_{2} = \bigcup_{j=1}^{2} \left\{ \left( S_{j}^{n},V_{j}^{n}(A_{j}^{k_{j}},M_{j}^{l_{j}}) \right) \notin T_{\eta_{4}(\eta)}(S_{j},V_{j})  \right\},\nonumber
\end{equation}
and $V_{j}^{n}(A_{j}^{k_{j}},M_{j}^{l_{j}})$ denotes codeword in $L_{j}(M_{j}^{l_{j}},S_{j}^{n})$ chosen by encoder $j$. Our first step is to provide an upper bound on $P((\epsilon_{1} \cup \epsilon_{2})^{c} \cap \epsilon_{3})$ for sufficiently large $n$, where
\begin{equation}
\label{Eqn:EventStateAndVCodewordsAreNOTJointlyTypical}
\epsilon_{3} = \left\{ \left( S_{j}^{n},V_{j}^{n}(A_{j}^{k_{j}},M_{j}^{l_{j}}):j=1,2 \right) \notin T_{\frac{\eta_{5}(\eta)}{2}}(S_{1},V_{1},S_{2},V_{2})  \right\}.\nonumber
\end{equation}
In the second step, we employ the result of conditional frequency typicality to
provide an upper bound on $P((\epsilon_{1} \cup \epsilon_{2} \cup \epsilon_{3})^{c}\cap
\epsilon_{4}^{c})$.

As an astute reader might have guessed, the proof of first step will employ the Markov chain $V_{1}-S_{1}-S_{2}-V_{2}$. The proof is non-trivial because of statistical dependence of the codebooks. We begin with the definition
\begin{equation}
 \label{Eqn:PairwiseTypicalButQuadrapleNotTypical}
\Theta(\bolds^{n}) \define \left\{ \boldv^{n} \in \setV^{n} : (s_{j}^{n},v_{j}^{n}) \in T_{\eta_{4}(\eta)}(S_{j},V_{j}):j=1,2, (\bolds^{n},\boldv^{n}) \notin T_{\frac{\eta_{5}(\eta)}{2}}(\boldS,\boldV) \right\}\nonumber
\end{equation}
for any $\bolds^{n} \in \setS^{n}$. Observe that,
\begin{eqnarray}
 \label{Eqn:}
P((\epsilon_{1} \cup \epsilon_{2})^{c} \cap \epsilon_{3}) &=& \underset{\bolds^{n} \in T_{\frac{\eta_{4}(\eta)}{2}}(\boldS)}{\sum} \underset{\boldv^{n} \in \Theta(\bolds^{n})}{\sum} P(\boldS^{n} = \bolds^{n}, V_{j}^{n}(A_{j}^{k_{j}},M_{j}^{l_{j}})=v_{j}^{n}:j=1,2)
\nonumber\\
&=& \underset{\bolds^{n} \in T_{\frac{\eta_{4}(\eta)}{2}}(\boldS)}{\sum} \underset{\boldv^{n} \in \Theta(\bolds^{n})}{\sum} P\left(\underset{a_{1}^{k_{1}} \in \setV_{1}^{k_{1}}}{\bigcup}\underset{{a_{2}^{k_{2}} \in \setV_{2}^{k_{2}}}}{\bigcup}\left\{ \boldS^{n} = \bolds^{n},\substack{ V_{j}^{n}(A_{j}^{k_{j}},M_{j}^{l_{j}})=v_{j}^{n}:j=1,2,\\V_{j}^{n}(a_{j}^{k_{j}},M_{j}^{l_{j}})=v_{j}^{n}:j=1,2}\right\}\right)
\nonumber\\
\begin{comment}{
&\leq& \underset{\bolds^{n} \in
T_{\frac{\eta_{4}(\eta)}{2}}(\boldS)}{\sum} \underset{\boldv^{n} \in
\Theta(\bolds^{n})}{\sum} \underset{a_{1}^{k_{1}} \in
\setV_{1}^{k_{1}}}{\sum}\underset{{a_{2}^{k_{2}} \in
\setV_{2}^{k_{2}}}}{\sum}P\left(\left\{ \boldS^{n} = \bolds^{n},\substack{
V_{j}^{n}(A_{j}^{k_{j}},M_{j}^{l_{j}})=v_{j}^{n}:j=1,2,\\V_{j}^{n}(a_{j}^{k_{j}},M_{j}^{l_
{j}})=v_{j}^{n}:j=1,2}\right\}\right)
\nonumber\\
}\end{comment}
&\leq& \underset{\bolds^{n} \in T_{\frac{\eta_{4}(\eta)}{2}}(\boldS)}{\sum}
\underset{\boldv^{n} \in \Theta(\bolds^{n})}{\sum} \underset{a_{1}^{k_{1}} \in
\setV_{1}^{k_{1}}}{\sum}\underset{{a_{2}^{k_{2}} \in
\setV_{2}^{k_{2}}}}{\sum}P\left(\left\{ \boldS^{n} =
\bolds^{n},\substack{V_{1}^{n}(a_{1}^{k_{1}},M_{1}^{l_{1}})=v_{1}^{n},\\V_{2}^{n}(a_{2}^{
k_ { 2}},M_{2}^{l_{2}})=v_{2}^{n}}\right\} \right)
\nonumber\\
\label{Eqn:MarkovLemmaErrorEventIndependenceOfVAndS}
&=&\!\!\!\!\!\! \underset{\bolds^{n} \in T_{\frac{\eta_{4}(\eta)}{2}}(\boldS)}{\sum}
\underset{\boldv^{n} \in \Theta(\bolds^{n})}{\sum} \underset{a_{1}^{k_{1}} \in
\setV_{1}^{k_{1}}}{\sum}\underset{{a_{2}^{k_{2}} \in \setV_{2}^{k_{2}}}}{\sum}P\left(
\boldS^{n} =
\bolds^{n}\right)P\left(\substack{V_{1}^{n}(a_{1}^{k_{1}},M_{1}^{l_{1}})=v_{1}^{n},\\V_{2}
^{n}(a_{2}^{
k_ { 2}},M_{2}^{l_{2}})=v_{2}^{n}}\right)\\
\label{Eqn:MarkovLemmaErrorEventBeforeUsingSanovlemma}
&=& \underset{\bolds^{n} \in T_{\frac{\eta_{4}(\eta)}{2}}(\boldS)}{\sum} \underset{\boldv^{n} \in \Theta(\bolds^{n})}{\sum} P\left( \boldS^{n} = \bolds^{n}\right)\frac{1}{\pi^{n-k_{1}}}\frac{1}{\pi^{n-k_{2}}}
\end{eqnarray}
where $V_{j}^{n}(A_{j}^{k_{j}},M_{j}^{l_{j}})$ is defined as the random codeword chosen
by the encoder, (\ref{Eqn:MarkovLemmaErrorEventIndependenceOfVAndS}) follows from
independence of random variables $(M^{l},G_{I},G_{O/I},B_{1}^{n},B_{2}^{n})$ that
characterize $V_{j}^{n}(a_{j}^{k_{j}},M_{j}^{l_{j}})$ and $\boldS^{n}$.
We now employ the upper bound on $k_{j}$ in (\ref{Eqn:MACDSTxBoundsOnBinningRate}) to
substitute for $\frac{1}{\pi^{n-k_{j}}}$. For $n \geq N_{1}(\eta)$, we have $k_{j} \leq n
- \frac{H(V_{j}|S_{j})}{\log \pi}+\frac{2\eta_{1}(\eta)}{\log \pi}$ and hence
\begin{equation}
\label{Eqn:UsingUpperBoundOnkjInMarkovLemma}
\frac{1}{\pi^{n-k_{j}}} \leq \exp \left\{ -n \left( H(V_{j}|S_{j})-2\eta_{1}(\eta) \right)\right\}.
\end{equation} 
Furthermore, by Lemma \ref{Lem:BoundsOnProbabilityOfTypicalSequence}, for every
$\bolds^{n} \in T_{\frac{\eta_{4}(\eta)}{2}}(\boldS)$ and $\boldv^{n} \in
\Theta(\bolds^{n})$, 
\begin{equation}
\label{Eqn:LowerBoundOnCoditionalProbabilityOfTypicalSeqeuneces}
\exp \left\{ -n \left( H(V_{j}|S_{j})-2\eta_{4}(\eta)  \right)  \right\} \leq p_{V_{j}^{n}|S_{j}^{n}}(v_{j}^{n}|s_{j}^{n}) =  p_{V_{j}^{n}|\boldS^{n}}(v_{j}^{n}|\bolds^{n}) = p_{V_{j}^{n}|\boldS^{n}V_{\msout{j}}^{n}}(v_{j}^{n}|\bolds^{n},v_{\msout{j}}^{n}),\end{equation}
where the last equalities is a consequence of Markov chain $V_{1}-S_{1}-S_{2}-V_{2}$.
Substituting the upper bounds in (\ref{Eqn:UsingUpperBoundOnkjInMarkovLemma}) and
(\ref{Eqn:LowerBoundOnCoditionalProbabilityOfTypicalSeqeuneces}) for
$\frac{1}{\pi^{n-k_{j}}}$ in (\ref{Eqn:MarkovLemmaErrorEventBeforeUsingSanovlemma}), we
obtain
\begin{eqnarray}
 P((\epsilon_{1} \cup \epsilon_{2})^{c} \cap \epsilon_{3}) &\leq& \exp \left\{ n(4\eta_{1}(\eta)+4\eta_{4}(\eta)) \right\}\cdot \underset{\bolds^{n} \in T_{\frac{\eta_{4}(\eta)}{2}}(\boldS)}{\sum} \underset{\boldv^{n} \in \Theta(\bolds^{n})}{\sum} p_{\boldS^{n} \boldV^{n}}(\bolds^{n},\boldv^{n})\nonumber\\
\label{Eqn:UsingSanovLemmaInMarkovLemma}
&\leq & \exp \left\{ n(4\eta_{1}(\eta)+4\eta_{4}(\eta)) \right\} \cdot \sum_{(\bolds^{n},\boldv^{n}) \notin T_{\eta_{5}(\eta)}(\boldS,\boldV)}p_{\boldS^{n} \boldV^{n}}(\bolds^{n},\boldv^{n})
\end{eqnarray}
for all $n \geq N_{1}(\eta)$. We now employ the exponential upper bound provided in Lemma
\ref{Lem:ATypicalSetIsHighlyProbable}. In particular, Lemma
\ref{Lem:ATypicalSetIsHighlyProbable} guarantees the existence of $N_{4}(\eta) \in
\naturals$
such that for every $n \geq N_{4}(\eta)$,
\begin{equation}
 \label{Eqn:ApplyingSanovToProbabilityInQuestion}
\sum_{\substack{(\bolds^{n},\boldv^{n}) \in \\T_{\eta_{5}(\eta)}(\boldS,\boldV)}}\!\!\!p_{\boldS^{n} \boldV^{n}}(\bolds^{n},\boldv^{n}) \leq \exp \left\{ -n\lambda \eta_{5}^{2}(\eta) \right\} \mbox{, where } \lambda \define \frac{\min_{(\bolds,\boldv) \in \setS \times \setV} \left\{ p^{2}_{\boldS\boldV}(\bolds,\boldv): p_{\boldS\boldV}(\bolds,\boldv) > 0\right\}}{\left(\log |\setS||\setV|\right)^{2}}.
\end{equation}
Substituting (\ref{Eqn:ApplyingSanovToProbabilityInQuestion}) in (\ref{Eqn:UsingSanovLemmaInMarkovLemma}), we conclude
\begin{equation}
\label{Eqn:ConclusionOfStep1OfMarkovlemma}
 P((\epsilon_{1} \cup \epsilon_{2})^{c} \cap \epsilon_{3}) \leq \exp \left\{  -n\left( \lambda\eta_{5}^{2}(\eta)-4\eta_{1}(\eta)-4\eta_{4}(\eta)\right)\right\}
\end{equation}
for every $n \geq \max\left\{ N_{1}(\eta),N_{4}(\eta) \right\}$.
This gets us to the second step. We begin with two observations. Firstly, note that $V(a_{1}^{k_{1}}0^{k_{+}}\oplus a_{2}^{k_{2}},m_{1}^{l_{1}}m_{2}^{l_{2}}) = V_{1}(a_{1}^{k_{1}},m_{1}^{l_{1}}) \oplus V_{2}(a_{2}^{k_{2}},m_{2}^{l_{2}})$. This follows from the definition of the codewords involved. Secondly, 
\begin{eqnarray}
&P\left(\substack{V(A_{1}^{k_{1}}0^{k_{+}}\oplus A_{2}^{k_{2}},M_{1}^{l_{1}}M_{2}^{l_{2}})=v^{n},\\X_{j}^{n}(M_{j}^{l_{j}},S_{j}^{n})=x_{j}^{n}:j=1,2,Y^{n}=y^{n}}\middle|\substack{V_{j}^{n}(A_{j}^{k_{j}},M_{j}^{l_{j}})=v_{j}^{n},\\:j=1,2,\boldS^{n}=\bolds^{n}}\right) = P\left(\substack{V_{1}(A_{1}^{k_{1}},M_{1}^{l_{1}})\oplus V_{2}(A_{2}^{k_{2}},M_{2}^{l_{2}})=v^{n},\\X_{j}^{n}(M_{j}^{l_{j}},S_{j}^{n})=x_{j}^{n}:j=1,2,Y^{n}=y^{n}}\middle|\substack{V_{j}^{n}(A_{j}^{k_{j}},M_{j}^{l_{j}})=v_{j}^{n},\\:j=1,2,\boldS^{n}=\bolds^{n}}\right)\nonumber\\
% \begin{align}
\label{Eqn:ChannelInputAndOutputAreGeneratedAccordingToTheCorrectCondDist}
\lefteqn{= \displaystyle \prod_{t=1}^{n}\left[p_{V_{1} \oplus V_{2}|V_{1}V_{2}}(v_{t}|v_{1t},v_{2t})\left(\prod_{j=1}^{2}p_{X_{j}|V_{j}S_{j}}(x_{jt}|v_{jt},s_{jt})\right)W_{Y|\boldX \boldS}(y_{t}|\boldx_{t},\bolds_{t}) \right]}\\
\label{Eqn:ChannelInputAndOutputAreGeneratedAccordingToTheCorrectCondDistStep2}
\lefteqn{=\displaystyle\prod_{t=1}^{n}P(V_{1} \oplus V_{2} = v_{t}, \boldX = \boldx_{t}, Y_{t}=y_{t}|\boldS_{t}=\bolds_{t},\boldV_{t}=\boldv_{t}),}
% \end{align}
\end{eqnarray}
where we have employed 1) encoding rule and Markov chains $\boldU - (\boldX,\boldS)- Y$ in arriving at (\ref{Eqn:ChannelInputAndOutputAreGeneratedAccordingToTheCorrectCondDist}) and 2) the identity $p_{X_{j}|\boldS\boldU X_{\msout{j}}}=p_{X_{j}|\boldS
\boldU}=p_{X_{j}|S_{j}U_{j}}$ for any distinct elements $j,\msout{j} \in \{ 1,2 \}$ in arriving at (\ref{Eqn:ChannelInputAndOutputAreGeneratedAccordingToTheCorrectCondDistStep2}). Since
\begin{eqnarray}
 \label{Eqn:UsingConditionalFrequencyTypicalityToUpperBoundFirstErrorEventAtDecoder}
&P((\epsilon_{1} \cup \epsilon_{2} \cup \epsilon_{3})^{c} \cap \epsilon_{4}^{c}) \leq P\left((\epsilon_{1} \cup \epsilon_{2} \cup \epsilon_{3})^{c} \cap \left\{ (V(\substack{A_{1}^{k_{1}}0^{k_{+}}\\\oplus A_{2}^{k_{2}}},M_{1}^{l_{1}}M_{2}^{l_{2}}),Y^{n}) \notin T_{\eta_{5}(\eta)}(V_{1}\oplus V_{2},Y) \right\}\right)\nonumber\\
\label{Eqn:SecondStepOfMarkovlemmaEventReformulated}
&\leq P\left((S_{j}^{n},V_{j}^{n}(A_{j}^{k_{j}},M_{j}^{l_{j}}):j=1,2) \in T_{\frac{\eta_{5}(\eta)}{2}}(\boldS,\boldV),  (V(\substack{A_{1}^{k_{1}}0^{k_{+}}\\\oplus A_{2}^{k_{2}}},M_{1}^{l_{1}}M_{2}^{l_{2}}),Y^{n}) \notin T_{\eta_{5}(\eta)}(V_{1}\oplus V_{2},Y) \right),\nonumber
\end{eqnarray}
and the above two observations imply that $(V(A_{1}^{k_{1}}0^{k_{+}}\oplus
A_{2}^{k_{2}},M_{1}^{l_{1}}M_{2}^{l_{2}}),\boldX^{n},Y^{n})$ is distributed according to
$\prod_{t=1}^{n}P(V_{1} \oplus V_{2} = v_{t}, \boldX = \boldx_{t},
Y_{t}=y_{t}|\boldS_{t}=\bolds_{t},\boldV_{t}=\boldv_{t})$. Lemma
\ref{Lem:ConditionalTypicalSetOccursWithHighProbability} guarantees the existence of
$N_{5}(\eta) \in \naturals$, such that for all $n \geq N_{5}(\eta)$, the term on the right
hand side of (\ref{Eqn:SecondStepOfMarkovlemmaEventReformulated}) is bounded from above by
$\frac{\eta}{8}$. Therefore, for all $n \geq N_{5}(\eta)$
\begin{equation}
 \label{Eqn:FinalStepOfSecondStepOfMarkovlemma}
P((\epsilon_{1} \cup \epsilon_{2} \cup \epsilon_{3})^{c} \cap \epsilon_{4}^{c}) \leq \frac{\eta}{8}.
\end{equation}

It remains to provide an upper bound on $P((\epsilon_{1} \cup \epsilon_{21} \cup
\epsilon_{22}\cup \epsilon_{4}^{c})^{c} \cap \epsilon_{5})$. In appendix
\ref{Sec:AnUpperBoundOnProbabilityofEpsilon4ForMAC-DSTx}, we prove the existence of
$N_{6}(\eta) \in \naturals$ such that $P(\epsilon_{5}) \leq \exp \left\{ -n \left(
3\eta_{5}(\eta)-\eta_{2}(\eta)-\eta_{3}(\eta)\right) \right\} \text{ for all }n \geq
\max\left\{ N_{1}(\eta),N_{6}(\eta) \right\}$. The informed reader will recognize that
deriving an upper bound on $P(\epsilon_{5})$ will involve proving statistical independence
of the pair $(C_{j}(M_{j}^{l_{j}}):j=1,2)$ of cosets corresponding to the legitimate
message pair $M_{j}^{l}$
and any codeword $V^{n}(\hata^{k},\hatm^{l})$ corresponding to a competing message pair
$\hatm^{l} \neq M^{l}$. This is considerably simple for a coding technique based on
classical unstructured codes wherein codebooks and codewords in every codebook are
independent. The coding technique proposed herein involves correlated codebooks - the
first $k_{1}$ rows of $G_{I_{j}}:j=1,2$ are identical\footnote{If
$H(V_{1}|S_{1})=H(V_{2}|S_{2})$, users 1 and 2 share the same generator matrix $G_{I}$.
Indeed, channel codes of users' 1 and 2 are partitioned into cosets of the same linear
code.} - and codewords in each codebook are correlated.

To conclude, we put together the upper bounds derived on the probability of events that comprise the error event. For $n \geq N_{2}(\eta)$, $P(\epsilon_{1}) \leq \frac{\eta}{8}$. In (\ref{Eqn:MACDSTxUpperBoundingProbabilityOfEncodingErrorUsingLowerBoundOnKj}), we proved $P(\epsilon_{1j}^{c}\cap \epsilon_{2j})\leq \exp \left\{ -n  \left(\eta_{1}(\eta)- \frac{3\eta_{4}(\eta)}{2} \right) \right\}$ for all $n \geq N_{3}(\eta)$. Combining (\ref{Eqn:ConclusionOfStep1OfMarkovlemma}) and (\ref{Eqn:FinalStepOfSecondStepOfMarkovlemma}), we have \[P((\epsilon_{1} \cup \epsilon_{2})^{c} \cap \epsilon_{4}^{c}) \leq \exp \left\{  -n\left( \lambda\eta_{5}^{2}(\eta)-4\eta_{1}(\eta)-4\eta_{4}(\eta)\right)\right\} + \frac{\eta}{8} \]
for all $n \geq \max\left\{ N_{1}(\eta),N_{4}(\eta),N_{5}(\eta) \right\}$. And finally
$P(\epsilon_{5}) \leq \exp \left\{ -n \left(
\eta_{2}(\eta)+\eta_{3}(\eta)-3\eta_{5}(\eta)\right) \right\}$ for all $n \geq
\max\left\{ N_{1}(\eta),N_{6}(\eta) \right\}$ follows from
(\ref{Eqn:FinalUpperBoundOnProbabilityOfEpsilon5}). By choosing
\begin{equation}
 \label{Eqn:ChoiseOfEtaParameters}
\eta_{2}(\eta) = \eta_{3}(\eta) = \frac{\eta}{16}, \eta_{5}(\eta) = \frac{\eta}{48}, \eta_{1}(\eta) = \min \left\{ \frac{\eta}{16},\frac{\lambda\eta_{5}^{2}(\eta)}{10} \right\} \mbox{ and }\eta_{4}(\eta) = \frac{\eta_{1}(\eta)}{4}
\end{equation}
it can be verified that for $n \geq \overline{N}(\eta)\define\max  \left\{ N_{i}(\eta):i\in [6] \right\}$,
\begin{itemize}
 \item $2\eta_{1}(\eta)+2\eta_{3}(\eta) < \frac{\eta}{2}$ and thus $\frac{l_{2} \log \pi}{n} \geq R_{2}-\frac{\eta}{2}$ from (\ref{Eqn:MACDSTxLowerBoundOnUser2Rate}),
\item $\eta_{2}(\eta) < \frac{\eta}{2}$ and thus $\frac{l_{1}\log \pi}{n} > R_{1}-\frac{\eta}{2}$ from (\ref{Eqn:MACDSTxUpperAndLowerBoundsOnR1}),
\item $\eta_{1}(\eta)- \frac{3\eta_{4}(\eta)}{2}=\frac{5 \eta_{1}(\eta)}{8}$ and thus $P(\epsilon_{1j}^{c}\cap \epsilon_{2j})\leq \exp \left\{ -n  \left(\frac{5 \eta_{1}(\eta)}{8} \right) \right\}$,
\item $\lambda\eta_{5}^{2}(\eta)-4\eta_{1}(\eta)-4\eta_{4}(\eta) \geq \frac{\lambda\eta_{5}^{2}(\eta)}{2}$ and thus $P((\epsilon_{1} \cup \epsilon_{2})^{c} \cap \epsilon_{4}^{c}) \leq \exp \left\{  -n\left( \frac{\lambda\eta_{5}^{2}(\eta)}{2}\right)\right\} + \frac{\eta}{8}$, and
\item $\eta_{2}(\eta)+\eta_{3}(\eta)-3\eta_{5}(\eta) = \frac{\eta}{16}$ and therefore $P(\epsilon_{5}) \leq \exp \left\{ -n \left(\frac{\eta}{16}\right) \right\}$.
\end{itemize}
For $n\geq \overline{N}(\eta)$, $P(\epsilon_{1})+P(\epsilon_{11}^{c} \cap \epsilon_{21})+P(\epsilon_{12}^{c} \cap \epsilon_{22})+P((\epsilon_{1} \cup \epsilon_{21} \cup \epsilon_{22})^{c}\cap \epsilon_{4}^{c} )+P(\epsilon_{5}) \leq \frac{\eta}{4}+3\exp\left\{ -n(\frac{5\eta_{1}}{8}) \right\}$. Thus for $n \geq N(\eta)\define \max\left\{\overline{N}(\eta),\frac{1}{\eta_{1}(\eta)}\log \lceil \frac{4}{\eta}\rceil  \right\}$, the error event has probability at most $\eta$.
\end{proof}
We conclude this section with two remarks.
\begin{remark}
 \label{Rem:TheoremAchievesCapacityForBDDMAC}
 For BDD-MAC described in section
\ref{SubSec:AnAchievableRateRegionForArbitraryMAC-DSTxUsingNestedCosetCodes},
$\beta_{f}(\boldTau) = \mathbb{C}(\boldTau)$. Indeed, the test channel
$p_{\boldV\boldS\boldX Y} \in \mathbb{D}_{f}(\boldTau)$ defined as
$p_{\boldV\boldS\boldX}=\prod_{j=1}^{2}p_{V_{j}S_{j}X_{j}}$ where $V_{j}$ takes values
over $\mathcal{V}_{j}=\left\{ 0,1 \right\}$ with
 \begin{equation}
  \label{Eqn:TestChannelThatAchievesPhilosofZamir}
  p_{V_{j},X_{j}|S_{j}}(x_{j}\oplus_{2}s_{j},x_{j}|s_{j}) = \left\{ \begin{array}{lr} 1-\tau&\mbox{if }x_{j}=0\\\tau &\mbox{otherwise} \end{array} \right. \nonumber
 \end{equation}
for each $j=1,2$ and $s_{j} \in \{0,1\}$ achieves $\mathbb{C}(\boldTau)=\left\{ (R_{1},R_{2}):R_{1}+R_{2} \leq h_{b}(\tau) \right\}$.
\end{remark}

We have thus presented a coding technique based on decoding the sum of codewords chosen by
the encoders and analyzed the same to derive an achievable rate region for an arbitrary
MAC-DSTx. One might
attempt a generalization of PZ-technique along the lines of modulo lattice transformation
proposed by Haim, Kochman and Erez \cite{201207ISIT_HaiKocEre}. The rate region proposed
herein subsumes that achievable through modulo-lattice transformation using test channels
identified through the virtual channel in a natural way.\begin{comment}{ Moreover, the
technique proposed herein goes a step further through the use of nested codes. In
particular, the technique proposed in \cite{201207ISIT_HaiKocEre} restricts virtual
channels to be symmetric - those for which the uniform distribution achieves 
capacity - whereas the use of nested coset codes enable us optimize over a larger class of channels.
}\end{comment}
\subsection{Examples}
\label{SubSec:Examples}

A key element of the coding framework proposed herein lies in characterizing achievable
rate regions for arbitrary test channels, i.e., test channels that are not restricted to
be uniform or additive in nature using structured codes.

A few remarks on our study of the following examples are in order. The examples needing to
be non-additive lends it considerably hard to
provide analytical upper bounds for the rate region achievable using unstructured
codes.\footnote{We recognize that the analytical upper bound derived in
\cite{200906TIT_PhiZam} is a key element of the findings therein.} We therefore resort to
computation. It can be noted that the problem of computing the sum rate bound achievable
using unstructured codes is a non-convex optimization problem. The only approach is
direct enumeration, i.e., sampling the probability matrix of the auxiliary
random variables.\footnote{This holds even for the case of multiple access without states
for which a computable characterization of the capacity region is known.} Sampling the
probability matrix with any reasonable step size beyond the auxiliary alphabets of size
$2$ is infeasible with currently available computation resources. The sum rate bound for
the unstructured coding technique projected below is therefore obtained through 
computation involving binary auxiliary alphabet sets followed by convexification
(time sharing between different costs). The
resulting space of probability distributions that respect the cost constraints is sampled
with a step size of $0.015$ in each dimension. The resulting bound on the sum rate
achievable using unstructured codes (without time sharing) is marked with blue crosses
(denoted $\alpha$ in the legend) in the plots. The resulting upper bound is obtained as
an upper convex envelope. Similarly, sum rate achievable using nested coset codes is
marked with red circles (denoted $\beta$ in the legend) in the plots.

\begin{figure}[h]
\includegraphics[height=2.2in,width=4.4in]{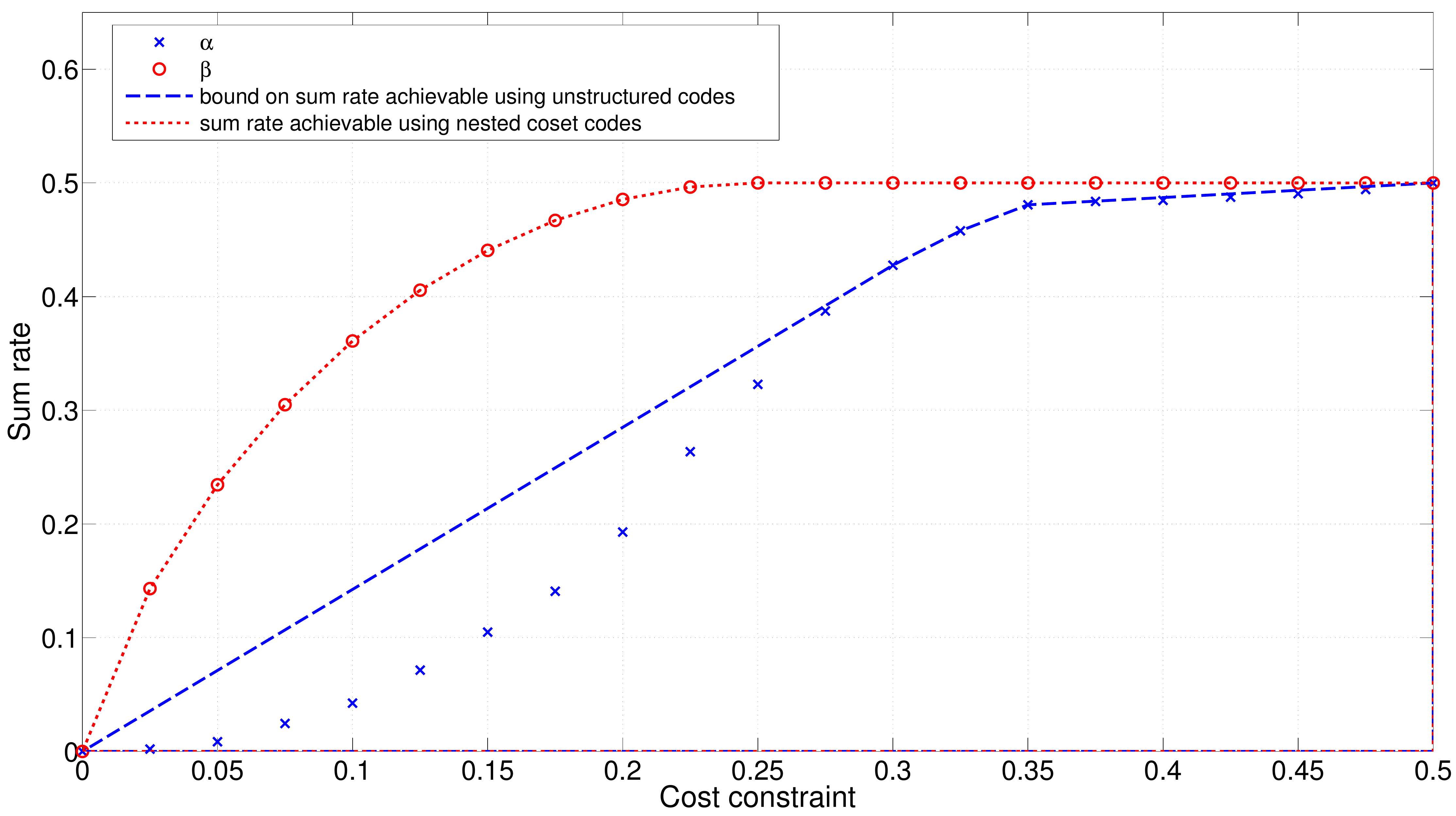}
\caption{Bounds on sum rate for example \ref{Ex:DoublyDirtyMACReplacedWithLogicalOr}}
\label{Fig:PlotsForDoublyDirtyMACReplacedWithLogicalOr}
\end{figure}
For examples \ref{Ex:DoublyDirtyMACReplacedWithLogicalOr} and \ref{Ex:Example5}, we
assume the alphabet sets to be binary $\setS_{j}=\setX_{j}=\{ 0,1\}$, $j=1,2$, (ii)
uniform
and independent states, i.e.,
$W_{\boldS}(\bolds) = \frac{1}{4}$ for all $\bolds \in \setS$, (iii) a
Hamming cost function $\kappa_{j}(1,s_{j})=1$ and
$\kappa_{j}(0,s_{j})=0$ for any $s_{j} \in \setS_{j}$, $j=1,2$.

\begin{example}
\label{Ex:DoublyDirtyMACReplacedWithLogicalOr}
Let $Y = (X_{1}\vee S_{1})\oplus (X_{2}\vee S_{2})$, where $\vee$ denotes logical OR
operator. Having studied the BDD-MAC it is natural to conjecture that the test channel
that
optimizes the sum rate achievable using linear codes to be
$p_{U_{j}X_{j}|S_{j}}(0,0|0)=1-2\tau,
p_{U_{j}X_{j}|S_{j}}(1,1|0)=2\tau,p_{U_{j}X_{j}|S_{j}}(1,0|1)=1,$ for $j=1,2$ when the
cost constraint $\tau \in [0,\frac{1}{4}]$. Indeed, our numerical computation asserts
this. In other words, the sum rate
achievable using linear codes for a cost $\tau \in
(0,\frac{1}{4})$ is $\frac{h_{b}(2\tau)}{2}$ and $0.5$ for $\tau \in [0.25,0.5]$. The sum
rate achievable using unstructured codes and nested
coset codes are plotted in figure
\ref{Fig:PlotsForDoublyDirtyMACReplacedWithLogicalOr}. We
highlight significant gains achievable using nested coset codes.

A preliminary look at this channel may lead the reader to conclude that
PZ-technique appropriately modified can achieve the same sum rate as that achievable
using nested coset codes, since the above test channel is additive, i.e.,
$U_{j}=S_{j}\oplus X_{j}$ for $j=1,2$ and $Y=U_{1} \oplus U_{2}$.\begin{comment}{ For a
cost constraint $\tau \in (0,\frac{1}{4})$, one might suggest employing the PZ-coding
technique proposed for cost $2\tau$ and employ the symbols of the PZ-code only 
}\end{comment} However, a careful analysis will reveal the significance of the coding
framework proposed herein. The induced pmf on $U_{j}$, $p_{U_{j}}(1)=\frac{1}{2}+2\tau
$ for $\tau \in (0,\frac{1}{4})$ is \textit{not} uniform, and the PZ-technique of
choosing a codeword in the indexed bin with an average Hamming distance of $\tau$ does
not yield the sum rate guaranteed by nested coset codes. Nesting of codes enables
achieving
non-uniform distributions that are necessary as exemplified herein.
\end{example}

\begin{example}
\label{Ex:Example5}
The channel transition matrix is given in table
\ref{Table:ChannelTransitionMatrixExample5}. 1) An upper bound on sum rate achievable
using unstructured codes and 2) sum rate achievable using nested coset codes are
plotted in figure \ref{Fig:SumRatePlotsForIntractableChannel}. This channel is obtained by
randomly perturbing the BDD-MAC.\footnote{The reader is referred to \cite[Section VI.C]{201301arXivMACDSTx_PadPra} wherein we have presented results for a few more channels obtained by a random perturbation of the BDD-MAC.} In the
space of channel transition probability matrices, this channel is in a neighborhood of the
BDD-MAC. Since the rate regions are continuous functions over this space of channels, the
coding technique proposed herein outperforms unstructured coding technique in this
neighborhood. This example validates the same. As in the previous example, we note that
the optimizing distribution of the auxiliary random variables is non-uniform for
certain cost values. Furthermore, note that $\beta_{f}(\boldTau)$ does not contain
$\alpha(\boldTau)$ and therefore it helps to incorporate both unstructured and structured
coding techniques as will be studied in the following section.
\end{example}
\begin{figure}
\includegraphics[height=2.2in,width=4.4in]{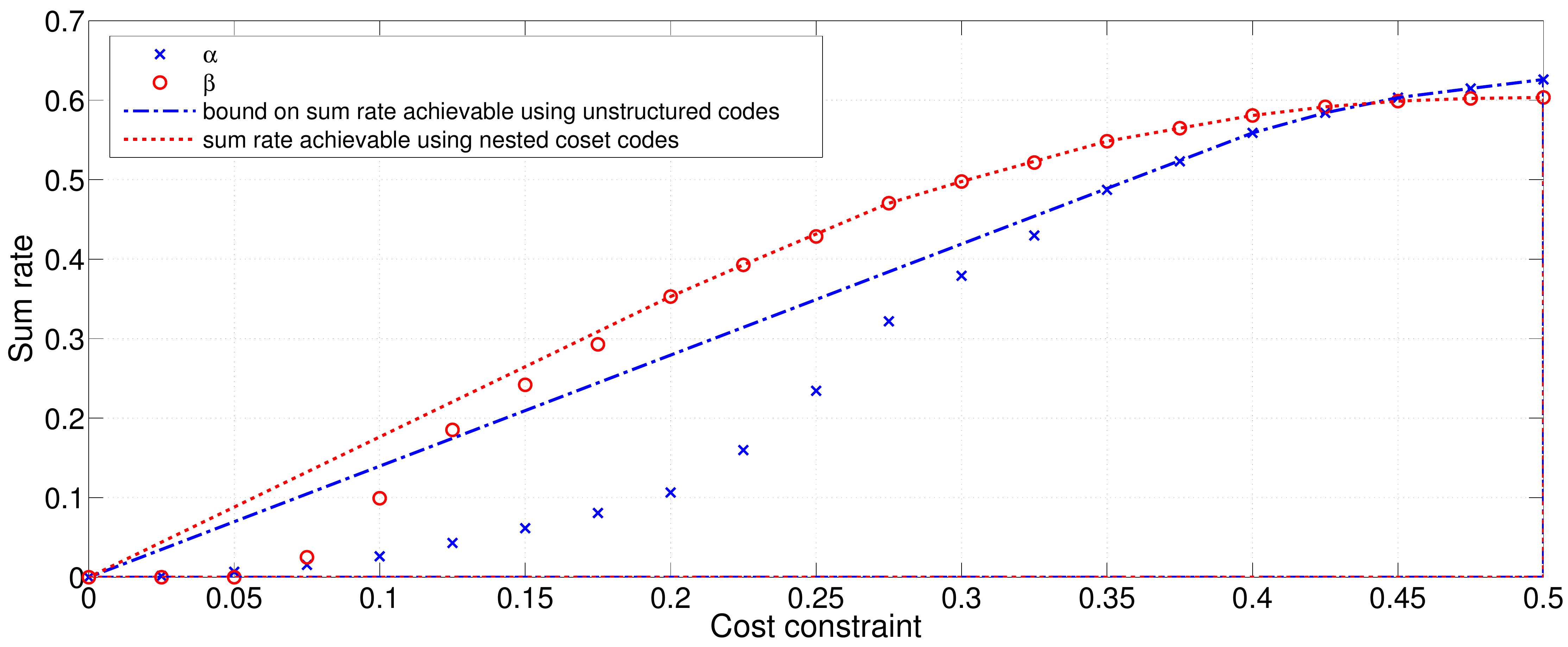}
\caption{Bounds on sum rate for example \ref{Ex:Example5}}
\label{Fig:SumRatePlotsForIntractableChannel}
\end{figure}
\begin{table} \begin{center}
\begin{tabular}{|c|c|c|c|c|c|c|c|c|c|} \hline
$\scriptstyle S_{2}X_{2}S_{1}X_{1}$& $\scriptscriptstyle W_{ Y| \boldS\boldX}(0|\cdot)$&
$\scriptstyle S_{2}X_{2}S_{1}X_{1}$ &$\scriptscriptstyle W_{ Y|
\boldS\boldX}(0|\cdot)$&$\scriptstyle S_{2}X_{2}S_{1}X_{1}$& $\scriptscriptstyle W_{ Y|
\boldS\boldX}(0|\cdot)$&
$\scriptstyle S_{2}X_{2}S_{1}X_{1}$ &$\scriptscriptstyle W_{ Y| \boldS\boldX}(0|\cdot)$\\
% [.92;.08;0.06;0.94;.10;.92;.95;.06;0.07;.92;.96;.10;.88;0.08;0.11;0.91]
\hline\hline
0000 & 0.92 & 1000 & 0.07  &0100 & 0.10 & 1100 &  0.88  \\
\hline
0001 & 0.08 & 1001 &  0.92 &0101 & 0.92 & 1101 &   0.08\\
\hline
0010 & 0.06 & 1010 & 0.96 &0110 & 0.95 & 1110 &  0.11\\
\hline
0011 & 0.94 & 1011 &  0.10&0111 & 0.06 & 1111 &  0.91\\
\hline
\end{tabular} \end{center}
\caption{Channel transition matrix Example \ref{Ex:Example5}}
\label{Table:ChannelTransitionMatrixExample5} 
\end{table} 

\begin{example}
 \label{Ex:OrExample}
 Consider the channel $Y=(S_{1} \oplus X_{1}) \vee (S_{2} \oplus X_{2})$. Observe
that the information available at the encoders is fused through a logical OR operation by
the channel. Moreover, $(U_{1},U_{2})- U_{1}\oplus_{3} U_{2}-U_{1}\vee U_{2}$ is a
Markov chain and hence, although channel input, state and output alphabets are binary, we
expect that for certain choice of auxiliary distributions, the sum rate achievable using
codes over $\mathcal{F}_{3}$ is larger than that achievable using unstructured codes.
Through an exhaustive search, we have identified such distributions, an example of which
is given in table \ref{Table:DistributionForWhichLinearCodeDoesBetter}.
\end{example}
\begin{table} \begin{center}
\begin{tabular}{|c|c|c|c|c|c|c|c|c|c|} \hline
$\scriptstyle U_{1}S_{1}X_{1}$& $\scriptscriptstyle p_{U_{1}S_{1}X_{1}}$&
$\scriptstyle  U_{1}S_{1}X_{1}$ &$\scriptscriptstyle p_{U_{1}S_{1}X_{1}}$&$\scriptstyle
U_{2}S_{2}X_{2}$& $\scriptscriptstyle p_{U_{2}S_{2}X_{2}}$&
$\scriptstyle U_{2}S_{2}X_{2}$ &$\scriptscriptstyle p_{U_{2}S_{2}X_{2}}$\\
% [.92;.08;0.06;0.94;.10;.92;.95;.06;0.07;.92;.96;.10;.88;0.08;0.11;0.91]
\hline
\hline
000 & 0.1472 & 101 & 0.3528  &000 & 0.1472 & 101 & 0.3528  \\
\hline
011 & 0.50 &  &   &011 & 0.50 &  &   \\
\hline
\end{tabular} \end{center}
\caption{Test channel for example \ref{Ex:OrExample} for which nested coset code
over $\mathcal{F}_{3}$ performs
better than unstructured code}
\label{Table:DistributionForWhichLinearCodeDoesBetter} 
\end{table} 

For the distribution in table \ref{Table:DistributionForWhichLinearCodeDoesBetter}, the rate achievable using nested coset codes over
$\mathcal{F}_{3}$ is $0.0017$, while that
achievable using unstructured code is negative. For an appropriate choice of cost
function, the above might be the optimizing distribution for the
unstructured coding scheme thus resulting in larger sum rate using nested coset codes over
$\mathcal{F}_{3}$. We do not as of yet have a precise analytical characterization of such
a cost function\footnote{Such a characterization of cost function is available for
point-to-point channels with state available at both encoder and decoder \cite{CK-IT2011},
\cite{200305TIT_GasRimVet}, \cite{200607TIT_PraRam}.} and we are in pursuit of the
same. Nevertheless, the above lends credence to the use of nested coset codes for
arbitrary channels.

\section{Stage II: Combining unstructured and structured coding techniques}
\label{Sec:AUnifiedAchievableRateRegion}
In this section, we put together the techniques of unstructured and structured random
coding to derive a larger achievable rate region for a general MAC-DSTx. Our
approach is similar to that proposed by Ahlswede and Han
\cite[Section VI]{198305TIT_AhlHan} for the problem of reconstructing mod$-2$ sum
of distributed binary sources. We begin with a characterization of valid test channels.
\begin{definition}
 \label{Defn:CollectionOfDistributionsRequiredToUnifyUnstructuredAndStructuredTechniques}
 Let $\mathbb{D}_{sf}(\boldTau) \subseteq \mathbb{D}(\boldTau)$ be the
collection of distributions $p_{\boldU\boldV\boldS\boldX Y}$ on
$(\mathcal{U}\times\mathcal{V})^{2}\times \mathcal{S}\times\mathcal{X}\times\mathcal{Y}$
where
$\mathcal{U}$ is a finite set and $\mathcal{V}$ is a finite field. For
$p_{\boldU\boldV\boldS\boldX Y} \in \mathbb{D}_{sf}(\boldTau)$, let
$\mathcal{R}_{sf}(p_{\boldU \boldV\boldX \boldS Y})$ be defined as the set of rate pairs
$(R_{1},R_{2}) \in [0,\infty)^{2}$ that satisfy
\begin{eqnarray}
\begin{array}{l}R_{1} \leq I(U_{1};U_{2}Y)-I(U_{1};S_{1}) +
\min \left\{H(V_{1}|U_{1},S_{1}),H(V_{2}|U_{2},S_{2})\right\}-H(V_{1}\oplus
V_{2}|U_{1},U_{2},Y), \\
R_{2}\leq I(U_{2};U_{1}Y)-I(U_{2};S_{2}) +
\min \left\{
H(V_{1}|U_{1},S_{1}),H(V_{2}|U_{2},S_{2})\right\}-H(V_{1}\oplus
V_{2}|U_{1},U_{2},Y), \\
R_{1}+R_{2} \leq
I(U_{1}U_{2};Y)+I(U_{1};U_{2})-\sum_{j=1}^{2}I(U_{j};S_{j})+
\min \left\{
H(V_{1}|U_{1},S_{1}),H(V_{2}|U_{2},S_{2})\right\}
\\~~~~~~~~~~~~~~~~~~~~~~~~~~~-H(V_{1}\oplus V_{2}|U_{1},U_{2},Y),
\end{array}\nonumber\end{eqnarray} where $\oplus$ is addition in $\mathcal{V}$. Let
\begin{equation}
 \label{Eqn:LargestKnownAchievableRateForMACDSTxUsingStructuredAndUnstructuredCodes}
 \mathcal{R}_{sf}(\boldTau) \define \cocl \left( \underset{p_{\boldU\boldV\boldX\boldS Y}
\in
\mathbb{D}_{sf}(\boldTau)}{\bigcup} \mathcal{R}_{sf}(p_{\boldU \boldV\boldX \boldS Y}) 
\right)
\end{equation}
\end{definition}

\begin{thm}
 \label{Thm:AchievableRateRegionUsingNestedCosetCodesAndUnstructuredCodes}
$\mathcal{R}_{sf}(\boldTau) \subseteq \mathcal{C}(\boldTau)$.
\end{thm}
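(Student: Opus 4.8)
The plan is to glue the unstructured Gelfand--Pinsker--type coding of Theorem~\ref{Thm:CapacityOfMAC-DSTx} onto the nested-coset sum-decoding scheme of Theorem~\ref{Thm:AchievableRateRegionUsingNestedCosetCodes}, following the superposition idea of Ahlswede and Han \cite[Section VI]{198305TIT_AhlHan}. Fix a pmf $p_{\boldU\boldV\boldS\boldX Y}\in\mathbb{D}_{sf}(\boldTau)$ and a rate pair $\boldR\in\mathcal{R}_{sf}(p_{\boldU\boldV\boldX\boldS Y})$. Split each message $M_j$ into an ``unstructured'' part $M_{jU}$ of rate $R_{jU}$ and a ``structured'' part $M_{jV}$ of rate $R_{jV}$, with $R_{jU}+R_{jV}=R_j$. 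First I would have encoder~$j$ generate a $U_j$-codebook exactly as in the proof of Theorem~\ref{Thm:CapacityOfMAC-DSTx}: an unstructured codebook of rate $I(U_j;S_jU_jY)$-ish, binned at rate $I(U_j;S_j)$ using $M_{jU}$, from which it picks $U_j^n$ jointly typical with $S_j^n$. Then, conditioned on the chosen $U_j^n$, encoder~$j$ runs the Stage~I nested-coset construction \emph{on top of} $U_j^n$: a nested coset code over the field $\setV$ whose inner code has rate $1-\frac{\min\{H(V_1|U_1,S_1),H(V_2|U_2,S_2)\}}{\log\pi}$ and whose cosets are indexed by $M_{jV}$, with the codeword selection now requiring joint typicality of $(U_j^n,V_j^n,S_j^n)$ according to $p_{U_jV_jS_j}$. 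The channel input is $X_j^n=f_j^n(U_j^n,V_j^n,S_j^n)$ (deterministic by condition (iv) of $\mathbb{D}_{sf}$, suitably extended). The decoder, upon receiving $Y^n$, first recovers the pair $(U_1^n,U_2^n)$ and the unstructured message parts by a joint-typicality decoder over the Cartesian product of the two $U$-codebooks (this yields the three MAC-type constraints with the $\min$-term added), and then, conditioned on the decoded $(U_1^n,U_2^n)$, runs the Stage~I sum-decoder: it searches the sum coset code $\lambda_O/\lambda_I$ for codewords jointly typical with $(U_1^n,U_2^n,Y^n)$ w.r.t. $p_{V_1\oplus V_2|U_1U_2}p_{Y|U_1U_2}$, decoding $V_1^n\oplus V_2^n$ and hence the coset index carrying $(M_{1V},M_{2V})$.

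Next I would organize the error analysis as a union over the same families of events used in the two earlier proofs, stitched together. The encoder error events are: (a) $S_j^n$ atypical; (b) no $U_j^n$ in the indexed $U$-bin jointly typical with $S_j^n$ --- handled verbatim from Theorem~\ref{Thm:CapacityOfMAC-DSTx}; (c) given a good $U_j^n$, no $V_j^n$ in the indexed coset with $(U_j^n,V_j^n,S_j^n)$ jointly typical --- this is the Stage~I encoder-error bound of Lemma~\ref{Lem:WhenIsACosetCodeAGoodSourceCode?}(3), now conditioned on $U_j^n$, which is why the inner-code rate is driven by the \emph{conditional} entropy $H(V_j|U_j,S_j)$. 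The decoder error events are: (d) the true $(U_1^n,U_2^n,Y^n)$ not jointly typical --- a conditional-frequency-typicality argument (Lemma~\ref{Lem:ConditionalTypicalSetOccursWithHighProbability}), preceded by a Markov-lemma step establishing joint typicality of $(U_1^n,V_1^n,S_1^n,S_2^n,V_2^n,U_2^n,X^n,Y^n)$ exactly as in (\ref{Eqn:ConclusionOfStep1OfMarkovlemma})--(\ref{Eqn:FinalStepOfSecondStepOfMarkovlemma}), using now the Markov chains $U_1-S_1-S_2-U_2$ and $(U_1,V_1)-S_1-S_2-(U_2,V_2)$; (e) a competing $(\hatu_1^n,\hatu_2^n)$ jointly typical with $Y^n$ --- the standard MAC packing argument giving the three rate constraints with $I(U_1;U_2)$ correction; and (f) a competing sum coset jointly typical with $(U_1^n,U_2^n,Y^n)$ --- the Stage~I packing bound of Appendix~\ref{Sec:AnUpperBoundOnProbabilityofEpsilon4ForMAC-DSTx}, now conditioned on the decoded $(U_1^n,U_2^n)$, which yields the $H(V_1\oplus V_2|U_1,U_2,Y)$ term and caps $k+l_1+l_2$. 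Collecting (e) and (f) and substituting the code parameters reproduces the three inequalities defining $\mathcal{R}_{sf}(p_{\boldU\boldV\boldX\boldS Y})$; an appropriate choice of the auxiliary $\eta_i(\eta)$ parameters (mimicking (\ref{Eqn:ChoiseOfEtaParameters})) drives each exponential term to zero and makes the polynomial error terms $\leq\eta$. Cost and the final $\cocl$ closure (time-sharing) are handled exactly as before, and a cardinality bound on $\setU$ follows from a Fenchel--Eggleston argument as in the remark after Theorem~\ref{Thm:CapacityOfPTP-STx}.

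The main obstacle I expect is step~(f) --- the conditional packing analysis for a competing sum coset --- because it compounds the two hardest ingredients of the earlier proofs. In Stage~I the delicate point was proving statistical independence of a codeword in a \emph{competing} sum coset from the pair of cosets indexed by the true messages, despite the shared first $k_1$ rows of $G_{I_1},G_{I_2}$; here that analysis must additionally be carried out \emph{conditioned on} the unstructured codewords $(U_1^n,U_2^n)$ chosen by the superposition layer, and one must be careful that the selection of $V_j^n$ (which depended on $U_j^n$ and $S_j^n$) does not destroy the uniformity/pairwise-independence of codewords in competing cosets. The right way to handle this, I think, is to condition on the realization of the $U$-layer and of $(S_1^n,S_2^n)$ first, argue that the random nested-coset generator matrices $G_{I_2},G_{O_j/I_j},B_j^n$ remain independent of that conditioning, and then invoke the Stage~I lemmas (the codewords-of-a-random-linear-code-are-uniform-and-pairwise-independent lemma and its sum-coset extension) verbatim. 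A secondary technical nuisance is bookkeeping the two-layer binning so that the inner coset rate is simultaneously large enough for \emph{both} encoders' conditional source-coding tasks (hence the $\min$ over $H(V_j|U_j,S_j)$) while the total outer rate $k+l_1+l_2$ stays below $1-\frac{H(V_1\oplus V_2|U_1,U_2,Y)}{\log\pi}$; this is exactly the Stage~I parameter juggling of (\ref{Eqn:Parameterk1OfNestedCosetCode})--(\ref{Eqn:Identicalk1RowsOfgI1AndgI2}) with every entropy replaced by its $U$-conditioned version, and I do not anticipate genuine difficulty there, only length.
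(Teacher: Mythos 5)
Your proposal follows essentially the same route as the paper's proof: split each message into an unstructured part carried by a random code over $\setU^{n}$ and a structured part carried by the Stage~I nested coset code, encode and decode the two layers sequentially (first the $(U_{1}^{n},U_{2}^{n})$ pair, then the sum coset conditioned on it), and recover $\mathcal{R}_{sf}$ by Fourier--Motzkin elimination of the bin and coset rates. Your identification of the conditional packing step (independence of a competing sum coset from the true cosets, conditioned on the $U$-layer and states) as the delicate point, and your plan to handle it by conditioning first and then invoking the Stage~I independence lemmas, matches how the paper's argument is meant to be completed.
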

\begin{remark}
 \label{Rem:TheLargestKnownAchievableRateRegion}
$\alpha(\boldTau) \subsetneq \mathcal{R}_{sf}(\boldTau)$.
\end{remark}

\begin{proof}
 Achievability of $\mathcal{R}_{sf}(\boldTau)$ is proved by gluing
together unstructured and structured coding techniques. Each encoder splits it's message
$M_{j}$ into two parts $M_{j,1}$ and $M_{j}^{l_{j}}$. $M_{j,1}$ is communicated to the
decoder using an unstructured random code built over $\setU^{n}$. $M_{j}^{l_{j}}$ is
communicated to the decoder using a nested coset code identical to that proposed in proof
of theorem \ref{Thm:AchievableRateRegionUsingNestedCosetCodes}. With regard to nested
coset codes, we employ the notation proposed in the proof of theorem
\ref{Thm:AchievableRateRegionUsingNestedCosetCodes} and do not restate the same. 

Encoder $j$ is provided a codebook built over $\setU^{n}$ that contains
$2^{n\bar{R_{j}}}$ bins each with $2^{nB_{j}}$
codewords. For $1 \leq b_{j} \leq 2^{nB_{j}}$, let $u_{j}(r_{j},b_{j})$ denote a
generic codeword in bin $r_{j}$ ($1\leq r_{j}\leq 2^{n\bar{R_{j}}}$). Encoder $j$ is also
provided with the nested coset code $\lambda_{O_{j}/I}$. Without loss of generality, we
assume $M_{j}^{l_{j}} \in
\setV^{l_{j}}$. Encoder $j$ observes state sequence
$S_{j}^{n}$ and declares error if $S_{j}^{n}
\notin T_{\frac{\delta}{8}}(W_{S_{j}})$. Otherwise it looks for a pair
$(u_{j}^{n}(M_{j,1},b_{j}),v^{n}(a^{k},M_{j}^{l_{j}})) \in
T_{\frac{\delta}{4}}({U_{j}V_{j}}|S_{j}^{n})$. If it finds at least one such
pair, one of them say, $(u_{j}^{n}(M_{j,1},b_{j}),v^{n}(a^{k},M_{j}^{l_{j}}))$ is chosen
uniformly at random and $e_{j}^{n}(M_{j},S_{j}^{n})$ is transmitted, where $e_{j}^{n}(M_{j},S_{j}^{n})$ is a function of $u_{j}^{n}(M_{j,1},b_{j}),v^{n}(a^{k},M_{j}^{l_{j}}),S_{j}^{n}$ that is determined upfront.
Otherwise, an error is declared.

We now specify the decoding rule. The decoder receives $Y^{n}$ and declares error if
$Y^{n} \notin T_{\frac{\delta}{2}}({Y})$. Otherwise, decoding is performed in two
stages. In the first stage it lists all codewords $(u^{n}_{j}(m_{j,1},b_{j}):j=1,2) \in 
 T_{\delta}^{n}({U_{1},U_{2}}|y^{n})$. If it finds exactly one such pair, say
$(u^{n}_{j}(m_{j,1},b_{j}):j=1,2)$, then the decoding proceeds to the next stage.
Otherwise, an error is declared and decoding halts. In the second stage, the decoder looks
for all
codewords $v^{n}(a^{k},\bold{m^{l}}) \in \lambda_{O}$ such that
$(u^{n}_{j}(m_{j,1},b_{j}):j=1,2,v^{n}(a^{k},\bold{m^{l}}),Y^{n}) \in
T_{\delta}^{n}({U,V_{1}\oplus V_{2},Y})$. If it finds all such codewords in a unique bin, say
corresponding to $\bold{m^{l}}$, then it declares $m_{j,1},m_{j}^{l_{j}}:j=1,2$ as the
decoded pair of messages. Otherwise, an error is declared. We derive an upper bound on
probability of error by
averaging the error probability over the ensemble codes. A pmf is induced over
the ensemble of codes by letting $U_{j}^{n}(r_{j},b_{j}):1\leq r_{j \leq
2^{n\bar{R_{j}}}}, 1 \leq b_{j} \leq 2^{nB_{j}},j=1,2$ be mutually independent and
distributed according to $\prod_{t=1}^{n}p_{U_{j}}$. The pmf induced on the
ensemble of nested coset codes is identical to that in proof of theorem
\ref{Thm:AchievableRateRegionUsingNestedCosetCodes}. Moreover, $(G_{I},
G_{O_{j}/I},B_{j}^{n}:j=1,2)$ is independent of the unstructured random code on
$\setU^{n}$. Analyzing the error events, we obtain the following sufficient
conditions for the average probability of error to decay exponentially.
\begin{eqnarray}
 \label{Eqn:BoundsOnRatesForUnifiedRateRegion}
 B_{1} \geq I(U_{1};S_{1})&B_{2} \geq I(U_{2};S_{2})\nonumber\\
 \bar{R_{1}}+B_{1} \leq I(U_{1};U_{2}Y)&\bar{R_{2}}+B_{2} \leq
I(U_{2};U_{1}Y)\nonumber\\
\frac{k}{n} \geq 1-H(V_{1}|U_{1}S_{1})&\frac{k}{n} \geq 1-H(V_{2}|U_{2}S_{2})\nonumber\\
\sum_{j=1}^{2}\bar{R_{j}}+B_{j} \leq
I(\boldU;Y)+I(U_{1};U_{2})&\frac{l_{1}+l_{2}}{n} \leq 1 - H(V_{1}+V_{2}|\boldU
Y).\nonumber
\end{eqnarray}
For each $j=1,2$, substituting $R_{j}-\frac{l_{j}}{n}$ for $\bar{R_{j}}$ in the above
bounds and eliminating $B_{j},\frac{k}{n},\frac{l_{j}}{n}:j=1,2$ using the technique of
Fourier-Motzkin \cite[Appendix D]{201201NIT_ElgKim}, $\mathcal{R}_{sf}(\boldTau)$ is
proved achievable.
\end{proof}
\begin{remark}
 \label{Eqn:JointTypicalityEncodingAndDecodingMayPerformBetter}
The above rate region is obtained by analyzing sequential typicality encoding and
decoding, i.e., encoding and decoding of unstructured codes precedes that of structured
codes. The informed reader will recognize that performing joint typicality encoding and
decoding of unstructured and structured codes might enlarge the achievable rate region.
While this might be true, Fourier-Motzkin elimination of the resulting bounds does not
yield a compact description of the resulting achievable rate region. We therefore chose to
present the above rate region.
\end{remark}
We conclude with an illustrative example.
\begin{example}
\label{Ex:Blackwell}
For $ j=1,2$, let $\mathcal{S}_{j}=\mathcal{X}_{j}=\mathcal{Y}=\{ 0,1\}$. The channel
transition is described as $W_{Y|\boldX \boldS}(y|\boldx,\bolds) =
W^{*}_{Y|g(\boldX,\boldS)}(y|g(\boldx,\bolds))$, where $g(\boldx,\bolds) =
[(s_{2}\wedge \bar{x_{2}})\wedge(\bar{s_{1}}\vee x_{1})]\vee [(s_{1}\wedge
\bar{x_{1}})\wedge(\bar{s_{2}}\vee x_{2})]$, $\wedge$ denotes logical AND, and
$W^{*}_{Y|g(\boldX,\boldS)}(1|0)=0.02$,
$W^{*}_{Y|g(\boldX,\boldS)}(0|1)=0.04$. The function $g(\cdot,\cdot)$ can be
alternatively described as $g(\boldX,\boldS)=[S_{1}\wedge (S_{1} \oplus X_{1})]\oplus
[S_{2}\wedge (S_{2} \oplus X_{2})]$.

This channel is inspired by Blackwell's
broadcast channel and in particular the coding technique proposed by Gelfand
\cite{197707PIT_Gel}.\footnote{Analogous to the defect masking the written bits, here the
states mask the corresponding channel.} The bounds on the sum rate achievable with
unstructured and
nested coset codes are plotted in figure \ref{Fig:PlotsForBlackwellChannel}. The above
plots unequivocally indicate $\mathcal{R}_{sf}(\boldTau)$ to be strictly larger than
$\alpha(\boldTau) \cup \beta_{f}(\boldTau)$ and in particular either one of
$\alpha(\boldTau)$, $\beta_{f}(\boldTau)$. It is therefore desirable to compute
$\mathcal{R}_{sf}(\boldTau)$, however the presence of two additional auxiliary random
variables lends computation infeasible with current computational resources. We remark
that the structure of this example enables us to argue the strict
containment $\alpha(\boldTau) \cup \beta_{f}(\boldTau)
\subsetneq\mathcal{R}_{sf}(\boldTau)$ in spite of not being able to compute
$\mathcal{R}_{sf}(\boldTau)$.
\end{example}
\begin{figure}
\includegraphics[height=2.2in,width=4.4in]{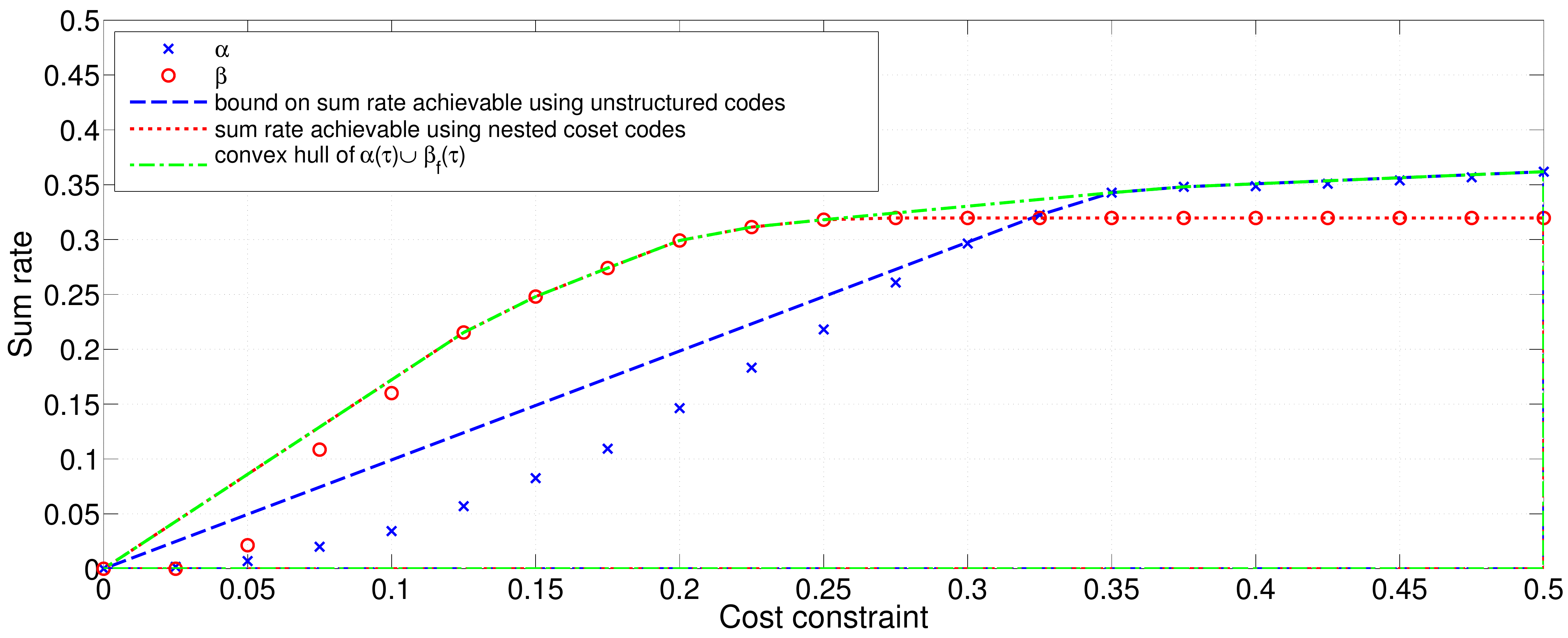}
\caption{Bounds on sum rate for example \ref{Ex:Blackwell}}
\label{Fig:PlotsForBlackwellChannel}
\end{figure}

\section{Stage III: Achievable rate region using codes over Abelian groups}
\label{Sec:EnlargingAchievableRateRegionUsingCodesOverGroups}

Consider a quaternary doubly dirty MAC-DSTx (QDD-MAC), with
$\StateAlphabet_{j}=\InputAlphabet_{j}=\OutputAlphabet=\{ 0,1,2,3\}$, $j=1,2$. The state
sequences are independent and uniformly distributed, i.e., $W_{\boldS}(\bolds) =
\frac{1}{16}$ for all $\bolds \in \StateAlphabet$. The channel transition is described by
the relation $Y=X_{1} \diamondplus S_{1} \diamondplus X_{2} \diamondplus S_{2}$, where
$\diamondplus$ denotes addition mod$-4$. All nonzero symbols have equal cost, i.e.,
$\kappa_{j}(x,s_{j})=1$ for all $x \in \left\{1,2,3  \right\}$ and $\kappa_{j}(0,s_{j})=0$
for all $s_{j} \in \StateAlphabet_{j}$, $j=1,2$ and the input is subject to a symmetric
cost constraint $\boldTau=(\tau,\tau)$.

What would be the achievable rate region for QDD-MAC using unstructured codes? It is natural to guess the optimizing test channel to be
\begin{equation}
\label{Eqn:OptimizingTestChannelForUnstructuredCodesOnQDD-MAC}
p_{X_{j}V_{j}|S_{j}}(x_{j},x_{j} \diamondplus s_{j}|s_{j})=
\begin{cases}
1-\tau &\mbox{ for }x_{j}=0\\
\frac{\tau}{3} &\mbox{ otherwise.}
\end{cases}
\end{equation}
In appendix D of \cite{201301arXivMACDSTx_PadPra}, with the aid of numerical computation,
we argue that this is indeed the case. The sum rate achievable using unstructured codes
can be evaluated to be the upper convex envelope of the function $\alpha:[0,\frac{3}{4}]
\rightarrow
[0,\infty)$ defined as $\alpha(\tau)=\max \left\{
-2\tau\log(\frac{\tau}{3})-2(1-\tau)\log(1-\tau))-2,0 \right\}$. Since $4$ is a prime
power, there exists a unique field $\mathcal{F}_{4}$ of cardinality $4$. Do nested coset
codes built over $\mathcal{F}_{4}$ achieve a larger sum rate?

We are unable to characterize the sum rate achievable using nested coset codes and the
dimensionality of the space of probability distributions lends computation infeasible. We
conjecture that the above test channel optimizes the sum rate achievable using nested
coset
codes. In any case, computing the sum rate achievable using nested coset codes for the
above test channel is instructive. It can be verified that the sum rate achievable using
the above test channel with nested coset codes is the upper convex envelope of the
function $\beta_{f}:[0,\frac{3}{4}] \rightarrow [0,\infty)$ defined as
$\beta_{f}(\tau)=\max
\left\{ -\tau\log(\frac{\tau}{3})-(1-\tau)\log(1-\tau))-\frac{1}{2},0 \right\}$.

\begin{figure}
\includegraphics[height=2.2in,width=4.4in]{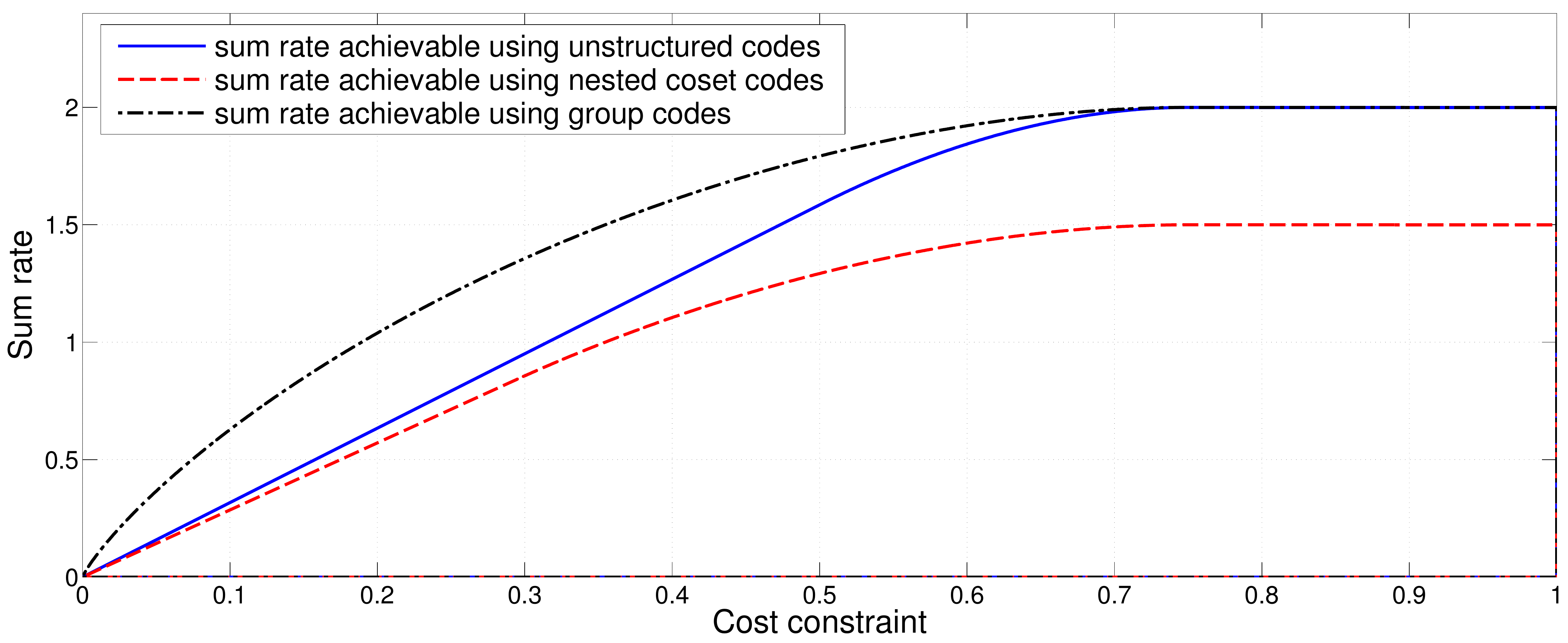}
\caption{Sum rate achievable using unstructured, nested coset and Abelian group codes for
test
channel (\ref{Eqn:OptimizingTestChannelForUnstructuredCodesOnQDD-MAC})}
\label{Fig:SumRateAchievableForQDD-MAC}
\end{figure}
The sum rate achievable for the above test channel using unstructured and nested coset
codes are plotted in figure \ref{Fig:SumRateAchievableForQDD-MAC}. It is no surprise that
nested coset codes perform poorly. The channel operation is \textit{not} the field
addition $\oplus_{4}$ in $\mathcal{F}_{4}$. Instead, $\diamondplus$ is the group
addition\footnote{We refer to group operation of an Abelian group
as group addition.}
in the Abelian group $\integers_{4}$. This suggests that we build codes over Abelian
groups
that are closed under group addition and decode the group sum $\diamondplus$ of codewords.

Linear codes are kernels of field homomorphisms. This lends
them the property of closure under field addition. We build \textit{Abelian group codes}
that are kernels of group homomorphisms. Abelian group codes are closed under group
addition. As was proposed with nested coset codes, we employ bins of each user's code to
be cosets of a common Abelian group code. The encoder chooses a codeword from the bin
indexed by the message and the decoder attempts to localize the group sum of chosen
codewords. The bins of each users' codebook is chosen such that the decoder can decode the
pair of messages by identifying the group sum of transmitted codewords.

In the interest of brevity, we only describe the results and omit proofs. Recall that any
Abelian group $\mathcal{V}$ can be decomposed as sum of $\mathbb{Z}_{p^{r}}-$cyclic
groups, i.e.,
\begin{equation}
 \label{Eqn:AbelianGroupAsACyclicSumOfZprGroups}
 \mathcal{V} = \underset{i=1}{\overset{I}\bigoplus} ~\mathbb{Z}_{p_{i}^{r_{i}}},
\end{equation}
where $p_{i}$ is a prime and $r_{i}$ is a positive integer for each $i=1,\cdots,I$. We
therefore state our findings in two stages. The first stage, described in section
\ref{SubSec:GroupCodesZprGroups} describes the coding technique and achievable rate
region for a $\mathbb{Z}_{p^{r}}-$ group. This is extended to an arbitrary Abelian
group in section \ref{SubSec:GroupCodesAbelianGroups}

\subsection{Achievable rate region for MAC-DSTx using group codes : The
$\mathbb{Z}_{p^{r}}$-case}
\label{SubSec:GroupCodesZprGroups}

In the discussion following proof of theorem
\ref{Thm:NestedCosetCodesAchieveCapacityOfPTP-STx}, we noted that if the auxiliary
alphabet $\mathcal{V}$ is a field and the bins are constrained to be closed under
field addition then with respect to a test channel $p_{V|S}$, the bins need to be of rate
at least $\log |\mathcal{V}| - H(V|S) $. This enlargement of the bins was compensated by
the ability
to pack more bins. In particular, the rate of the composite code could be as large as
$\log |\mathcal{V}| - H(V|Y)$ with respect to the induced distribution $p_{V|Y}$, and this
enabled us to achieve the capacity of PTP-STx.

If the
auxiliary alphabet $\mathcal{V}=\integers_{p^{r}}$ is an Abelian group of order 
$p^{r}$, and the bins are restricted to be closed under group addition, then with respect
to a test channel $p_{V|S}$, using the results of \cite{201209Allerton_SahPra}, the bins
have to be of rate at least
\begin{eqnarray}
\label{Eqn:SourceCodingZprGroupMutualInformation}
 \overline{I}_{s}^{\mathcal{V}}(V;S)= \max^{r}_{\theta=1} \left[ r\log p -
\frac{r}{\theta}H([V]_{\theta}|S) \right] 
= \max^{r}_{\theta=1} \frac{r}{\theta}I([V]_{\theta};S),
\end{eqnarray}
where $\mathcal{H}_{\theta}$ is the sub-group $p^{\theta}\mathbb{Z}_{p^{r}}$ and
$[V]_{\theta}
\define  V \diamondplus \mathcal{H}_{\theta}$ is the random variable taking values from
cosets of
subgroup $\mathcal{H}_{\theta}$ of $\mathcal{V}$, denoted $\mathcal{H}_{\theta} \preceq
\mathcal{V}$. We note that
$\overline{I}_{s}^{\mathcal{V}}(V;S) \geq \log q - H(V|S) \geq I(V;S)$. The natural
question to ask is whether we can pack sufficient number of bins to achieve capacity of
PTP-STx. It turns out that if we constrain the composite code, i.e., the union of bins, to
be a coset of a group code, then the rate of this union can be at most
\begin{eqnarray}
\overline{I}_{c}^{\mathcal{V}}(V;Y)= \min^{r-1}_{\theta=0} \left[ r\log p -
\frac{r}{r-\theta}H(V|Y[V]_{\theta}) \right] = \min^{r-1}_{\theta=0}
\frac{r}{r-\theta}I(V;Y|[V]_{\theta}).\nonumber
\end{eqnarray}
with respect to the induced distribution $p_{V|Y}$. Since $\log |\mathcal{V}|-H(V|Y)$
corresponds to $\theta=0$ in the above expression, $\overline{I}_{c}^{\mathcal{V}}(V;Y)$
is in general smaller than $\log |\mathcal{V}|-H(V|Y)$. Therefore,
$\overline{I}_{c}^{\mathcal{V}}(V;Y)-\overline{I}_{s}^{\mathcal{V}}(V;S)$ is in
general strictly smaller than the capacity of PTP-STx, implying the constraint of
closure under group addition results in a rate penalty. This indicates that the use of
group
codes will in general result in rate penalties for multi-terminal communication
problems.\footnote{The interested reader is referred to \cite{197109IC_Ahl-I},
\cite{197109IC_Ahl-II}, \cite{197102TAMS_Ahl} for early work on rates achievable using
group codes for point-to-point channels. \cite{201305TITarXiv_SahPra} provides bounds on
rates achievable using Abelian group codes for point-to-point source and channel coding
problems.}

With the objective of increasing $\overline{I}_{c}^{\mathcal{V}}(V;Y)$ and therefore
minimizing the rate penalty, we take a closer look at the coding technique proposed in
section \ref{SubSec:AnAchievableRateRegionForArbitraryMAC-DSTxUsingNestedCosetCodes}.
While we
exploited the property of bins being closed under field addition, we did not need the
union of bins to be a coset. We therefore relax this and only require the bins
to have an algebraic structure, i.e., a coset of a group code, but the composite code of
each user is not required to be a coset of a group code. While this relaxation does not
yield gains in achievable
rate for the field case, we do obtain larger achievable rates while coding over groups.
In particular, the rate of the composite code, or the union of bins can be as large as
$\log |\mathcal{V}|-H(V|Y)$ which is in general larger than
$\overline{I}_{c}^{\mathcal{V}}(V;Y)$. Therefore, if we were to communicate over a
PTP-STx $(\setS,W_{S},\setX,\kappa,\setY,W_{Y|XS})$ using codes over an Abelian
$\mathbb{Z}_{p^{r}}-$group $\mathcal{V}=\mathbb{Z}_{p^{r}}$ and we constrained the bins
to be closed under group addition, then the test channel $p_{VSXY} \in
\overline{\mathbb{D}}(\tau)$ yields an achievable rate $\log |\mathcal{V}|-H(V|Y)
-(\overline{I}_{s}^{\mathcal{V}}(V;S)) = \overline{H}_{s}^{\mathcal{V}}(V|S)-H(V|Y)$,
where 
\begin{equation}
 \label{Eqn:SourceCodingGroupEntropy}
 \overline{H}_{s}^{\mathcal{V}}(V|S)= \log
|\mathcal{V}|-\overline{I}_{s}^{\mathcal{V}}(V;S),
\end{equation}
is defined as \textit{source coding group
entropy} of group $\mathcal{V}=\mathbb{Z}_{p^{r}}$ and
$\overline{H}_{s}^{\mathcal{V}}(V) = \overline{H}_{s}^{\mathcal{V}}(V|0)$.
\begin{comment}{To enable us characterize achievable rate
regions for group codes in a form similar to
that in (\ref{Eqn:CharacterizationOfAchievableRateRegionForMACDSTx}) we define channel
coding and source coding group entropies as follows. The  is

Similarly, \textit{channel coding group entropy} of group $\mathcal{V}=\mathbb{Z}_{p^{r}}$
is
\begin{equation}
 \label{Eqn:ChannelCodingGroupEntropy}
 \overline{H}_{c}^{\mathcal{V}}(V|S)= \log
|\mathcal{V}|-\overline{I}_{c}^{\mathcal{V}}(V;S) \mbox{ and }
\overline{H}_{c}^{\mathcal{V}}(V|S) = \overline{H}_{c}^{\mathcal{V}}(V|0).\nonumber
\end{equation}

We now study the channel coding problem. Consider a point-to-point channel with
input alphabet $\mathcal{X}$, output alphabet $\mathcal{Y}$ and transition probabilities
$W_{Y|X}$. If there is no restriction on the code, then the input distribution $p_{X}$
yields an achievable rate $I(X;Y)=H(X)-H(X|Y)$. If the input alphabet
$\mathcal{X}=\fieldq$ is a field and the code is restricted to be closed under field
addition, then the input distribution $p_{X}$ yields an achievable rate $\log q -
H(X|Y)$. If the input alphabet $\mathcal{X}=\mathbb{Z}_{p^{r}}$ were an Abelian group of
order $p^{r}$, and the code is restricted to be closed under group operation, then an
input distribution $p_{X}$ yields an achievable rate
\begin{eqnarray}
 \overline{I}_{c}^{\mathcal{G}}(X;Y)&=& \min^{r-1}_{\theta=0} \left[ r\log p -
\frac{r}{r-\theta}H(X|Y[X]_{\theta}) \right] \nonumber\\
&=& \min^{r-1}_{\theta=0} \frac{r}{r-\theta}I(X;Y|[X]_{\theta}).\nonumber
\end{eqnarray}
}\end{comment}

The diligent reader will now be able to characterize an achievable rate region for a
MAC-DSTx based on group codes. As mentioned earlier, the encoding and decoding techniques
are identical to that proposed in section
\ref{SubSec:AnAchievableRateRegionForArbitraryMAC-DSTxUsingNestedCosetCodes} except for
group addition replacing field addition. Consider a distribution $p_{\boldV \boldS
\boldX Y} \in \mathbb{D}(\tau)$ defined over $\mathcal{V}^{2}\times\mathcal{S}\times
\mathcal{X}\times\mathcal{Y}$ where $\mathcal{V}$ is an Abelian group of order $p^{r}$.
Cosets of a common group code is employed as bins of each user's code. Following an
analysis similar to that performed in proof of theorem
\ref{Thm:AchievableRateRegionUsingNestedCosetCodes}, one can prove the probability of the
encoders not finding a codeword jointly typical with the state sequence decays
exponentially with block length if the bins are of rate at least
$\max\left\{\log
|\mathcal{V}|-\overline{H}_{s}^{\mathcal{V}}(V_{j}|S_{j}):j=1,2\right\}$. The decoder
decodes the group sum of chosen codewords from the group sum of the two users' codebooks.
The codebooks of the two users are chosen to be union of arbitrary cosets of a common
group code
and
therefore the the group sum of the two users codebooks will also be a union of
arbitrary cosets of this group code. The probability of error at the decoders decays
exponentially
if the rate of the group sum of the two users' codebooks is at most $\log
|\mathcal{V}|-H(V_{1} \diamondplus V_{2}|Y)$. We conclude that a rate pair $(R_{1},R_{2})$
is
achievable if $R_{1}+R_{2} \leq
\min\left\{\overline{H}_{s}^{\mathcal{V}}(V_{j}|S_{j}):j=1,2\right\}-H(V_{1}
\diamondplus V_{2}|Y)$. The following is a formal characterization of achievable rate
region for MAC-DSTx using group codes over a $\mathbb{Z}_{p^{r}}-$group.

\begin{definition}
 \label{Defn:GroupTestChannels}
Let $\mathbb{D}_{sg}(\boldTau)
\subseteq \mathbb{D}(\boldTau)$ be the collection of
distributions $p_{\boldU \boldV\boldS\boldX  Y}$ on
$(\mathcal{U} \times \mathcal{V})^{2}\times\mathcal{S}\times
\mathcal{X}\times\mathcal{Y}$ where $\mathcal{U}$ is a finite set and $\mathcal{V}$ is an
Abelian group of order $p^{r}$,
where $p$ is a prime. For $p_{\boldU \boldV \boldS\boldX  Y} \in
\mathbb{D}_{sg}(\boldTau)$, let
$\mathcal{R}_{sg}(p_{\boldU \boldV \boldS\boldX  Y})$ be defined as the set of rate pairs
$(R_{1},R_{2}) \in [0,\infty)^{2}$ that satisfy
\begin{eqnarray}\label{Eqn:AchievableRateRegionForParticularTestChannelUsingGroupCodes}
\begin{array}{l}R_{1} \leq I(U_{1};U_{2}Y)-I(U_{1};S_{1}) +
\min \left\{ 
\overline{H}_{s}^{\mathcal{V}}(V_{1}|U_{1},S_{1}),\overline{H}_{s}^{\mathcal{V}}(V_{2}
|U_{2},S_{2 }
) \right\}-H(V_{1}\diamondplus
V_{2}|U_{1},U_{2},Y), \\
R_{2}\leq I(U_{2};U_{1}Y)-I(U_{2};S_{2}) +
\min \left\{ 
\overline{H}_{s}^{\mathcal{V}}(V_{1}|U_{1},S_{1}),\overline{H}_{s}^{\mathcal{V}}(V_{2}
|U_{2},S_{2 }
) \right\}-H(V_{1}\diamondplus
V_{2}|U_{1},U_{2},Y), \\
R_{1}+R_{2} \leq
I(U_{1}U_{2};Y)+I(U_{1};U_{2})-\sum_{j=1}^{2}I(U_{j};S_{j})+
\min \left\{ 
\overline{H}_{s}^{\mathcal{V}}(V_{1}|U_{1},S_{1}),\overline{H}_{s}^{\mathcal{V}}(V_{2}
|U_{2},S_{2 }
) \right\}
\\~~~~~~~~~~~~~~~~~~~~-H(V_{1}\diamondplus
V_{2}|U_{1},U_{2},Y), \end{array}\end{eqnarray}where $\diamondplus$ denotes group
addition in group $\mathcal{V}=\mathbb{Z}_{p^{r}}$, and
\begin{equation}
\label{Eqn:AchievableRateRegionUsingGroupCodes}
  \mathcal{R}_{sg}(\boldTau) \define \cocl \left( \underset{p_{\boldU\boldV\boldX\boldS Y}
\in
\mathbb{D}_{sg}(\boldTau)}{\bigcup} \mathcal{R}_{sg}(p_{\boldU \boldV\boldX \boldS Y}) 
\right)
\end{equation}
\end{definition}
\begin{thm}
 \label{Thm:AchievableRateRegionForMAC-DSTxUsingGroupCodes}
 $\mathcal{R}_{sg}(\boldTau) \subseteq \mathbb{C}(\boldTau)$.
\end{thm}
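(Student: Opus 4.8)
The plan is to run the argument of the proofs of Theorems \ref{Thm:AchievableRateRegionUsingNestedCosetCodes} and \ref{Thm:AchievableRateRegionUsingNestedCosetCodesAndUnstructuredCodes}, replacing the finite field $\fieldpi$ throughout by the Abelian group $\mathcal{V}=\mathbb{Z}_{p^{r}}$, nested coset codes by nested group codes (kernels of group homomorphisms, shifted by a bias vector), and field addition $\oplus$ by group addition $\diamondplus$. Fix a test channel $p_{\boldU\boldV\boldS\boldX Y}\in\mathbb{D}_{sg}(\boldTau)$, a rate pair in $\mathcal{R}_{sg}(p_{\boldU\boldV\boldX\boldS Y})$, and $\eta>0$. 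As in the proof of Theorem \ref{Thm:AchievableRateRegionUsingNestedCosetCodesAndUnstructuredCodes}, each encoder splits $M_{j}$ into an unstructured part $M_{j,1}$, carried by a codebook on $\setU^{n}$ with $2^{n\bar R_{j}}$ bins of $2^{nB_{j}}$ codewords distributed according to $\prod_{t=1}^{n}p_{U_{j}}$, and a structured part $M_{j}^{l_{j}}$, carried by a nested group code whose bins are cosets of a common group code $\lambda_{I}$ inside a larger group code $\lambda_{O_{j}}$; the decoder is equipped with the code $\lambda_{O}$ whose cosets of $\lambda_{I}$ contain the group sum $V_{1}\diamondplus V_{2}$. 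The one structural departure from the field case --- motivated in the discussion preceding this theorem --- is that the composite codebook of each user is \emph{not} required to be a single coset of a group code; only the bins are cosets of $\lambda_{I}$, so each user's codebook, and hence the group sum of the two codebooks, is a union of random cosets of $\lambda_{I}$. The encoding and decoding maps are verbatim those of Theorem \ref{Thm:AchievableRateRegionUsingNestedCosetCodesAndUnstructuredCodes} with $\diamondplus$ in place of $\oplus$.

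The error event decomposes into the same pieces --- state atypicality, an encoder failing to find a jointly typical codeword in the indexed bins, the legitimate sum codeword being atypical with $Y^{n}$, a competing $U$-pair being typical with $Y^{n}$, and a competing sum coset being typical with $Y^{n}$ --- and the typicality-based pieces are dispatched exactly as before using Lemmas \ref{Lem:ATypicalSetIsHighlyProbable} and \ref{Lem:ConditionalTypicalSetOccursWithHighProbability} together with the Markov chains $V_{1}-S_{1}-S_{2}-V_{2}$ and $\boldU-(\boldX,\boldS)-Y$. The genuinely new ingredient is the distributional behaviour of random group codes over $\mathbb{Z}_{p^{r}}$, imported from \cite{201209Allerton_SahPra} and \cite{201305TITarXiv_SahPra}: a uniformly random codeword of such a code is \emph{not} uniform on $\mathcal{V}^{n}$ but is supported on cosets of the subgroups $p^{\theta}\mathbb{Z}_{p^{r}}$, so the counting of typical codewords inside a coset of $\lambda_{I}$ (encoder) and inside a competing coset of $\lambda_{O}$ (decoder) must be carried out layer by layer in $\theta$. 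This produces the threshold $\frac{k}{n}\geq 1-\frac{\overline{H}_{s}^{\mathcal{V}}(V_{j}|U_{j},S_{j})}{\log|\mathcal{V}|}$ for the bins --- the ``$\max_{\theta}$'' in $\overline{I}_{s}^{\mathcal{V}}$, cf.\ (\ref{Eqn:SourceCodingZprGroupMutualInformation}), is exactly what makes every layer's encoding-error exponent positive --- and the threshold $\frac{l_{1}+l_{2}}{n}\leq 1-\frac{H(V_{1}\diamondplus V_{2}|\boldU,Y)}{\log|\mathcal{V}|}$ for the union code, where the relaxation to a union of cosets is what lets us avoid the ``$\min_{\theta}$'' penalty present in $\overline{I}_{c}^{\mathcal{V}}$. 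Together with $B_{j}\geq I(U_{j};S_{j})$, $\bar R_{j}+B_{j}\leq I(U_{j};U_{\msout{j}}Y)$, and $\sum_{j}(\bar R_{j}+B_{j})\leq I(\boldU;Y)+I(U_{1};U_{2})$, these conditions make every term in the union bound decay exponentially; substituting $\bar R_{j}=R_{j}-\frac{l_{j}}{n}$ and eliminating $B_{j},\frac{k}{n},\frac{l_{j}}{n}$ by Fourier--Motzkin \cite[Appendix D]{201201NIT_ElgKim} recovers (\ref{Eqn:AchievableRateRegionForParticularTestChannelUsingGroupCodes}), and the cost constraints $\mathbb{E}\kappa_{j}(X_{j},S_{j})\leq\tau_{j}+\eta$ follow from joint typicality of $(V_{j}^{n},S_{j}^{n})$ as in Theorem \ref{Thm:NestedCosetCodesAchieveCapacityOfPTP-STx}.

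I expect the main obstacle to be the decoder's packing step --- bounding the probability that a codeword $V^{n}(\hata^{k},\hatm^{l})$ in a competing coset is jointly typical with $Y^{n}$. Over a field this follows at once from pairwise independence and uniformity of codewords (as in the proof of Theorem \ref{Thm:AchievableRateRegionUsingNestedCosetCodes}); over $\mathbb{Z}_{p^{r}}$ both properties fail, since a random group codeword is uniform only on a subgroup and the difference of two group codewords lies in the code, itself a union of subgroup cosets. One must therefore stratify the competing messages $\hatm^{l}$ according to the subgroup generated by $\hatm^{l}\ominus m^{l}$ (equivalently, the order of this difference in $\mathbb{Z}_{p^{r}}$), bound the number of competing messages and the conditional typicality probability of the associated codeword within each stratum through the layered structure, and verify that the dominant stratum yields precisely the exponent $\log|\mathcal{V}|-H(V_{1}\diamondplus V_{2}|\boldU,Y)$; this is where the relaxation to unions of cosets is indispensable. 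A secondary difficulty is re-running the Markov-lemma step of Theorem \ref{Thm:AchievableRateRegionUsingNestedCosetCodes} (the bound on $P((\epsilon_{1}\cup\epsilon_{2})^{c}\cap\epsilon_{3})$ via $V_{1}-S_{1}-S_{2}-V_{2}$): the conditional law $p_{V_{j}^{n}|\boldS^{n}V_{\msout{j}}^{n}}$ must again be estimated through the subgroup decomposition, but the rest of that argument transcribes routinely. Apart from these two points the proof is mechanically parallel to Theorems \ref{Thm:AchievableRateRegionUsingNestedCosetCodes} and \ref{Thm:AchievableRateRegionUsingNestedCosetCodesAndUnstructuredCodes}, which is why only the statement is recorded here.
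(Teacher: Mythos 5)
Your plan coincides with the paper's own strategy: the paper explicitly omits the proof of this theorem and instead sketches exactly the argument you describe --- transplant the analysis of Theorems \ref{Thm:AchievableRateRegionUsingNestedCosetCodes} and \ref{Thm:AchievableRateRegionUsingNestedCosetCodesAndUnstructuredCodes} to $\mathbb{Z}_{p^{r}}$, keep the bins as cosets of a common group code, relax the composite code to a union of arbitrary cosets, and obtain the bin-rate threshold from $\overline{I}_{s}^{\mathcal{V}}$ and the packing threshold $\log|\mathcal{V}|-H(V_{1}\diamondplus V_{2}|\boldU,Y)$.

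One clarification on where the layered ($\theta$-stratified) analysis actually lives. You place it in the decoder's packing step, stratifying competing messages by the order of $\hatm^{l}\ominus m^{l}$; but that stratification is only forced if the coset shifts are themselves generated homomorphically from the message (the design that incurs the $\min_{\theta}$ penalty $\overline{I}_{c}^{\mathcal{V}}$). Under the relaxation the paper (and you) adopt --- independent, uniformly chosen shifts for each bin --- a codeword in a competing sum coset is uniform on $\mathcal{V}^{n}$ and independent of the legitimate cosets, so the packing step goes through exactly as in the field case via the analogue of Lemma \ref{Lem:CosetsCorrespondingToTheMessagesAreIndependentOfCodewordInADifferentSumCoset}. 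The subgroup decomposition is instead unavoidable in the \emph{covering} step: within a single bin the codewords $a^{k}G_{I}\diamondplus b^{n}$ are neither uniform nor pairwise independent, their correlations being governed by the order of $a^{k}\ominus\tilde{a}^{k}$, and it is the second-moment computation there (and the corresponding estimate of $p_{V_{j}^{n}|\boldS^{n}V_{\msout{j}}^{n}}$ in the Markov-lemma step) that produces the $\max_{\theta}$ in $\overline{I}_{s}^{\mathcal{V}}$. With that relocation your outline is the intended proof.
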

\begin{example}
 Let us now compute the achievable rate region using group codes for QDD-MAC. We only compute
$\mathcal{R}_{sg}(p_{\boldU \boldV\boldX \boldS Y})$ where $\mathcal{U} = \phi$, the empty
set and $\mathcal{V}=\left\{ 0,1,2,3 \right\}$ and $p_{VSX}$ is given in
(\ref{Eqn:OptimizingTestChannelForUnstructuredCodesOnQDD-MAC}).
$\mathcal{V}=\left\{ 0,1,2,3 \right\}$ has two sub-groups - the group itself, $\left\{
0,2 \right\}$. It can be verified that \[\overline{I}_{s}^{\mathcal{U}}(U;S)= \max
\left\{ \log_{2}4-2h_{b}(\frac{2\tau}{3}),\log_{2} 4
+\tau\log_{2}(\frac{\tau}{3})+(1-\tau)\log_{2}(1-\tau)  \right\}\] yielding
$\mathcal{R}_{sg}(p_{\boldU \boldV\boldX \boldS Y})=\left\{ (R_{1},R_{2}) \in
[0,\infty)^{2}:R_{1}+R_{2}  \leq
|\beta_{g}(\tau)|^{+}  \right\}$, where \[\beta_{g}(\tau)=\max
\left\{
\min\left\{-\tau\log_{2}(\frac{\tau}{3})-(1-\tau)\log_{2}(1-\tau),2h_{b}(\frac{2\tau}{3}
)\right\}
, 0 \right\}.\] In figure \ref{Fig:SumRateAchievableForQDD-MAC}, the sum rate achievable
using group codes for the above test channel is plotted. We highlight significant gains
achievable using group codes for QDD-MAC thus emphasizing the need to build codes with
appropriate algebraic structure that matches the channel.
\end{example}

\subsection{Achievable rate region for MAC-DSTx using group codes : The
general Abelian group}
\label{SubSec:GroupCodesAbelianGroups}

We now let the auxiliary alphabet $\mathcal{V}$ be a general Abelian group and build group
codes over $\mathcal{V}$ to enable the decoder to reconstruct the group sum of chosen
codewords. The discussion in section \ref{SubSec:GroupCodesZprGroups} indicates that we
only need to characterize the minimum rate of a bin in the code with respect to a generic
test channel $p_{V|S}$ under the constraint that the bin has to be a coset of a group
code. Essentially, this will involve characterizing fundamental group information
theoretic quantity $\overline{I}_{s}^{\mathcal{V}}(V;S)$ and the related source coding
group entropy $\overline{H}_{s}^{\mathcal{V}}(V|S)$ in the context of a general Abelian
group $\mathcal{V}$. 

Let $\mathcal{V}$ be the Abelian group in (\ref{Eqn:AbelianGroupAsACyclicSumOfZprGroups}).
Let $\theta = (\theta_{1},\cdots, \theta_{r})$ be such that $0\leq \theta_{i} \leq r_{i}$
for $i=1,2,\cdots,I$ and let $\mathcal{H}_{\theta}$ be a subgroup of $\mathcal{V}$ defined
as

\begin{equation}
 \label{Eqn:AbelianSubGroupAsACyclicSumOfZprGroups}
 \mathcal{H}_{\theta} = \underset{i=1}{\overset{I}\bigoplus}~
p^{\theta_{i}}\mathbb{Z}_{p_{i}^{r_{i}}},\nonumber
\end{equation}
and random variable $[V]_{\theta}$ taking values from cosets of $\mathcal{H}_{\theta}$ in
$\mathcal{V}$ as $[V]_{\theta}=V\diamondplus\mathcal{H}_{\theta}$. If the state has a pmf
$p_{S}$ and the bins
over $\mathcal{V}$ are constrained to be cosets of a group code, then for a test channel
$p_{V|S}$, the rate of a bin has to be at least
\begin{eqnarray}
\label{Eqn:SourceCodingAbelialGroupMutualInformation}
 \overline{I}_{s}^{\mathcal{V}}(V;S)&\define&
\underset{\substack{w_{1},\cdots,w_{I}\\w_{1}+\cdots + w_{I}=1}}{\min}~
\underset{\substack{\mathcal{H} \preceq \mathcal{V}\\\mathcal{H} \neq \mathcal{V}}}{\max}~
\frac{1}{1-w_{\theta}}I([V]_{\theta};S) 
\end{eqnarray}
where
\begin{equation}
 \label{Eqn:WTheta}
 w_{\theta}=\sum_{i=1}^{I} \frac{r_{i}-\theta_{i}}{r_{i}}w_{i}.\nonumber
\end{equation}
Alternatively, one might express the minimum rate of the bin as $\log
|\mathcal{V}|-\overline{H}_{s}^{\mathcal{V}}(V|S)$, where, as before 
\begin{equation}
 \label{Eqn:SourceCodingGroupEntropyForAbelianCodes}
  \overline{H}_{s}^{\mathcal{V}}(V|S)= \log
|\mathcal{V}|-\overline{I}_{s}^{\mathcal{V}}(V;S),
\end{equation}
is defined as the \textit{source coding group
entropy} of an Abelian group $\mathcal{V}$ and
$\overline{H}_{s}^{\mathcal{V}}(V) = \overline{H}_{s}^{\mathcal{V}}(V|0)$. We note
that definitions (\ref{Eqn:SourceCodingAbelialGroupMutualInformation}) and
(\ref{Eqn:SourceCodingGroupEntropyForAbelianCodes}) defined for an arbitrary Abelian group
reduces to that in (\ref{Eqn:SourceCodingZprGroupMutualInformation}) and
(\ref{Eqn:SourceCodingGroupEntropy}) for a $\mathbb{Z}_{p^{r}}-$group. This enables us
to characterize an achievable rate region for MAC-DSTx based on Abelian group codes using
$\mathcal{R}_{sg}(\boldTau)$.

\begin{definition}
\label{Defn:TestChannelsForAbelianGroups}
Let $\mathbb{D}_{sg}(\boldTau)
\subseteq \mathbb{D}(\boldTau)$ be the collection of
distributions $p_{\boldU \boldV\boldS\boldX  Y}$ on
$(\mathcal{U} \times \mathcal{V})^{2}\times\mathcal{S}\times
\mathcal{X}\times\mathcal{Y}$ where $\mathcal{U}$ is a finite set and $\mathcal{V}$ is an
Abelian group. For $p_{\boldU \boldV\boldS\boldX  Y} \in \mathbb{D}_{sg}(\boldTau)$, let
$\mathcal{R}_{sg}(p_{\boldU \boldV \boldS\boldX  Y})$ be defined as the set in
(\ref{Eqn:AchievableRateRegionForParticularTestChannelUsingGroupCodes}) and
$\mathcal{R}_{sg}(\boldTau)$ as in (\ref{Eqn:AchievableRateRegionUsingGroupCodes}).
\end{definition}
We conclude by stating that $\mathcal{R}_{sg}(\boldTau)$ is indeed achievable.
\begin{thm}
 \label{Thm:AchievableRateRegionForMAC-DSTxUsingAbelianGroupCodes}
 $\mathcal{R}_{sg}(\boldTau) \subseteq \mathbb{C}(\boldTau)$.
\end{thm}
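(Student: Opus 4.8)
The plan is to follow, essentially verbatim, the architecture of the proof of Theorem \ref{Thm:AchievableRateRegionUsingNestedCosetCodesAndUnstructuredCodes}, making two substitutions: field addition in $\mathcal{V}$ is everywhere replaced by the group addition $\diamondplus$, and the finite-field generator matrices describing the inner and outer codes are replaced by random \emph{Abelian group codes} over $\mathcal{V}$. Using the primary decomposition (\ref{Eqn:AbelianGroupAsACyclicSumOfZprGroups}), $\mathcal{V}=\bigoplus_{i=1}^{I}\mathbb{Z}_{p_{i}^{r_{i}}}$, a random group code over $\mathcal{V}$ is built by \emph{layering}: within the $i$-th cyclic component one picks a nested chain of random submodules whose relative rates are governed by a weight vector $(w_{1},\dots,w_{I})$ with $\sum_{i}w_{i}=1$, and the code is the direct sum of these. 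The common inner code $\lambda_{I}$ and the two outer codes $\lambda_{O_{1}},\lambda_{O_{2}}$ are generated this way with a shared inner layer, exactly as in the field case, so that the decoder's code $\lambda_{O}=\lambda_{O_{1}}\diamondplus\lambda_{O_{2}}$ is a union of cosets of $\lambda_{I}$. The message split (an unstructured part carried over $\mathcal{U}^{n}$ and a structured part carried over $\mathcal{V}^{n}$), the encoding rule, and the two-stage decoding rule are identical to those in the proof of Theorem \ref{Thm:AchievableRateRegionUsingNestedCosetCodesAndUnstructuredCodes}. Thus the only genuinely new ingredient is the covering analysis for group codes over an arbitrary Abelian group; everything else is the $\mathbb{Z}_{p^{r}}$ argument of Theorem \ref{Thm:AchievableRateRegionForMAC-DSTxUsingGroupCodes} carried through unchanged.

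The key step is the encoder (covering) analysis: conditioned on $(U_{j}^{n},S_{j}^{n})$ being jointly typical, the probability that the coset of $\lambda_{O_{j}}$ indexed by the message contains no codeword jointly typical with $(U_{j}^{n},S_{j}^{n})$ according to $p_{V_{j}|U_{j}S_{j}}$ should decay exponentially provided the bin rate exceeds $\log|\mathcal{V}|-\overline{H}_{s}^{\mathcal{V}}(V_{j}|U_{j},S_{j})$. Unlike the linear-code case, distinct codewords of a group code over $\mathcal{V}$ are not simply pairwise independent: a pair of codewords is jointly distributed according to the subgroup of $\mathcal{V}$ they generate, so the second-moment bound must be stratified over proper subgroups $\mathcal{H}\preceq\mathcal{V}$. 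For each $\mathcal{H}=\mathcal{H}_{\theta}$, the number of codewords surviving the projection $\mathcal{V}\to\mathcal{V}/\mathcal{H}$ scales like $\exp\{n\cdot(\text{bin rate})\cdot(1-w_{\theta})\}$ with $w_{\theta}$ as in (\ref{Eqn:WTheta}), and a Chebyshev argument against $I([V_{j}]_{\theta};S_{j})$ forces the bin rate to beat $\max_{\mathcal{H}\neq\mathcal{V}}I([V_{j}]_{\theta};S_{j})/(1-w_{\theta})$; optimizing the layering weights then yields precisely $\overline{I}_{s}^{\mathcal{V}}(V_{j};S_{j})$ of (\ref{Eqn:SourceCodingAbelialGroupMutualInformation}), i.e. bin rate $\log|\mathcal{V}|-\overline{H}_{s}^{\mathcal{V}}(V_{j}|U_{j},S_{j})$ suffices. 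I would obtain this either by invoking the general-Abelian-group covering lemma of \cite{201305TITarXiv_SahPra} (which extends the $\mathbb{Z}_{p^{r}}$ source-coding result of \cite{201209Allerton_SahPra}), or by reproving it directly with the layered construction above.

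Once the covering lemma is in hand the rest is routine. The packing (decoder) analysis is as in Theorem \ref{Thm:AchievableRateRegionUsingNestedCosetCodes}: each codeword of a random group code is uniform on $\mathcal{V}^{n}$, hence so is each codeword of the sum code $\lambda_{O}$, so given $\boldU^{n}$ a codeword of a competing coset is jointly typical with $Y^{n}$ with probability $\exp\{-n(\log|\mathcal{V}|-H(V_{1}\diamondplus V_{2}|U_{1},U_{2},Y))\}$, and a union bound over the $|\mathcal{V}|^{l_{1}+l_{2}}$ competing cosets makes the decoder error vanish when $(l_{1}+l_{2})/n\leq 1-H(V_{1}\diamondplus V_{2}|\boldU Y)/\log|\mathcal{V}|$; the statistical-independence check between the two legitimate cosets and a competing codeword (the inner layers of the two users' codes coincide) is as delicate as in Theorem \ref{Thm:AchievableRateRegionUsingNestedCosetCodes} but no more so. The unstructured layer is handled by conditional typicality and the Ahlswede--Liao joint-decoding bounds exactly as in Theorem \ref{Thm:AchievableRateRegionUsingNestedCosetCodesAndUnstructuredCodes}. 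Collecting $B_{j}\geq I(U_{j};S_{j})$, $\bar{R}_{j}+B_{j}\leq I(U_{j};U_{\msout{j}}Y)$, $\sum_{j}(\bar{R}_{j}+B_{j})\leq I(\boldU;Y)+I(U_{1};U_{2})$, $k/n\geq 1-\overline{H}_{s}^{\mathcal{V}}(V_{j}|U_{j}S_{j})/\log|\mathcal{V}|$, and $(l_{1}+l_{2})/n\leq 1-H(V_{1}\diamondplus V_{2}|\boldU Y)/\log|\mathcal{V}|$, substituting $R_{j}-l_{j}/n$ for $\bar{R}_{j}$ and eliminating $B_{j},k/n,l_{j}/n$ by Fourier--Motzkin yields $\mathcal{R}_{sg}(p_{\boldU\boldV\boldX\boldS Y})$; applying $\cocl$ gives $\mathcal{R}_{sg}(\boldTau)\subseteq\mathbb{C}(\boldTau)$. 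The main obstacle is squarely the first step: proving the layered covering lemma for a general Abelian group and verifying that the min--max in (\ref{Eqn:SourceCodingAbelialGroupMutualInformation}) is the exact achievable bin rate, which requires tracking the joint law of codeword pairs across the whole subgroup lattice of $\mathcal{V}$ rather than over the single filtration available for $\mathbb{Z}_{p^{r}}$.
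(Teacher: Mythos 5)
Your proposal follows essentially the same route the paper takes: the paper explicitly omits the proof of this theorem "in the interest of brevity," and the construction it describes (bins as cosets of a common Abelian group code, composite codes as unions of arbitrary cosets so that only the bins carry the group structure, decoding the group sum, with the bin rate governed by $\overline{I}_{s}^{\mathcal{V}}(V;S)$ imported from \cite{201209Allerton_SahPra}, \cite{201305TITarXiv_SahPra}) is exactly what you outline. You correctly isolate the layered covering lemma over the subgroup lattice as the one genuinely new ingredient and resolve it the same way the paper does, by appeal to those references.
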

\begin{remark}
The persistent reader will recognize that the achievable rate region based on group codes
hinges on the characterization of the minimum rate of a bin that is closed under group
addition with respect to a test channel $p_{V|S}$. For the general Abelian group we
stated this to be (\ref{Eqn:SourceCodingAbelialGroupMutualInformation}). Recent pursuit
has resulted in further reduction of this quantity and is available in
\cite{201305TITarXiv_SahPra}.
\end{remark}

\begin{remark}
\label{Rem:GroupSectionConcludingRemark}
The results in this section point to a rich theory of
strategies for
multi-terminal communication systems based on structured code ensembles. Gains crucially
rely on the compressive nature of the bivariate function and the ability to build
efficient codes with rich algebraic structure. It is therefore no surprise that all of
earlier findings were based on exploiting modulo$-2$ sum - the simplest compressive
function with binary arguments - using linear codes - an ensemble that has been studied at
length from different perspectives.
 
\end{remark}

\section{Concluding Remarks}
\label{Sec:ConcludingRemarks}
We provided a single letter characterization of a new achievable rate
region for the general MAC-DSTx. The reader will recognize that our findings are aimed at
developing a new framework for obtaining achievable rate region for multi-terminal
communication problems based on algebraic tools. We proposed achievable rate regions for
an arbitrary MAC-DSTx based on two algebraic structures - fields and Abelian groups. It should now
be clear to a persistent reader that a general rate region will involve a closure over all
algebraic structures of which fields and Abelian groups are just two of them. Furthermore, this
rate region will also incorporate the unstructured coding as indicated in section
\ref{Sec:AUnifiedAchievableRateRegion}. Indeed, a description of this will be involved,
and is justified by the presence of additional degrees of freedom in the
multi-terminal communication settings.
\appendices
\section{An upper bound on $P(\epsilon_{1}^{c} \cap \epsilon_{2})$}
\label{Sec:AnUpperBoundOnProbabilityofEpsilon1ComplementIntersectionEpsilon2}

Through out this appendix $\pi$ denotes $\pi
(\min \{  \left(|\setX|\cdot|\setS|\right)^{2},\left(|\setX|+|\setS|+|\setY|-2\right)\cdot|\setX|\cdot|\setS| \})$ and $\setV \define \fieldpi$. We begin with a simple lemma. The
following lemma holds for any $\fieldq$
and we state it in this generality.
\begin{lemma}
 \label{Lem:CodewordsOfRandomLinearCodeUniformlyDistributedAndPairwiseIndependent}
 Let $\fieldq$ be the finite field of cardinality $q$. If generator matrices $G_{I} \in
\fieldq^{k \times n}$, $G_{O/I} \in \fieldq^{l \times n}$ and bias vector $B^{n} \in
\fieldq^{n}$ of the random nested coset code $(n,k,l,G_{I},G_{O/I},B^{n})$ are mutually
independent and uniformly distributed on their respective range
spaces, then codewords $V^{n}(a^{k},m^{l}) \define a^{k} G_{I} \oplus m^{l}G_{O/I} \oplus
B^{n}$
are (i) uniformly distributed, and (ii) pairwise independent.
\end{lemma}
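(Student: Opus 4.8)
The plan is to derive the statement from two elementary facts about the uniform distribution on the additive group $\fieldq^{n}$: (a) translation by a fixed vector preserves uniformity, and (b) a sum of independent random vectors, at least one of which is uniform and independent of the rest, is uniform.

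For part (i), I would fix an index pair $(a^{k},m^{l})$ and condition on $(G_{I},G_{O/I})$. Then $a^{k}G_{I}\oplus m^{l}G_{O/I}$ is a fixed vector, $B^{n}$ is uniform on $\fieldq^{n}$ and independent of it, so by (a) $V^{n}(a^{k},m^{l})$ is conditionally uniform; since this conditional law is the same for every realization of $(G_{I},G_{O/I})$, the codeword is unconditionally uniform. I would record, for use in (ii), the slightly stronger by-product that $P\big(V^{n}(a^{k},m^{l})=v^{n}\mid G_{I},G_{O/I}\big)=q^{-n}$ for all $v^{n}$.

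For part (ii), given distinct pairs $(a_{1}^{k},m_{1}^{l})\neq(a_{2}^{k},m_{2}^{l})$, set $c^{k}\define a_{1}^{k}\ominus a_{2}^{k}$ and $d^{l}\define m_{1}^{l}\ominus m_{2}^{l}$, so $(c^{k},d^{l})\neq(0,0)$. Because $(v_{1},v_{2})\mapsto(v_{1},v_{1}\ominus v_{2})$ is a bijection of $\fieldq^{n}\times\fieldq^{n}$, it suffices to show the pair $\big(V^{n}(a_{1}^{k},m_{1}^{l}),\,W^{n}\big)$ is uniform, where $W^{n}\define c^{k}G_{I}\oplus d^{l}G_{O/I}=V^{n}(a_{1}^{k},m_{1}^{l})\ominus V^{n}(a_{2}^{k},m_{2}^{l})$. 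Using the conditional-uniformity by-product from part (i), I would compute
\begin{equation}
P\big(V^{n}(a_{1}^{k},m_{1}^{l})=v^{n},\,W^{n}=w^{n}\big)=q^{-n}\,P\big(W^{n}=w^{n}\big)\nonumber
\end{equation}
for all $v^{n},w^{n}$, which reduces the whole claim to showing that $W^{n}$ is uniform on $\fieldq^{n}$.

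That last reduction is where the field structure enters. Since $G_{I}$ and $G_{O/I}$ are independent, $c^{k}G_{I}$ and $d^{l}G_{O/I}$ are independent, so by (b) it is enough that one of them be uniform; assume $d^{l}\neq 0$ (the case $c^{k}\neq 0$ is symmetric). Choosing a coordinate $i_{0}$ with $d_{i_{0}}\neq 0$, I would argue that $d_{i_{0}}$ times the $i_{0}$-th row of $G_{O/I}$ is uniform on $\fieldq^{n}$ — this is exactly where invertibility of nonzero field elements is used — and independent of the other rows, so by (b) the whole vector $d^{l}G_{O/I}=\bigoplus_{i}d_{i}(G_{O/I})_{i}$ is uniform. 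I do not expect a genuine obstacle here: the argument is bookkeeping, and the only step meriting care is the coordinate-change reduction of pairwise independence to uniformity of a single difference vector. It is worth noting explicitly that this scalar-invertibility step fails over a general Abelian group, which is why Section~\ref{Sec:EnlargingAchievableRateRegionUsingCodesOverGroups} needs a separate analysis.
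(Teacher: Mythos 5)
Your proof is correct and is essentially the paper's own argument (a counting/solve-for-one-free-variable argument à la Gallager's Theorem 6.2.1, which the paper cites and whose details it omits): part (i) solves for $B^{n}$, and part (ii) reduces to uniformity of $c^{k}G_{I}\oplus d^{l}G_{O/I}$ by solving for one row of the relevant generator matrix, using invertibility of the nonzero scalar $c_{i_{0}}$ or $d_{i_{0}}$. Your repackaging via conditioning on $(G_{I},G_{O/I})$ and the bijection $(v_{1},v_{2})\mapsto(v_{1},v_{1}\ominus v_{2})$ is sound, and your closing remark correctly identifies the scalar-invertibility step as the one that fails over a general Abelian group.
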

The proof follows form a simple counting argument and is omitted for the sake of brevity. The proof for the case $q=2$ is provided in \cite[Theorem 6.2.1]{Gal-ITRC68} and the same argument holds for any field $\mathcal{F}_{q}$.

We derive an upper bound on $P(\epsilon_{1}^{c} \cap \epsilon_{2})$ using a second moment method similar to that employed in \cite{198101TIT_GamMeu}.
\begin{eqnarray}
\label{Eqn:IndependenceOfThetaAndStateSequence}
P(\epsilon_{1}^{c} \cap \epsilon_{2})&=&\sum_{ s^{n} \in T_{
\frac{\delta}{4} }(p_{S}) }\sum_{ m^{l} \in \setV^{l} } P
\left(\substack{
 S^{n} = s^{n},M^{l}=m^{l}\\ \phi_{\frac{\delta}{2}}( s^{n},m^{l}) = 0}
\right)
=\sum_{ s^{n} \in T_{ \frac{\delta}{4} }(S) }\sum_{ m^{l}
\in \setV^{l} }P \left( \substack{S^{n} =
s^{n},\\M^{l}=m^{l}}\right)P( \phi_{\frac{\delta}{2}}( s^{n},m^{l}) = 0)\\
%-----EndOfLine-----
&\leq& \sum_{ s^{n} \in T_{ \frac{\delta}{4} }(S) }\sum_{ m^{l}
\in \setV^{l} } P ( S^{n} = s^{n},M^{l}=m^{l})P( |\phi_{\frac{\delta}{2}}( s^{n},m^{l})- \mathbb{E}\phi_{\frac{\delta}{2}}( s^{n},m^{l})| \geq \mathbb{E}\phi_{\frac{\delta}{2}}(
s^{n},m^{l}) )\nonumber\\
%-----EndOfLine-----
\label{Eqn:EmployingCheybyshevInequalityForEncoderError}
&\leq& \sum_{ s^{n} \in T_{ \frac{\delta}{4} }(S) }\sum_{ m^{l}
\in \setV^{l} } P \left( S^{n} =
s^{n},M^{l}=m^{l}\right)\frac{\mbox{Var}\left\{\phi_{\frac{\delta}{2}}(
s^{n},m^{l})\right\}}{\left\{\mathbb{E}\left\{\phi_{\frac{\delta}{2}}( s^{n},m^{l})\right\}\right\}^2},
\end{eqnarray}
where (\ref{Eqn:IndependenceOfThetaAndStateSequence}) is true since
$\phi_{\frac{\delta}{2}}( s^{n},m^{l})$ is a function of random objects $G_{I}$, $G_{O / I}$
and
$B^{n}$ that are mutually independent of $S^{n},M^{l}$, and
(\ref{Eqn:EmployingCheybyshevInequalityForEncoderError}) follows from Cheybyshev
inequality.

We now evaluate first and second moments of $\phi_{\frac{\delta}{2}}( s^{n},m^{l})$. The expectation of $\phi_{\frac{\delta}{2}}(
s^{n},m^{l})$ is 
\begin{eqnarray}
\label{eqn:MeanOfTheta}
\mathbb{E}\phi_{\frac{\delta}{2}}( s^{n},m^{l}) =
\sum_{\substack{v^{n} \in T_{\frac{\delta}{2}}^{n}\left( V|s^{n}
\right)}}\sum_{a^{k}
\in
\setV^{k}} P\left(V^{n}(a^{k},M^{l}) =
v^{n}\right)
=\frac{| T^{n}_{\frac{\delta}{2}}\left( V|s^n \right) |}{\pi^{n-k}},
\nonumber
\end{eqnarray}
where the last equality follows from Lemma
\ref{Lem:CodewordsOfRandomLinearCodeUniformlyDistributedAndPairwiseIndependent}(i). The
second moment is
\begin{eqnarray}
\label{eqn:ExpectedValueOfThetaSquared}
\lefteqn{\mathbb{E}{\phi_{\frac{\delta}{2}}^{2}( s^{n},m^{l})} = \sum_{\substack{v^{n},\tilde{v}^{n} \in 
T_{\frac{\delta}{2}}^{n}\left( V|s^{n} \right)}}\sum_{\substack{a^{k},\tilde{a}^{k}  
\in \setV^{k}}}   P\left(V^{n}(a^{k},M^{l}) = v^{n},
V^{n}(\tilde{a}^{k},M^{l}) =
\tilde{v}^{n}\right) } \nonumber\\
%-----EndOfLine-----
&=&\sum_{\substack{v^{n}\in \\ 
T_{\frac{\delta}{2}}^{n}\left( V|s^{n} \right)}}\sum_{a^{k} \in
\setV^{k}}   P\left(V^{n}(a^{k},M^{l}) = v^{n}\right)+ \sum_{\substack{v^{n},\tilde{v}^{n} \in \\ 
T_{\frac{\delta}{2}}^{n}\left( V|s^{n}
\right)}}\sum_{\substack{a^{k},\tilde{a}^{k} \in \\\setV^{k},a^{k}\neq
\tilde{a}^{k}}}\!\!\!\!\!\!\!P\left(V^{n}(a^{k},M^{l}) = v^{n},
V^{n}(\tilde{a}^{k},M^{l})
= \tilde{v}^{n}\right) \nonumber\\
%-----EndOfLine-----
\label{Eqn:SimplifiedExpressionForExpectedValueOfThetaSquared}
&=&\frac{\pi^{k}\left| T^{n}_{\frac{\delta}{4}}\left( V|s^{n}
\right)\right|}{\pi^n}+\frac{\left| T^{n}_{\frac{\delta}{2}}\left( V|s^n \right) \right|^2
\pi^{k}\left(\pi^{k}-1 \right)}{\pi^{2n}}, %\left( 1 - \frac{\left|
% T^{n}_{\frac{\delta}{4}}\left( X|s^n \right) \right|}{\pi^{n}}  \right) 
% \nonumber
\end{eqnarray}
where second term in (\ref{Eqn:SimplifiedExpressionForExpectedValueOfThetaSquared})
follows from Lemma
\ref{Lem:CodewordsOfRandomLinearCodeUniformlyDistributedAndPairwiseIndependent}(ii).
Substituting for first and second moments of $\phi_{\frac{\delta}{2}}( s^{n},m^{l})$, we have
\begin{eqnarray}
\label{eqn:VarianceOfAlpha}
\mbox{Var}\left\{\phi_{\frac{\delta}{2}}( s^{n},m^{l})\right\} = \frac{\pi^{k}\left|
T^{n}_{\frac{\delta}{2}}\left( V|s^{n} \right)\right|}{\pi^n} \left( 1- \frac{\left|
T^{n}_{\frac{\delta}{2}}\left( V|s^{n} \right)\right|}{\pi^n} \right), \mbox{ thus } 
\frac{\mbox{Var}\left\{\phi_{\frac{\delta}{2}}(
s^{n},m^{l})\right\}}{\mathbb{E}\left\{\phi_{\frac{\delta}{2}}( s^{n},m^{l})\right\}^2}  \leq
\frac{\pi^{n-k}}{|
T^{n}_{\frac{\delta}{2}}\left( V|s^{n} \right)|}.
\end{eqnarray}
For $s^{n} \in T_{\frac{\delta}{4}}(S)$ lemma
\ref{Lem:BoundsOnSizeOfConditionalTypicalSet}, guarantees existence of $N_{3}(\eta) \in
\naturals$, such that for all $n \geq N_{3}(\eta)$, $|T_{\frac{\delta}{2}}(V|s^{n})| \geq
\exp\left\{ n\left( H(V|S)-\frac{3\delta}{4}\right) \right\}$. Substituting this lower
bound in (\ref{eqn:VarianceOfAlpha}), we note,
\begin{eqnarray}
 \label{eqn:UpperboundOnvarianceByMeanSquare}
\frac{\mbox{Var}\left\{\phi_{\frac{\delta}{2}}(
s^{n},m^{l})\right\}}{\mathbb{E}\left\{\phi_{\frac{\delta}{2}}( s^{n},m^{l})\right\}^2}  \leq
\frac{\pi^{n-k}}{|
T^{n}_{\frac{\delta}{2}}\left( V|s^{n} \right)|} \leq \exp\left\{-n \log \pi
\left( \frac{k}{n} -
\left(1-\frac{H\left(V|S \right)}{\log \pi} +\frac{3\delta}{4\log \pi}\right)
\right)\right\}.\nonumber\\
\end{eqnarray}
Substituting (\ref{eqn:UpperboundOnvarianceByMeanSquare}) in
(\ref{Eqn:EmployingCheybyshevInequalityForEncoderError}), we obtain
\begin{eqnarray}
 \label{Eqn:UpperBoundOnEncoderErrorProbability}
P(\epsilon_{1}^{c} \cap \epsilon_{2}) &\leq& \exp\left\{-n \log \pi
\left( \frac{k}{n} -
\left(1-\frac{H\left(V|S \right)}{\log \pi} +\frac{3\delta}{8\log \pi}\right)
\right)\right\}.\nonumber
\end{eqnarray}
From (\ref{Eqn:LowerBoundOnBinningRate}), we have
\begin{eqnarray}
 \label{Eqn:ChoiceOfkBynAndDeltaEnsures}
 \frac{k}{n} -
\left(1-\frac{H\left(V|S \right)}{\log \pi} +\frac{3\delta}{8\log \pi} \right)\geq
\frac{\frac{\eta}{8}-\frac{3\delta}{8}}{\log \pi} \geq \frac{\eta}{16\log \pi}
\end{eqnarray}
where the last inequality follows from choice of $\delta$.
Combining (\ref{Eqn:UpperBoundOnEncoderErrorProbability}) and
(\ref{Eqn:ChoiceOfkBynAndDeltaEnsures}), we have $P(\epsilon_{1}^{c} \cap \epsilon_{2})
\leq \exp\left\{-\frac{3n\delta}{8 \log \pi} \right\} \leq \frac{\eta}{16}$ for all $n \geq
N_{4}(\eta)$.

By choosing $\delta > 0$ sufficiently
small, $\frac{k}{n}$ can be made arbitrarily close
to $1-\frac{H\left(V|S \right)}{\log \pi}$ and probability of encoding error can be made arbitrarily
small by choosing a sufficiently large block length.
The above findings are summarized in the following lemma.
\begin{lemma}
 \label{Lem:WhenIsACosetCodeAGoodSourceCode?}
Let $\setS$ be a finite set, $\setV=\fieldq$ a finite field and $p_{SV}$, a
pmf on $\setS \times \setV$. Consider a random nested coset code
$(n,k,l,G_{I},G_{O/I},B^{n})$ denoted $\Lambda_{O}/\Lambda_{I}$, with bias vector $B^{n} \in \setV^{n}$, generator
matrices $G_{I} \in \setV^{k \times n}$ and $G_{O/I} \in \setV^{l \times n}$
mutually independent and uniformly distributed on their respective range spaces. Let $V^{n}(a^{k},m^{l}) \define a^{k}G_{I}\oplus m^{l}G_{O/I} \oplus B^{n}$ denote generic codeword in $\Lambda_{O}/\Lambda_{I}$. For $s^{n} \in \setS^{n}$, $m^{l} \in \setV^{l}$ and $\delta > 0$, let $\phi_{\delta}(s^{n},m^{l}) \define \sum_{a^{k} \in
\setV^{k}}1_{\{ (s^{n,}V^{n}(a^{k},m^{l})) \in T_{\delta}(S,V) \}}$. The following are true.
\begin{enumerate}
\item The codewords $V^{n}(a^{k},m^{l}):a^{k} \in \setV^{k}$ are uniformly distributed and pairwise independent.
\item For any $\delta > 0$, $s^{n} \in T_{\frac{\delta}{2}}(S)$, $m^{l} \in \setV^{l}$, there exists $N(\delta) \in \naturals$ such that for all $n \geq N(\delta)$,
\begin{equation}
 \label{Eqn:UpperboundOnEncodingErrorForFixedStateSequence}
P(\phi_{\delta}(s^{n},m^{l})=0) \leq \exp \left\{ -n \log q \left(\frac{k}{n} - \left( 1- \frac{H(V|S)}{\log q} -\frac{3\delta}{2 \log q}\right)  \right) \right\}.\nonumber
\end{equation}
\item If $(S^{n},M^{l}) \in \setS^{n} \times \setV^{l}$ are independent of $(G_{I},G_{O/I},B^{n})$, then for all $n\geq N(\delta)$,
\begin{equation}
\label{Eqn:UpperboundOnEncodingErrorForRandomStateAndMessages}
P(S^{n} \in T_{\frac{\delta}{2}}(S),\phi_{\delta}(S^{n},M^{l})=0) \leq \exp \left\{ -n \log q \left(\frac{k}{n} - \left( 1- \frac{H(V|S)}{\log q} -\frac{3\delta}{2 \log q}\right)  \right) \right\}.\nonumber
\end{equation}
\end{enumerate}
\end{lemma}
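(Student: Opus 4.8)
The plan is to establish the three statements in the order given: (1) is a counting fact about random coset codes, which I would simply cite, while (2)--(3) both rest on a single second-moment (Chebyshev) estimate.

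For (1), I would invoke Lemma~\ref{Lem:CodewordsOfRandomLinearCodeUniformlyDistributedAndPairwiseIndependent} read for $\setV=\fieldq$. Its content is the counting argument that, for fixed $G_{I},G_{O/I}$ and any target $v^{n}$, there is a unique bias vector realizing $V^{n}(a^{k},m^{l})=v^{n}$ (so uniformity follows from the uniformity of $B^{n}$), and that for $a^{k}\neq\tilde{a}^{k}$ the joint event $\{V^{n}(a^{k},m^{l})=v^{n},\ V^{n}(\tilde{a}^{k},m^{l})=\tilde{v}^{n}\}$ coincides with $\{V^{n}(a^{k},m^{l})=v^{n},\ (\tilde{a}^{k}\ominus a^{k})G_{I}=\tilde{v}^{n}\ominus v^{n}\}$; since $\tilde{a}^{k}\ominus a^{k}$ has a nonzero coordinate, fixing every row of $G_{I}$ but one leaves a unique admissible choice of the last row, giving joint probability $q^{-2n}$.

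For (2), I would fix $s^{n}\in T_{\frac{\delta}{2}}(S)$ and $m^{l}\in\setV^{l}$ and run the second-moment method. Since $\phi_{\delta}(s^{n},m^{l})$ is a sum of indicators, Chebyshev's inequality gives $P(\phi_{\delta}(s^{n},m^{l})=0)\leq P\big(|\phi_{\delta}-\mathbb{E}\phi_{\delta}|\geq\mathbb{E}\phi_{\delta}\big)\leq \mbox{Var}(\phi_{\delta})/(\mathbb{E}\phi_{\delta})^{2}$. Using uniformity from (1), $\mathbb{E}\phi_{\delta}(s^{n},m^{l})=|T_{\delta}(V|s^{n})|/q^{n-k}$; splitting $\mathbb{E}\phi_{\delta}^{2}$ into its $q^{k}$ diagonal terms and its $q^{k}(q^{k}-1)$ off-diagonal terms and applying pairwise independence to the latter yields $\mbox{Var}(\phi_{\delta})=\frac{q^{k}|T_{\delta}(V|s^{n})|}{q^{n}}\big(1-\frac{|T_{\delta}(V|s^{n})|}{q^{n}}\big)$, hence $\mbox{Var}(\phi_{\delta})/(\mathbb{E}\phi_{\delta})^{2}\leq q^{n-k}/|T_{\delta}(V|s^{n})|$. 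Finally, since $s^{n}\in T_{\frac{\delta}{2}}(S)$, Lemma~\ref{Lem:BoundsOnSizeOfConditionalTypicalSet} supplies $N(\delta)\in\naturals$ with $|T_{\delta}(V|s^{n})|\geq\exp\left\{ n\left(H(V|S)-\frac{3\delta}{2}\right) \right\}$ for all $n\geq N(\delta)$ (which also makes $\mathbb{E}\phi_{\delta}>0$, so Chebyshev applies); substituting this and regrouping the exponent produces the stated bound, uniformly in $s^{n}$ and $m^{l}$ because $N(\delta)$ does not depend on them.

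For (3), I would condition on $(S^{n},M^{l})$: because $\phi_{\delta}(s^{n},m^{l})$ is a deterministic function of $(G_{I},G_{O/I},B^{n})$, which is independent of $(S^{n},M^{l})$ by hypothesis, one has $P\big(S^{n}\in T_{\frac{\delta}{2}}(S),\ \phi_{\delta}(S^{n},M^{l})=0\big)=\sum_{s^{n}\in T_{\frac{\delta}{2}}(S)}\sum_{m^{l}}P(S^{n}=s^{n},M^{l}=m^{l})\,P(\phi_{\delta}(s^{n},m^{l})=0)$, and the bound of (2) factors out while the remaining sum is at most $1$, giving the same bound with the same $N(\delta)$. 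The part I expect to require the most care is the second-moment computation---making sure the off-diagonal contributions collapse under pairwise independence so that $\mbox{Var}/(\mathbb{E})^{2}$ telescopes to the clean ratio $q^{n-k}/|T_{\delta}(V|s^{n})|$---together with keeping the typicality parameters ($\delta$ versus $\frac{\delta}{2}$) consistent through the use of Lemma~\ref{Lem:BoundsOnSizeOfConditionalTypicalSet}; neither is conceptually deep once (1) is in hand.
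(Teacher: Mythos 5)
Your proposal is correct and follows essentially the same route as the paper's Appendix A: part (1) by the counting argument behind Lemma \ref{Lem:CodewordsOfRandomLinearCodeUniformlyDistributedAndPairwiseIndependent}, part (2) by the Chebyshev second-moment bound $\mbox{Var}(\phi_{\delta})/(\mathbb{E}\phi_{\delta})^{2}\leq q^{n-k}/|T_{\delta}(V|s^{n})|$ combined with the conditional-typical-set size bound of Lemma \ref{Lem:BoundsOnSizeOfConditionalTypicalSet}, and part (3) by conditioning on $(S^{n},M^{l})$ via the assumed independence. The bookkeeping of the typicality parameters ($\delta/2$ for the state, $\delta$ for the joint, giving the $3\delta/2$ correction) also matches the paper.
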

\section{An upper bound on $P((\epsilon_{1}\cup\epsilon_{2}\cup\epsilon_{3})^{c} \cap
\epsilon_{4})$}
\label{Sec:AnUpperBoundOnProbabilityofEpsilon4}
As is typical, our achievability proof hinges on independence of transmitted codeword
(and hence received vector) and the contending codewords that are not transmitted.
Towards this end, we begin with the following.
\begin{lemma}
 \label{Lem:IndependenceOfACosetAndACodewordInADifferentCoset}
Let $\setV$ be the finite field of cardinality $q$. If generator matrices $G_{I} \in
\fieldq^{k \times n}$, $G_{O/I} \in \setV^{l \times n}$ and bias vector $B^{n} \in
\fieldq^{n}$ of the random $(n,k,l,G_{I},G_{O/I},B^{n})$ nested coset code are mutually
independent and uniformly distributed on their respective range
spaces, then any coset is independent of any codeword in a different coset., i.e., the collection of codewords 
$(V^{n}(a^{k},m^{l}):a^{k}\in\fieldq^{k})$ and $V^{n}(\hat{a}^{k},\hatm^{l})$ are
independent if $m^{l}\neq \hatm^{l}$.
\end{lemma}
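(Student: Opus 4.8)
The plan is to produce a compact ``sufficient statistic'' for the coset and show that it is manifestly independent of the outside codeword. Write $W^{n} \define V^{n}(0^{k},m^{l}) = m^{l}G_{O/I}\oplus B^{n}$ for the base codeword of the coset $C(m^{l})$. Since $V^{n}(a^{k},m^{l}) = a^{k}G_{I}\oplus W^{n}$ for every $a^{k}\in\fieldq^{k}$, the ordered tuple $(V^{n}(a^{k},m^{l}):a^{k}\in\fieldq^{k})$ is a deterministic function of the pair $(G_{I},W^{n})$. Hence it suffices to prove that $V^{n}(\hat a^{k},\hatm^{l})$ is independent of $(G_{I},W^{n})$; independence from the coset follows immediately, because a random variable independent of $(G_{I},W^{n})$ is independent of any function of it.

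Next I would rewrite $V^{n}(\hat a^{k},\hatm^{l}) = \hat a^{k}G_{I}\oplus \hatm^{l}G_{O/I}\oplus B^{n} = \hat a^{k}G_{I}\oplus W^{n}\oplus Z^{n}$, where $Z^{n}\define \mu^{l}G_{O/I}$ and $\mu^{l}\define\hatm^{l}\ominus m^{l}\neq 0^{l}$ (this is where $\hatm^{l}\neq m^{l}$ enters). The crux is the claim that $Z^{n}$ is uniform on $\fieldq^{n}$ and jointly independent of the pair $(G_{I},W^{n})$. Uniformity of $Z^{n}$ follows from a counting argument as in Lemma \ref{Lem:CodewordsOfRandomLinearCodeUniformlyDistributedAndPairwiseIndependent}: for fixed nonzero $\mu^{l}$, the map $g\mapsto\mu^{l}g$ on $\fieldq^{l\times n}$ is surjective with equal-size fibres (pick a coordinate $i$ with $\mu_{i}\neq 0$ and solve uniquely for row $i$ given the other rows), so $\mu^{l}G_{O/I}$ is uniform. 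Since $G_{I}\perp G_{O/I}$ and $Z^{n}$ is a function of $G_{O/I}$, we get $Z^{n}\perp G_{I}$. To bring in $W^{n}$: because $B^{n}$ is uniform and independent of $(G_{I},G_{O/I})$, conditioning on $(G_{I},G_{O/I})$ shows $W^{n}=m^{l}G_{O/I}\oplus B^{n}$ is conditionally uniform, hence $W^{n}\perp(G_{I},G_{O/I})$ and in particular $W^{n}$ is independent of the pair $(G_{I},Z^{n})$, which is a function of $(G_{I},G_{O/I})$. Combining these three facts gives mutual independence of $(G_{I},Z^{n},W^{n})$ with $Z^{n}$ and $W^{n}$ uniform.

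Finally, conditioned on any value of $(G_{I},W^{n})$, the vector $V^{n}(\hat a^{k},\hatm^{l}) = \hat a^{k}G_{I}\oplus W^{n}\oplus Z^{n}$ is a fixed translate of $Z^{n}$, which remains uniform and independent of $(G_{I},W^{n})$; therefore $V^{n}(\hat a^{k},\hatm^{l})$ is (unconditionally uniform and) independent of $(G_{I},W^{n})$, and hence of the coset $(V^{n}(a^{k},m^{l}):a^{k}\in\fieldq^{k})$, which is what we wanted.

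The only place real care is required is the second step: the coset and the outside codeword are built from the \emph{same} generator matrix $G_{I}$ and the \emph{same} bias $B^{n}$, so their independence cannot come from fresh randomness in either of those. It is produced entirely by the component of $G_{O/I}$ along the nonzero message difference $\mu^{l}$, and the decomposition $V^{n}(\hat a^{k},\hatm^{l}) = \hat a^{k}G_{I}\oplus W^{n}\oplus Z^{n}$ is exactly the device that isolates this extra randomness; the subtlety is to verify \emph{joint} (not merely pairwise) independence of $Z^{n}$ with $(G_{I},W^{n})$. Everything else is the same uniformity/counting bookkeeping already used for Lemma \ref{Lem:CodewordsOfRandomLinearCodeUniformlyDistributedAndPairwiseIndependent}.
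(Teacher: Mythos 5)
Your proof is correct, and it takes a genuinely different route from the paper's. The paper proves the factorization by direct enumeration: it fixes an arbitrary candidate value for the entire ordered tuple $(v^{n}_{a^{k}}:a^{k}\in\fieldq^{k})$ and for $\hatv^{n}$, first disposes of the ``inconsistent'' tuples (those not of the form $a^{k}g_{I}\oplus v^{n}_{0^{k}}$, for which both sides vanish), and then factors the remaining probability using the mutual independence of $G_{I}$, $G_{O/I}$, $B^{n}$ together with the pairwise independence of the two coset leaders $V^{n}(0^{k},m^{l})$ and $V^{n}(0^{k},\hatm^{l})$ from Lemma \ref{Lem:CodewordsOfRandomLinearCodeUniformlyDistributedAndPairwiseIndependent}(ii). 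You instead compress the coset into the sufficient statistic $(G_{I},W^{n})$ and isolate the fresh randomness $Z^{n}=(\hatm^{l}\ominus m^{l})G_{O/I}$, establishing mutual independence of $(G_{I},Z^{n},W^{n})$; your chain of deductions there is sound (in particular, $Z^{n}\perp G_{I}$ together with $W^{n}\perp(G_{I},Z^{n})$ does yield the three-way product form, so the joint --- not merely pairwise --- independence you flag as the delicate point is genuinely secured). Both arguments ultimately draw on the same source of randomness, namely the component of $G_{O/I}$ along the nonzero message difference, which is exactly what powers the paper's appeal to pairwise independence of the coset leaders. What your organization buys is the elimination of the consistency case analysis and a slightly stronger conclusion --- independence of the competing codeword from the pair $(G_{I},W^{n})$ rather than only from the coset it generates --- which is a clean form to carry into arguments like Remark \ref{Rem:IndependenceOfABinAndAVectorFromAnotherBin}, where one must propagate the independence through the encoder map to the channel output. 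What the paper's bare-hands computation buys is that it generalizes with no extra thought to the two-user setting of Lemma \ref{Lem:CosetsCorrespondingToTheMessagesAreIndependentOfCodewordInADifferentSumCoset}, where the analogous counting is done over triples $(g_{O/I},b_{1}^{n},b_{2}^{n})$.
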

\begin{proof}
 Let $v^{n}_{a^{k}} \in \fieldq^{n}$ for each $a^{k} \in \fieldq^{k}$, and
$\hat{v}^{n} \in \fieldq^{n}$. We need to prove 
\begin{eqnarray}
P ( V^{n}(a^{k},m^{l})=v^{n}_{a^{k}}\!:\! a^{k} \in
\fieldq^{k}, V^{n}(\hata^{k},m^{l})\!=\!\hatv^{n} )= P( 
V^{n}(a^{k},m^{l})=v^{n}_{a^{k}}\! :\! a^{k} \in
\fieldq^{k})P(V^{n}(\hata^{k},m^{l})=\hatv^{n}).\nonumber
\end{eqnarray}
If $( v^{n}_{a^{k}+\hata^{k}} - v^{n}_{0^{k}} ) \neq
(v^{n}_{a^{k}} - v^{n}_{0^{k}}) + (
v^{n}_{\hata^{k}} - v^{n}_{0^{k}})$ for some pair $a^{k}$, $\hata^{k} \in
\fieldq^{k}$, the LHS and first term of RHS are zero and equality holds. Else,
\begin{eqnarray}
\label{eqn:PairwiseIndependenceCosetAndCodewordInAnotherCoset}
\lefteqn{P (
V^{n}(a^{k},m^{l})=v^{n}_{a^{k}} : a^{k} \in
\fieldq^{k},
V^{n}(\hata^{k},m^{l})=\hatv^{n} )} \nonumber\\
&=&P ( a^{k}G_{I} = v_{a^{{k}}}^{n} - v_{0^{k}}^{n} :
a^{k} \in \fieldq^{k}, V^{n}(0^{k},m^{l}) = v_{0^{{k}}}^{n}
,V^{n}(0^{k},\hatm^{l}) = \hatv^{n} -
v_{\hata^{k}}^{n}
)\nonumber\\
\label{Eqn:IndependenceOfG1DeltaGAndBn}
&=&P( a^{k}G_{I} = v_{a^{{k}}}^{n} - v_{0^{{k}}}^{n} :
a^{k} \in \fieldq^{k})P( V^{n}(0^{k},m^{l}) = v_{0^{{k}}}^{n}
,V^{n}(0^{k},\hat{m}^{l}) = \hatv^{n} - v_{\hata^{k}}^{n}) \\
\label{Eqn:GallagersIndependence}
&=&P( a^{k}G_{I} = v_{a^{{k}}}^{n} - v_{0^{{k}}}^{n} :
a^{k} \in \fieldq^{k})P( V^{n}(0^{k},m^{l}) =
v_{0^{{k}}}^{n})P(
V^{n}(0^{k},\hatm^{l}) = \hatv^{n} - v_{\hata^{k}}^{n}) \\
\label{Eqn:IndependenceOfG1DeltaGAndBnOnceAgain}&=&P( a^{k}G_{I} = v_{a^{{k}}}^{n} - v_{0^{{k}}}^{n} :
a^{k} \in \fieldq^{k}, V^{n}(0^{k},m^{l}) =
v_{0^{{k}}}^{n})P(
\hat{m}^{l} G_{O / I} + B^{n} = \hatv^{n} - v_{\hata^{k}}^{n}) \\
&=&P( 
V^{n}(a^{k},m^{l})=v^{n}_{a^{k}} : a^{k} \in
\fieldq^{k})P(V^{n}(\hata^{k},m^{l})=\hatv^{n},
)\nonumber
\end{eqnarray}
where (\ref{Eqn:IndependenceOfG1DeltaGAndBn}) and
(\ref{Eqn:IndependenceOfG1DeltaGAndBnOnceAgain}) follow from independence of $G_{O
/ I}$,
$B^{n}$ and $G_{I}$ (\ref{Eqn:GallagersIndependence}) follows from Lemma
\ref{Lem:CodewordsOfRandomLinearCodeUniformlyDistributedAndPairwiseIndependent}(ii), and
the last equality follows from invariance of the pmf of $V^{n}(a^{k},m^{l})$
with respect to $a^{k}$ and $m^{l}$.
\end{proof}
 We emphasize the consequence of Lemma
\ref{Lem:IndependenceOfACosetAndACodewordInADifferentCoset} in the following remark.
\begin{remark}
\label{Rem:IndependenceOfABinAndAVectorFromAnotherBin}If transmitted message $M^{l} \neq
\hatm^{l}$, then $Y^{n}$ is independent of $V^{n}(\hata^{k},\hatm^{l})$. Indeed
\begin{eqnarray}
&&\!\!\!\!\!\!\!\!\!\!P(V^{n}(\hata^{k},\hatm^{l})=\hatv^{n},Y^{n}=y^{n})
\!=\!\!\!\!\!\!\sum_{(v^{n}_{a^{k}}\in\setV^{n}:a^{k} \in \setV^{k})}
\sum_{x^{n} \in \setX^{n}}
P\left(\substack{C(M^{l})=(v^{n}_{a^{k}}\in\setV^{n}:a^{k} \in
\setV^{k}),\\V^{n}(\hata^{k},\hatm^{l})=\hatv^{n},E(S^{n},M^{l})=x^{n},Y^{
n } =y^ { n } }\right) \nonumber\\
\label{Eqn:IndependenceOfReceivedVectorAndACompetingCodewordInADifferentBin_1}
&=&\underset{(v^{n}_{a^{k}}\in\setV^{n}:a^{k} \in \setV^{k})}{\sum}
\underset{x^{n} \in \setX^{n}}{\sum} P\left(\begin{array}{l}C(M^{l})=(v^{n}_{a^{k}}\in\setV^{n}:a^{k} \in
\setV^{k}),\\E(S^{n},M^{l})=x^{n},Y^{
n } =y^ { n } \end{array}\right)P(V^{n}(\hata^{k},\hatm^{l} )=\hatv^{n}) \\
\label{Eqn:IndependenceOfReceivedVectorAndACompetingCodewordInADifferentBin_2}
&=&P(V^{n}(\hata^{k},\hatm^{l})=\hatv^{n})P(Y^{n}=y^{n})=\frac{P(Y^{n}=y^{n})}{q^{n}}.
\end{eqnarray}
We have used (1) independence of $V^{n}(\hata^{k},\hatm^{l})$ and $C(M^{l})$ (lemma
\ref{Lem:IndependenceOfACosetAndACodewordInADifferentCoset}), (2) $E(S^{n},M^{l})$ being
a function of $C(M^{l})$ and $S^{n}$ is conditionally independent of
$V^{n}(\hata^{k},\hatm^{l})$ given $C(M^{l})$, and (3) $Y^{n}$ is conditionally
independent of $V^{n}(\hata^{k},\hatm^{l})$ given $E(S^{n},M^{l})$ in arriving at
(\ref{Eqn:IndependenceOfReceivedVectorAndACompetingCodewordInADifferentBin_1}), and lemma
\ref{Lem:CodewordsOfRandomLinearCodeUniformlyDistributedAndPairwiseIndependent}(i) in arriving at the last equality in (\ref{Eqn:IndependenceOfReceivedVectorAndACompetingCodewordInADifferentBin_2}).
\end{remark}
We now provide an upper bound on $P((\epsilon_{1}\cup\epsilon_{2}\cup\epsilon_{3})^{c}\cap\epsilon_{4})$. Observe that
\begin{eqnarray}
 \label{Eqn:BoundingDecodingErrorProbability} 
&&P((\epsilon_{1}\cup\epsilon_{2}\cup\epsilon_{3})^{c}\cap\epsilon_{4})
\leq
P\left(\underset{\hata^{k} \in \setV^{k}}{\cup}\underset{\hatm^{l} \neq M^{l}}{\cup} \{
(V^{n}(\hata^{k}, \hatm^{l}),  Y^{n}) \in T_{\delta}(p_{VY})
\}\right)\nonumber\\
&\leq& \sum_{\substack{\hatm^{l}\in \setV^{l}\\\hatm^{l}\neq M^{l}}}
\sum_{\hata^{k} \in \setV^{k}} \sum_{\substack{y^{n}\\\in
T_{\frac{\delta}{2}}}}  \sum_{\substack{v^{n} \in\\ \small
T_{\delta}(V|y^{n} ) }} P( V^{n}(\hata^{k},\hatm^{l})=v^{n},
Y^{n}=y^{n})\nonumber\\
&=&\!\!\!\!\!\! \sum_{\substack{\hatm^{l}\in \setV^{l}\\\hatm^{l}\neq M^{l}}}
\sum_{\hata^{k} \in \setV^{k}} \sum_{\substack{y^{n}\\\in
T_{\frac{\delta}{2}}}}  \sum_{\substack{v^{n} \in\\ \small
T_{\delta}(V|y^{n} ) }}\!\! P( V^{n}(\hata^{k},\hatm^{l})=v^{n})
\label{Eqn:DecodingErrorProbabilityExpanded1}
P(Y^{n}=y^{n})=\!\!\! \sum_{\substack{\hatm^{l}\in \setV^{l}\\\hatm^{l}\neq M^{l}}}
\sum_{\hata^{k} \in \setV^{k}} \sum_{\substack{y^{n}\\\in
T_{\frac{\delta}{2}}}}  \sum_{\substack{v^{n} \in\\ \small
T_{\delta}(V|y^{n} ) }}\!\!\!\!\!\!\frac{P(Y^{n}=y^{n})}{\pi^{n}}
\\\label{Eqn:DecoderErrorEventInTermsOfCardinalityOfConditionalTypicalSet}
&\leq& \sum_{y^{n}\in
T_{\frac{\delta}{2}}} \!\!\! \frac{\pi^{k+l}|T_{\delta}(p_{V|Y}|y^{n})|P(Y^{n}=y^{n})}{\pi^{n}},
\end{eqnarray}
where, the two equalities in (\ref{Eqn:DecodingErrorProbabilityExpanded1}) follow from
(\ref{Eqn:IndependenceOfReceivedVectorAndACompetingCodewordInADifferentBin_2}). Lemma \ref{Lem:BoundsOnSizeOfConditionalTypicalSet} guarantees existence of $N_{5}(\eta) \in \naturals$ such that for all $n \geq N_{5}(\eta)$ and $y^{n}
\in T_{\frac{\delta}{2}}(p_{Y})$, $|T_{\delta}(V|y^{n})| \leq \exp \{ n(H(V|Y)+\frac{3\delta}{2})\}$. Substituting this upper bound in (\ref{Eqn:DecoderErrorEventInTermsOfCardinalityOfConditionalTypicalSet}), we conclude
\begin{equation}
\label{Eqn:UpperboundOnDecodingErrorProbability}
P((\epsilon_{1}\cup\epsilon_{2}\cup\epsilon_{3})^{c}\cap\epsilon_{4}) \leq \exp \left\{ 
 -n \log \pi \left( 1- \frac{H(V|Y)}{\log \pi} - \frac{3\delta}{2 \log \pi} -
\frac{k+l}{n}\right) \right\}\end{equation}
for all $n \geq N_{5}(\eta)$.
\section{An upper bound on
$P(\epsilon_{5})$}
\label{Sec:AnUpperBoundOnProbabilityofEpsilon4ForMAC-DSTx}
In this appendix, we derive an upper bound on $P(\epsilon_{5})$. As is typical in proofs of channel coding theorems, this step involves establishing statistical independence of cosets $C_{j}(M_{j}^{l_{j}}):j=1,2$ corresponding to the message pair and any codeword $V^{n}(\hata^{k},\hatm^{l})$ in a competing coset. We establish this in lemma \ref{Lem:CosetsCorrespondingToTheMessagesAreIndependentOfCodewordInADifferentSumCoset}. We begin with the necessary spadework. Throughout this appendix, we employ the notation introduced in proof of theorem \ref{Thm:AchievableRateRegionUsingNestedCosetCodes}.
\begin{lemma}
 \label{Lem:StatisticalIndependenceOfCosetLeaders}
If $m^{l} \neq \hatm^{l}$, then for any triple $\nu_{1},\nu_{2},\hatnu \in \setV^{n}$,
\begin{eqnarray}
 \label{Eqn:StatisticalIndependenceOfCosetLeaders}
P\left(\substack{V_{j}^{n}(0^{k_{j}},m_{j}^{l_{j}})=\nu_{j}^{n}:j=1,2,\\V^{n}(0^{k},\hatm^{l}
)=\hatnu^{n}}\right)=P\left( V_{j}^{n}(0^{k_{j}},m_{j}^{l_{j}})=\nu_{j}^{n}:j=1,2\right)P\left( V^{n}(0^{k},\hatm^{l}
)=\hatnu^{n} \right)\nonumber\end{eqnarray}
\end{lemma}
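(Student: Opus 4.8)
The plan is to establish the claimed identity by a direct counting argument over the code ensemble, exactly in the spirit of Lemma~\ref{Lem:CodewordsOfRandomLinearCodeUniformlyDistributedAndPairwiseIndependent} and Lemma~\ref{Lem:IndependenceOfACosetAndACodewordInADifferentCoset}. First I would specialize the three codeword expressions to the zero coset-leader. Using the decoder's parameters $G_{I}=G_{I_{2}}$, $G_{O/I}^{T}=[G_{O_{1}/I_{1}}^{T}~~G_{O_{2}/I_{2}}^{T}]$, $B^{n}=B_{1}^{n}\oplus B_{2}^{n}$, and writing $\hatm^{l}=(\hatm_{1}^{l_{1}},\hatm_{2}^{l_{2}})$, one gets $V_{j}^{n}(0^{k_{j}},m_{j}^{l_{j}})=m_{j}^{l_{j}}G_{O_{j}/I_{j}}\oplus B_{j}^{n}$ for $j=1,2$ and $V^{n}(0^{k},\hatm^{l})=\hatm_{1}^{l_{1}}G_{O_{1}/I_{1}}\oplus\hatm_{2}^{l_{2}}G_{O_{2}/I_{2}}\oplus B_{1}^{n}\oplus B_{2}^{n}$. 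Note that $G_{I_{1}},G_{I_{2}}$ do not appear at all, so the shared-rows constraint (\ref{Eqn:Identicalk1RowsOfgI1AndgI2}) is irrelevant here; the only randomness is $(G_{O_{1}/I_{1}},G_{O_{2}/I_{2}},B_{1}^{n},B_{2}^{n})$, mutually independent and uniform on $\setV^{l_{1}\times n}\times\setV^{l_{2}\times n}\times\setV^{n}\times\setV^{n}$. The point that makes the lemma nontrivial is precisely that $B_{1}^{n}$ (and $B_{2}^{n}$) is shared between $V_{j}^{n}(0^{k_{j}},\cdot)$ and $V^{n}(0^{k},\cdot)$, so independence is not immediate and a count is needed.

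Next I would compute the joint probability by counting realizations $(g_{O_{1}/I_{1}},g_{O_{2}/I_{2}},b_{1}^{n},b_{2}^{n})$ satisfying the three equations $m_{j}^{l_{j}}g_{O_{j}/I_{j}}\oplus b_{j}^{n}=\nu_{j}^{n}$ ($j=1,2$) and $\hatm_{1}^{l_{1}}g_{O_{1}/I_{1}}\oplus\hatm_{2}^{l_{2}}g_{O_{2}/I_{2}}\oplus b_{1}^{n}\oplus b_{2}^{n}=\hatnu^{n}$. The first two equations determine $b_{1}^{n},b_{2}^{n}$ uniquely in terms of $g_{O_{1}/I_{1}},g_{O_{2}/I_{2}}$; substituting them into the third reduces the system to the single linear constraint $(\hatm_{1}^{l_{1}}\ominus m_{1}^{l_{1}})g_{O_{1}/I_{1}}\oplus(\hatm_{2}^{l_{2}}\ominus m_{2}^{l_{2}})g_{O_{2}/I_{2}}=\hatnu^{n}\ominus\nu_{1}^{n}\ominus\nu_{2}^{n}$. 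Since $m^{l}\neq\hatm^{l}$, at least one of $\hatm_{1}^{l_{1}}\ominus m_{1}^{l_{1}}$, $\hatm_{2}^{l_{2}}\ominus m_{2}^{l_{2}}$ is nonzero, say the former (the other case being symmetric); it therefore has a nonzero coordinate, so once $g_{O_{2}/I_{2}}$ and all rows of $g_{O_{1}/I_{1}}$ except the one indexed by that coordinate are chosen arbitrarily, the remaining row is uniquely forced. Hence the constraint has $\pi^{(l_{1}-1)n}\pi^{l_{2}n}$ solutions; dividing by the sample-space size $\pi^{(l_{1}+l_{2}+2)n}$ shows the left-hand side of the claimed identity equals $\pi^{-3n}$.

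Finally I would identify the two factors on the right-hand side and check they multiply to $\pi^{-3n}$. By Lemma~\ref{Lem:CodewordsOfRandomLinearCodeUniformlyDistributedAndPairwiseIndependent}(i) each of $V_{1}^{n}(0^{k_{1}},m_{1}^{l_{1}})$ and $V_{2}^{n}(0^{k_{2}},m_{2}^{l_{2}})$ is uniform on $\setV^{n}$, and since they are functions of the disjoint independent blocks $(G_{O_{1}/I_{1}},B_{1}^{n})$ and $(G_{O_{2}/I_{2}},B_{2}^{n})$ they are independent, so $P(V_{1}^{n}(0^{k_{1}},m_{1}^{l_{1}})=\nu_{1}^{n},V_{2}^{n}(0^{k_{2}},m_{2}^{l_{2}})=\nu_{2}^{n})=\pi^{-2n}$; and applying Lemma~\ref{Lem:CodewordsOfRandomLinearCodeUniformlyDistributedAndPairwiseIndependent}(i) to the decoder's random nested coset code $\Lambda_{O}/\Lambda_{I}=(n,k,l,G_{I},G_{O/I},B^{n})$ — whose generator matrices and bias $B_{1}^{n}\oplus B_{2}^{n}$ are mutually independent and uniform, since the XOR of independent uniform vectors is again uniform and independent of the rest — gives $P(V^{n}(0^{k},\hatm^{l})=\hatnu^{n})=\pi^{-n}$. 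The product equals $\pi^{-3n}$, matching the joint probability, which proves the lemma. I expect the only delicate step to be the linear-algebra count in the second paragraph: one must verify that fixing all but one row of $g_{O_{1}/I_{1}}$ (the one hitting a nonzero coordinate of $\hatm_{1}^{l_{1}}\ominus m_{1}^{l_{1}}$) leaves exactly one admissible value of that row, which is where the field structure of $\setV$ and the hypothesis $m^{l}\neq\hatm^{l}$ are used; everything else is bookkeeping analogous to the proofs of Theorem~\ref{Thm:AchievableRateRegionUsingNestedCosetCodes} and Lemma~\ref{Lem:IndependenceOfACosetAndACodewordInADifferentCoset}.
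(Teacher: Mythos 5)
Your proposal is correct and follows essentially the same route as the paper: a direct counting argument over $(G_{O/I},B_{1}^{n},B_{2}^{n})$ showing the joint probability is $\pi^{-3n}$, matched against the product $\pi^{-2n}\cdot\pi^{-n}$ on the right-hand side. The only cosmetic difference is that you eliminate $b_{1}^{n},b_{2}^{n}$ first and then count solutions of the reduced generator-matrix constraint, whereas the paper fixes all but one row of $g_{O/I}$ and determines $(g_{O/I,t},b_{1}^{n},b_{2}^{n})$ jointly; the counts agree.
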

\begin{proof}
 By definition of $V_{j}(0_{k_{j}},m_{j}^{l_{j}}):j=1,2$ and $V(0^{k},m^{l})$,
\begin{eqnarray}
P\left(\substack{V_{j}^{n}(0^{k_{j}},m_{j}^{l_{j}})=\nu_{j}^{n}:j=1,2,\\V^{n}(0^{k},\hatm^{l}
)=\hatnu^{n}}\right)
%=========================
&=&P\left( \substack{ \left[m_{1}^{l_{1}}~0^{l_{2}} \right]G_{O/I}\oplus B_{1}^{n} = \nu_{1}^{n},\left[0^{l_{1}}~m_{2}^{l_{2}} \right]G_{O/I}\oplus B_{2}^{n} =\nu_{2}^{n} \\ \left[ \hatm_{1}^{l_{1}} ~ \hatm_{2}^{l_{2}}\right]G_{O/I} \oplus B_{1}^{n}\oplus B_{2}^{n} = \hatnu^{n}} \right)\nonumber\\
%=========================
\label{Eqn:CosetLeadersOfEncoderAndDecoderNCC}
&=&P\left( \substack{ \left[m_{1}^{l_{1}}~0^{l_{2}} \right]G_{O/I}\oplus B_{1}^{n} =
\nu_{1}^{n},\left[0^{l_{1}}~m_{2}^{l_{2}} \right]G_{O/I}\oplus B_{2}^{n}
=\nu_{2}^{n} \\ \left[ \tilde{m}_{1}^{l_{1}} ~ \tilde{m}_{2}^{l_{2}}\right]G_{O/I}
=\hatnu^{n}}
\right)
\end{eqnarray}
where $\tilde{m}_{j}^{l_{j}}=\hatm_{j}^{l_{j}}-m_{j}^{l_{j}}$. We now prove, using a counting argument similar to that employed in proof of lemma
\ref{Lem:CodewordsOfRandomLinearCodeUniformlyDistributedAndPairwiseIndependent}, the term
on right hand side of (\ref{Eqn:CosetLeadersOfEncoderAndDecoderNCC})
is $\frac{1}{\pi^{3n}}$. Since $\hatm^{l} \neq m^{l}$, there exists $t \in [l]$ such that
$\hatm_{t} \neq m_{t}$. Given any $(l-1)$ vectors $g_{O/I,j} \in \setV^{n}:j \in
[l]\setminus\{ t \}$, there exists a unique triple of vectors
$(g_{O/I,t},b_{1}^{n},b_{2}^{n}) \in \setV^{n} \times \setV^{n} \times \setV^{n}$
such that $\left[m_{1}^{l_{1}}~0^{l_{2}} \right]g_{O/I}\oplus b_{1}^{n} =
\nu_{1}^{n},\left[0^{l_{1}}~m_{2}^{l_{2}} \right]g_{O/I}\oplus b_{2}^{n}
=\nu_{2}^{n}$ and $\left[ \tilde{m}_{1}^{l_{1}} ~
 \tilde{m}_{2}^{l_{2}}\right]g_{O/I}
=\hatnu^{n}$, where row $j$ of $g_{O/I}$ is $g_{O/I,j}$. Hence
\begin{equation}
 \label{Eqn:CountingArgumentForMAC-DSTxIndependence}
\left|\left\{ (g_{O/I},b_{1}^{n},b_{2}^{n}) \in \setV^{k\times n} \times \setV^{n} \times \setV^{n}: \substack{\left[m_{1}^{l_{1}}~0^{l_{2}} \right]g_{O/I}\oplus B_{1}^{n} =
\nu_{1},\left[0^{l_{1}}~m_{2}^{l_{2}} \right]g_{O/I}\oplus B_{2}^{n}
=\nu_{2} \\ \left[ \tilde{m}_{1}^{l_{1}} ~ \tilde{m}_{2}^{l_{2}}\right]g_{O/I}
=\hatnu^{n}} \right\} \right| =\pi^{(l-1)n}.\nonumber
\end{equation}
The mutual independence and uniform distribution of $G_{O/I},B_{1},B_{2}^{n}$ implies the term on RHS of (\ref{Eqn:CosetLeadersOfEncoderAndDecoderNCC}) is indeed $\frac{1}{\pi^{3n}}$. It remains to prove
\begin{equation}
P\left( V_{j}^{n}(0^{k_{j}},m_{j}^{l_{j}})=\nu_{j}^{n}:j=1,2\right)P\left( V^{n}(0^{k},\hatm^{l} )=\hatnu^{n} \right)=\frac{1}{\pi^{3n}}.\nonumber
\end{equation}
It follows from lemma \ref{Lem:CodewordsOfRandomLinearCodeUniformlyDistributedAndPairwiseIndependent} that $P\left( V^{n}(0^{k},\hatm^{l} )=\hatnu^{n} \right)=\frac{1}{\pi^{n}}$. Using the definition of $V^{n}(0^{k},\hatm^{l} )$, we only need to prove
\begin{equation}
P\left( \substack{\left[m_{1}^{l_{1}}~0^{l_{2}} \right]G_{O/I}\oplus B_{1}^{n} =
\nu_{1},\\\left[0^{l_{1}}~m_{2}^{l_{2}} \right]G_{O/I}\oplus B_{2}^{n}
=\nu_{2}} \right)=\frac{1}{\pi^{2n}}.\nonumber
\end{equation}
This follows again from a counting argument. For every matrix $g_{O/I} \in \setV^{l \times n}$, there exists a unique pair of vectors $b^{n}_{1}, b_{2}^{n} \in \setV^{n}$ such that $\left[m_{1}^{l_{1}}~0^{l_{2}} \right]G_{O/I}\oplus B_{1}^{n} =
\nu_{1}$, and $\left[0^{l_{1}}~m_{2}^{l_{2}} \right]G_{O/I}\oplus B_{2}^{n}
=\nu_{2}$ thus yielding
\begin{equation}
 \left|\left\{ (g_{O/I},b_{1}^{n},b_{2}^{n}) \in \setV^{k\times n} \times \setV^{n} \times \setV^{n}: \substack{\left[m_{1}^{l_{1}}~0^{l_{2}} \right]G_{O/I}\oplus B_{1}^{n} =
\nu_{1},\\\left[0^{l_{1}}~m_{2}^{l_{2}} \right]G_{O/I}\oplus B_{2}^{n}
=\nu_{2}} \right\} \right|=\pi^{ln},
\end{equation}
and the proof is completed using the mutual independence and uniform distribution of $G_{O/I},B_{1}^{n},B_{2}^{n}$.
\end{proof}

\begin{lemma}
 \label{Lem:CosetsCorrespondingToTheMessagesAreIndependentOfCodewordInADifferentSumCoset}
For any ${\hatm}^{l} \neq m^{l}$, and any $\hata^{k} \in \setV^{k}$, the pair of cosets $C_{j}(m^{l_{j}}_{j}):j=1,2$ is statistically independent of
$V^{n}(\hata^{k},{\hatm}^{l})$.
\end{lemma}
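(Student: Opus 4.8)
The plan is to follow the template of the proof of Lemma~\ref{Lem:IndependenceOfACosetAndACodewordInADifferentCoset} (and Remark~\ref{Rem:IndependenceOfABinAndAVectorFromAnotherBin}), feeding in Lemma~\ref{Lem:StatisticalIndependenceOfCosetLeaders} at the crucial step, while carefully tracking that $G_{I_{1}}$ and $G_{I}=G_{I_{2}}$ share their first $k_{1}$ rows. Write $G_{\mathrm{sh}}\define G_{I_{1}}$ for the common $k_{1}\times n$ block and let $G_{\mathrm{ex}}$ denote the remaining $k_{+}\times n$ rows of $G_{I_{2}}$, so that $G_{I_{2}}=\left[\,G_{\mathrm{sh}}^{T}\ G_{\mathrm{ex}}^{T}\,\right]^{T}$ and the tuple $(G_{\mathrm{sh}},G_{\mathrm{ex}},G_{O_{1}/I_{1}},G_{O_{2}/I_{2}},B_{1}^{n},B_{2}^{n})$ is mutually independent with each entry uniform on its range space. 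First I would fix target vectors $v_{1,a_{1}}^{n}$ for $a_{1}^{k_{1}}\in\setV^{k_{1}}$, $v_{2,a_{2}}^{n}$ for $a_{2}^{k_{2}}\in\setV^{k_{2}}$, and $\hatv^{n}$, and dispose of the degenerate cases: unless the assignments $a_{j}^{k_{j}}\mapsto v_{j,a_{j}}^{n}$ are consistent with cosets of $(n,k_{j})$ linear codes whose inner blocks agree on the shared $k_{1}$ rows, we already have $P(C_{1}(m_{1}^{l_{1}})=(v_{1,a_{1}}^{n}),C_{2}(m_{2}^{l_{2}})=(v_{2,a_{2}}^{n}))=0$ and the claimed identity holds trivially. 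So assume consistency, and let $g_{\mathrm{sh}}^{\ast},g_{\mathrm{ex}}^{\ast}$ be the (then uniquely determined) inner blocks and $v_{1,0}^{n},v_{2,0}^{n}$ the corresponding coset leaders.

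Under consistency, a coset leader together with the inner block determines the whole coset, so $\{C_{1}(m_{1}^{l_{1}})=(v_{1,a_{1}}^{n}),\,C_{2}(m_{2}^{l_{2}})=(v_{2,a_{2}}^{n})\}$ equals $\{G_{\mathrm{sh}}=g_{\mathrm{sh}}^{\ast}\}\cap\{G_{\mathrm{ex}}=g_{\mathrm{ex}}^{\ast}\}\cap\{V_{1}^{n}(0^{k_{1}},m_{1}^{l_{1}})=v_{1,0}^{n}\}\cap\{V_{2}^{n}(0^{k_{2}},m_{2}^{l_{2}})=v_{2,0}^{n}\}$. Moreover $V^{n}(\hata^{k},\hatm^{l})=\hata^{k}G_{I_{2}}\oplus V^{n}(0^{k},\hatm^{l})$, so on the event $\{G_{\mathrm{sh}}=g_{\mathrm{sh}}^{\ast},G_{\mathrm{ex}}=g_{\mathrm{ex}}^{\ast}\}$ the condition $\{V^{n}(\hata^{k},\hatm^{l})=\hatv^{n}\}$ becomes $\{V^{n}(0^{k},\hatm^{l})=\hatv^{n}\ominus w^{\ast}\}$ with $w^{\ast}\define\hata^{k}g_{I_{2}}^{\ast}$ a fixed vector. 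Hence the probability on the left of the claimed identity equals
\[
P(G_{\mathrm{sh}}=g_{\mathrm{sh}}^{\ast})\,P(G_{\mathrm{ex}}=g_{\mathrm{ex}}^{\ast})\; P\!\left(V_{1}^{n}(0^{k_{1}},m_{1}^{l_{1}})=v_{1,0}^{n},\ V_{2}^{n}(0^{k_{2}},m_{2}^{l_{2}})=v_{2,0}^{n},\ V^{n}(0^{k},\hatm^{l})=\hatv^{n}\ominus w^{\ast}\right),
\]
because $(G_{\mathrm{sh}},G_{\mathrm{ex}})$ is independent of $(G_{O_{1}/I_{1}},G_{O_{2}/I_{2}},B_{1}^{n},B_{2}^{n})$ and the three coset leaders above are functions of the latter tuple alone.

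Since $\hatm^{l}\neq m^{l}$, Lemma~\ref{Lem:StatisticalIndependenceOfCosetLeaders} applies and factors the last probability as $P(V_{1}^{n}(0^{k_{1}},m_{1}^{l_{1}})=v_{1,0}^{n},V_{2}^{n}(0^{k_{2}},m_{2}^{l_{2}})=v_{2,0}^{n})\cdot P(V^{n}(0^{k},\hatm^{l})=\hatv^{n}\ominus w^{\ast})$. Then I would recognise the product $P(G_{\mathrm{sh}}=g_{\mathrm{sh}}^{\ast})P(G_{\mathrm{ex}}=g_{\mathrm{ex}}^{\ast})P(V_{1}^{n}(0^{k_{1}},m_{1}^{l_{1}})=v_{1,0}^{n},V_{2}^{n}(0^{k_{2}},m_{2}^{l_{2}})=v_{2,0}^{n})$ as exactly $P(C_{1}(m_{1}^{l_{1}})=(v_{1,a_{1}}^{n}),C_{2}(m_{2}^{l_{2}})=(v_{2,a_{2}}^{n}))$ (same independence), and note that $P(V^{n}(0^{k},\hatm^{l})=\hatv^{n}\ominus w^{\ast})=\pi^{-n}=P(V^{n}(\hata^{k},\hatm^{l})=\hatv^{n})$ by Lemma~\ref{Lem:CodewordsOfRandomLinearCodeUniformlyDistributedAndPairwiseIndependent}(i) applied to the decoder's nested coset code. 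Assembling these yields the desired factorization. The step I expect to be the main obstacle is precisely the bookkeeping around $G_{\mathrm{sh}}$: it enters $C_{1}$, $C_{2}$ and the deterministic shift relating $V^{n}(\hata^{k},\hatm^{l})$ to $V^{n}(0^{k},\hatm^{l})$ simultaneously, so the reduction to Lemma~\ref{Lem:StatisticalIndependenceOfCosetLeaders} must proceed by first pinning down $(G_{\mathrm{sh}},G_{\mathrm{ex}})$ on the coset event rather than by a naive independence split; once the generator randomness is separated from the leader randomness, the coset-leader lemma finishes the argument and the consistency/degeneracy cases are routine.
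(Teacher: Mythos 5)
Your proposal is correct and follows essentially the same route as the paper's proof: dispose of the inconsistent (zero-probability) cases, reduce the joint coset event to a condition pinning down $G_{I_{2}}$ together with the coset leaders, exploit the independence of $G_{I_{2}}$ from $(G_{O/I},B_{1}^{n},B_{2}^{n})$ to isolate the leader randomness, and then invoke Lemma~\ref{Lem:StatisticalIndependenceOfCosetLeaders} before reassembling via uniformity of the codewords. Your explicit split of $G_{I_{2}}$ into the shared block $G_{\mathrm{sh}}=G_{I_{1}}$ and the extra rows $G_{\mathrm{ex}}$ is only a notational variant of the paper's handling of the events $a_{2}^{k_{2}}G_{I_{2}}=\nu_{2}(a^{k_{2}})-\nu_{2}(0^{k_{2}})$, and correctly addresses the same bookkeeping issue.
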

\begin{proof}
For $j=1,2$, let $\nu_{j}^{n}(a_{j}^{k_{j}}) \in \setV^{n}$ for each $a_{j}^{k_{j}} \in \setV^{k_{j}}$, and $\hatnu^{n} \in \setV^{n}$. We need to prove
 \begin{eqnarray}
 P\left( \begin{array}{l}C_{1}^{n}(m_{1}^{l_{1}})=(\nu_{1}(a_{1}^{k_{1}}):a_{1}^{k_{1}} \in
\setV^{k_{1}})\\C_{2}^{n}(m_{2}^{l_{2}})=(\nu_{2}(a_{2}^{k_{2}}):a_{2}^{k_{2}} \in
\setV^{k_{2}})\\V^{n}(\hata^{k},\hatm^{l}
)=\hatnu^{n}\end{array} \right)
= P\left( \begin{array}{l} C_{1}^{n}(m_{1}^{l_{1}})=(\nu_{1}(a_{1}^{k_{1}}):a_{1}^{k_{1}} \in
\setV^{k_{1}})\\C_{2}^{n}(m_{2}^{l_{2}})=(\nu_{2}(a_{2}^{k_{2}}):a_{2}^{k_{2}} \in
\setV^{k_{2}}) \end{array}\right)P(\substack{V^{n}(\hata^{k},\hatm^{l}
)\\=\hatnu^{n}})\nonumber
 \end{eqnarray}
for every choice of $\nu_{j}(a_{j}^{k_{j}}) \in \setV^{n}:a_{j}^{k_{j}} \in
\setV^{k_{j}},j=1,2$ and $\hatnu^{n}\in \setV^{n}$.

If (i) for some $j=1$ or $j=2$, $( \nu_{j}(a_{j}^{k_{j}}\oplus\tilde{a}_{j}^{k_{j}}) - \nu_{j}({0^{k_{j}}}) )
\neq (\nu_{j}(a_{j}^{k_{j}}) - \nu_{j}({0^{k_{j}}})) \oplus (\nu_{j}(\tilde{a}_{j}^{k_{j}}) - \nu_{j}({0^{k_{j}}}))$ for any pair $a_{j}^{k_{j}}$,
$\tilde{a_{j}}^{k_{j}} \in \setV^{k_{j}}$, or (ii) $\nu_{1}(a^{k_{1}})-v_{1}(0^{k_{1}}) \neq \nu_{2}(a_{1}^{k_{1}}0^{k_{+}})-v_{2}(0^{k_{2}})$ for some $a_{1}^{k_{1}} \in \setV^{k_{1}}$,
then LHS and first term of RHS are zero and equality holds. Otherwise,
\begin{eqnarray}
 \lefteqn{P\left(\substack{ C_{j}^{n}(m_{j}^{l_{j}})=(\nu_{j}(a_{j}^{k_{j}}):a_{j}^{k_{j}} \in \setV^{k_{j}}):j=1,2,V^{n}(\hata^{k},\hatm^{l} )=\hatnu^{n}
}\right)}\nonumber\\
% ==========
\label{Eqn:ConsequenceOfStatisticalIndependenceOfCosetLeaders}
&=& P\left(\substack{a_{2}^{k_{2}}G_{I_{2}}=\nu_{2}(a^{k_{2}})-\nu_{2}(0^{k_{2}}):a_{2}^{k_{2}} \in \setV^{k_{2}},
V_{j}^{n}(0^{k_{j}},m_{j}^{l_{j}})=\nu_{j}(0^{k_{j}}):j=1,2,\\V^{n}(0^{k},\hatm^{l}
)=\hatnu^{n}-(\nu_{2}(\hata^{k})-\nu_{2}(0^{k_{2}}))}\right)\\
% ==========
\label{Eqn:GI2IsIndependentOfTheRest}
&=&P\left(\substack{a_{2}^{k_{2}}G_{I_{2}}=\nu_{2}(a^{k_{2}})-\\\nu_{2}(0^{k_{2}}):a_{2}^{k_{2}} \in \setV^{k_{2}}}\right)P\left(\substack{V_{j}^{n}(0^{k_{j}},m_{j}^{l_{j}})=\nu_{j}(0^{k_{j}}):j=1,2,\\V^{n}(0^{k},\hatm^{l}
)=\hatnu^{n}-(\nu_{2}(\hata^{k})-\nu_{2}(0^{k_{2}}))}\right)\\
\label{Eqn:DefinitionOfCosets}
&=&P\left(\substack{a_{2}^{k_{2}}G_{I_{2}}=\nu_{2}(a^{k_{2}})-\\\nu_{2}(0^{k_{2}}):a_{2}^{k_{2}} \in \setV^{k_{2}}}\right)P\left(\substack{ \left[m_{1}^{l_{1}}~0^{l_{2}} \right]G_{O/I}\oplus B_{1}^{n} =
\nu_{1}(0^{k_{1}}),\\\left[0^{l_{1}}~m_{2}^{l_{2}} \right]G_{O/I}\oplus B_{2}^{n}
=\nu_{2}(0^{k_{2}}) }\right)P\left( V^{n}(\hata^{k},\hatm^{l})=\hatnu^{n} \right)\\
\label{Eqn:GI2IsIndependentOfTheRestAgain}
&=& P\left( \substack{a_{2}^{k_{2}}G_{I_{2}}=\nu_{2}(a^{k_{2}})-\nu_{2}(0^{k_{2}}):a_{2}^{k_{2}} \in \setV^{k_{2}}\\\left[m_{1}^{l_{1}}~0^{l_{2}} \right]G_{O/I}\oplus B_{1}^{n} =
\nu_{1}(0^{k_{1}}),\left[0^{l_{1}}~m_{2}^{l_{2}} \right]G_{O/I}\oplus B_{2}^{n}
=\nu_{2}(0^{k_{2}})} \right)P\left( V^{n}(\hata^{k},\hatm^{l})=\hatnu^{n} \right)\\
\label{Eqn:DefinitionOfCosetsAgain}&=&P\left( \substack{C_{j}^{n}(m_{j}^{l_{j}})=(\nu_{j}(a_{j}^{k_{j}}):a_{j}^{k_{j}} \in
\setV^{k_{j}}):j=1,2} \right)P\left( V^{n}(\hata^{k},\hatm^{l})=\hatnu^{n} \right)
\end{eqnarray}
where i) (\ref{Eqn:DefinitionOfCosets}) and (\ref{Eqn:DefinitionOfCosetsAgain}) follow from definition of cosets $C_{j}(m_{j}^{l_{j}})$, (ii) (\ref{Eqn:GI2IsIndependentOfTheRest}) and (\ref{Eqn:GI2IsIndependentOfTheRestAgain}) follow from independence of $G_{I_{2}}$ and the collection $(G_{O/I},B_{1}^{n},B_{2}^{n})$ and (iii) (\ref{Eqn:ConsequenceOfStatisticalIndependenceOfCosetLeaders}) follows from lemma \ref{Lem:StatisticalIndependenceOfCosetLeaders}.
\end{proof}
We emphasize consequence of lemma
\ref{Lem:CosetsCorrespondingToTheMessagesAreIndependentOfCodewordInADifferentSumCoset} in
the following remark.
\begin{remark}
\label{Rem:CosetsCorrespondingToTheMessagesAreIndependentOfCodewordInADifferentSumCoset}
If $m^{l} \neq \hatm^{l}$, then conditioned on event $\left\{ M^{l}=m^{l} \right\}$,
received vector $Y^{n}$ is statistically independent of $V^{n}(\hata^{k},\hatm^{l})$ for
any $\hata^{k} \in \setV^{k}$. We establish truth of this statement in the sequel. Let
$\mathcal{C}_{j}$ denote the set of all ordered $\pi^{k_{j}}$-tuples of vectors in
$\setV^{n}$. Observe that
\begin{eqnarray}
 \lefteqn{P\left(
\substack{M^{l}=m^{l},Y^{n}=y^{n},\\V^{n}(\hata^{k},\hatm^{l})=\hatnu^{n} }
\right)=\sum_{C_{1} \in \mathcal{C}_{1} }
\sum_{C_{2} \in \mathcal{C}_{2}}
\sum_{\bolds^{n} \in \setS^{n}}
P\left(\substack{
M^{l}=m^{l},C_{j}(m_{j}^{l_{j}})=C_{j}:j=1,2,S^{n}=s^{n}\\V^{n}(\hata^{k},\hatm^{l}
)=\hatv^{n},Y^{n}=y^{n}}\right) }\nonumber\\
\label{Eqn:UsingStatisticalIndependenceOfCosetsCorrToMessageAndCompetingCodeword}
&=& \sum_{C_{1} \in \mathcal{C}_{1} }
\sum_{C_{2} \in \mathcal{C}_{2}}
\sum_{\bolds^{n} \in \setS^{n}} \!\!\!P\left(  \substack{M^{l}=m^{l}\\S^{n}=s^{n}}
\right)P\left( \substack{ C_{1}(m_{1}^{l_{1}})=C_{1}\\C_{2}(m_{2}^{l_{2}})=C_{2}}
\right)P\left(\substack{V^{n}(\hata^{k},\hatm^{l}
)=\hatv^{n}} \right)P\left(  Y^{n}=y^{n} |
\substack{C_{j}(m_{j}^{l_{j}})=C_{j}:j=1,2\\S^{n}=s^{n},M^{l}=m^{l} } \right)\\
&=&\sum_{C_{1} \in \mathcal{C}_{1} }
\sum_{C_{2} \in \mathcal{C}_{2}}
\sum_{\bolds^{n} \in \setS^{n}}
P\left(\substack{
M^{l}=m^{l},Y^{n}=y^{n},S^{n}=s^{n}\\C_{j}(m_{j}^{l_{j}})=C_{j}:j=1,2}\right)
P\left(\substack{V^{n}(\hata^{k},\hatm^{l}
)=\hatv^{n}} \right)
\nonumber\\
&=&P\left(M^{l}=m^{l},Y^{n}=y^{n}\right) P\left(
V^{n}(\hata^{k},\hatm^{l})=\hatnu^{n}\right)\nonumber
\end{eqnarray}
where (\ref{Eqn:UsingStatisticalIndependenceOfCosetsCorrToMessageAndCompetingCodeword})
follows from (i) independence of random objects that characterize codebook and
$(S^{n},M^{l})$, (ii) lemma
\ref{Lem:CosetsCorrespondingToTheMessagesAreIndependentOfCodewordInADifferentSumCoset} and
(iii) statistical independence of the inputs
$X_{j}(M_{j}^{l_{j}},S_{j}^{n}):j=1,2$ to the channel and the codeword
$V^{n}(\hata^{k},\hatm^{l})$ conditioned on the specific realization of cosets
$(C_{j}(M_{j}^{l_{j}}):j=1,2)$ and the event $\left\{ M^{l}=m^{l} \right\}$. Moreover,
since $P(V^{n}(\hata^{k},\hatm^{l})=\hatnu^{n})=\frac{1}{\pi^{n}}$, we have $P(
M^{l}=m^{l},Y^{n}=y^{n},V^{n}(\hata^{k},\hatm^{l})=\hatnu^{n} 
)=\frac{1}{\pi^{n}}P(M^{l}=m^{l},Y^{n}=y^{n})$.
\end{remark}
We are now equipped to derive an upper bound on $P(\epsilon_{5})$. Observe that
\begin{eqnarray}
 \label{Eqn:UpperboundingDecodingErrorProbability}
\lefteqn{ P(\epsilon_{5}) \leq P\left(\underset{\hata^{k} \in
\setV^{k}}{\bigcup} \underset{\substack{m^{l}, {\hatm^{l}}\\m^{l}
\neq{\hatm^{l}}}}{\bigcup}\left\{ \substack{ 
(V^{n}(\hata^{k},\hatm^{l}),Y^{n}) \in 
T_{\eta_{5}(\eta)}(p_{V_{1} \oplus
V_{2}, Y})\\M^{l}=m^{l} }  \right\}\right)}\nonumber\\
&\leq& \underset{\hata^{k} \in
\setV^{k}}{\sum} \underset{\substack{m^{l}, {\hatm^{l}}\\m^{l}
\neq{\hatm^{l}}}}{\sum} \sum_{\substack{   y^{n}\\\in
T_{\eta_{5}(\eta)} (Y)  }}  \sum_{\substack{v^{n} \in\\ \small
T_{\eta_{5}(\eta)}(V_{1}\oplus V_{2}|y^{n} ) }}
P\left(\substack{V^{n}(a^{k},{\hatm^{l}})=v^{n}\\M^{l}=m^{l},Y^{n}=y^{n}}
\right)\nonumber\\
\label{Eqn:IndependenceFollowingFromAboveRemark}
&\leq& \underset{\hata^{k} \in
\setV^{k}}{\sum} \underset{\substack{m^{l}, {\hatm^{l}}\\m^{l}
\neq{\hatm^{l}}}}{\sum} \sum_{\substack{y^{n}\\\in
T_{\eta_{5}(\eta)} (Y)}   }  \sum_{\substack{v^{n} \in\\ \small
T_{\eta_{5}(\eta)}(V_{1}\oplus V_{2}|y^{n} ) }}
P\left(V^{n}(a^{k},\hatm^{l})=v^{n}\right)P(M^{l}=m^{l},Y^{n}=y^{n})\nonumber\\
\label{Eqn:IndependenceFollowingFromAboveRemark2}
&\leq& \underset{\hata^{k} \in
\setV^{k}}{\sum} \underset{\hatm^{l} \in \setV^{l}}{\sum} \sum_{\substack{y^{n}\\\in
T_{\eta_{5}(\eta)} (Y)}   }  \sum_{\substack{v^{n} \in\\ \small
T_{\eta_{5}(\eta)}(V_{1}\oplus V_{2}|y^{n} ) }}
\frac{P(Y^{n}=y^{n})}{\pi^{n}}\nonumber\\
\label{Eqn:UpperboundOnDecodingErrorProbability}
&\leq&\!\!\!\!\!\!\! \sum_{\substack{y^{n}\\\in
T_{\eta_{5}(\eta)} (Y)}   }\!\!\!\!\!  \frac{\pi^{k+l}|T_{2\eta_{5}(\eta)}(V_{1} \oplus
V_{2}|y^{n} )|}{\pi^{n}} \leq
\exp \left\{-n \log \pi \left(1-\frac{H(V_{1}\oplus
V_{2}|Y)+3\eta_{5}(\eta)}{\log \pi}-\frac{k+l}{n}\right)\right\}.
\end{eqnarray}
where (\ref{Eqn:UpperboundOnDecodingErrorProbability}) follows
from the uniform bound of $\exp\left\{n\left(H(V_{1} \oplus V_{2}|Y)+3\eta_{5}(\eta)\right)\right\}$ on $|T_{2\eta_{5}(\eta)}(V_{1} \oplus V_{2}|y^{n} )|$
for any $y^{n}\in
T_{\eta_{5}(\eta)}(Y)$, $n \geq N_{6}(\eta)$ provided by lemma \ref{Lem:BoundsOnSizeOfConditionalTypicalSet} for $n\geq N_{6}(\eta)$. Substituting the upper bound for $\frac{k+l}{n}$ in (\ref{Eqn:MACDSTxUpperBoundOnTheUnionCode}), we have
\begin{equation}
 \label{Eqn:FinalUpperBoundOnProbabilityOfEpsilon5}
P(\epsilon_{5}) \leq \exp \left\{ -n \left( \eta_{2}(\eta)+\eta_{3}(\eta)-3\eta_{5}(\eta)\right) \right\} \text{ for all }n \geq \max\left\{ N_{1}(\eta),N_{6}(\eta) \right\}.
\end{equation}
\section*{Acknowledgment}
The first author thanks (i) Raj Tejas Suryaprakash for lending his expertise with regard
to computing and plotting results for examples
\ref{Ex:DoublyDirtyMACReplacedWithLogicalOr} - \ref{Ex:Blackwell}, (ii)
Aria Sahebi for useful discussions with respect to material presented in section
\ref{Sec:EnlargingAchievableRateRegionUsingCodesOverGroups}, and (iii) Deepanshu Vasal for
general discussions.
\bibliographystyle{../sty/IEEEtran}
{
\bibliography{MultipleAccessWithDistributedStates.bbl}
}
\end{document}